\documentclass[11pt,a4paper]{amsart}

\pdfoutput=1

\usepackage{booktabs}
\usepackage{lscape}

\usepackage{bigints}
\usepackage{framed}
\usepackage{mdframed}
\usepackage{enumerate}
\usepackage[inline]{enumitem}
\usepackage[T1]{fontenc}

\usepackage[nodisplayskipstretch]{setspace}
\usepackage{color}
\usepackage[dvipsnames]{xcolor}
\usepackage{cancel}
\usepackage{soul}

\usepackage{siunitx}
\usepackage{graphicx}
\usepackage{float}
\usepackage{rotating}
\usepackage{subcaption}
\usepackage{overpic}
\usepackage[all]{xy}
\usepackage{tikz}
\usetikzlibrary{arrows,matrix,positioning,calc,automata,patterns}
\usepackage{booktabs}
\usepackage{dcolumn}
\usepackage{multirow}
\usepackage{diagbox}
\usepackage{tabularx}
\usepackage{verbatim}
\usepackage{listings}
\usepackage[ruled,vlined]{algorithm2e}
\usepackage{fancyvrb}
\usepackage{hyperref}
\usepackage{natbib}
\usepackage[normalem]{ulem}

\usepackage{algorithmic}
\newtheorem*{rep@theorem}{\rep@title}
\newcommand{\newreptheorem}[2]{%
\newenvironment{rep#1}[1]{%
 \def\rep@title{#2 \ref{##1}}%
 \begin{rep@theorem}}%
 {\end{rep@theorem}}}
\makeatother

\newtheorem{theorem}{Theorem}[section]

\newtheorem{assumption}{Assumption}[section]
\newtheorem{corollary}{Corollary}[section]

\newreptheorem{theorem}{Theorem}
\newreptheorem{lemma}{Lemma}
\newreptheorem{corollary}{Corollary}
\newreptheorem{example}{Example}
\newreptheorem{remark}{Remark}

\begin{document}

\newcommand{\bfzero} {\boldsymbol{0}}
\newcommand{\bfalpha} {\boldsymbol{\alpha}}
\newcommand{\bfbeta} {\boldsymbol{\beta}}
\newcommand{\bfgamma} {\boldsymbol{\gamma}}
\newcommand{\bfdelta} {\boldsymbol{\delta}}
\newcommand{\bfXi} {\boldsymbol{\xi}}
\newcommand{\bfmu} {\boldsymbol{\mu}}
\newcommand{\bfsigma} {\boldsymbol{\sigma}}
\newcommand{\bfeta} {\boldsymbol{\eta}}
\newcommand{\bfzeta} {\boldsymbol{\zeta}}
\newcommand{\bfvarphi} {\boldsymbol{\varphi}}
\newcommand{\bftau} {\boldsymbol{\tau}}
\newcommand{\bflambda} {\boldsymbol{\lambda}}
\newcommand{\bfLambda} {\boldsymbol{\Lambda}}
\newcommand{\bfpsi} {\boldsymbol{\psi}}
\newcommand{\bfphi} {\boldsymbol{\phi}}
\newcommand{\bfpi} {\boldsymbol{\pi}}
\newcommand{\bfvarpsi} {\boldsymbol{\varpsi}}
\newcommand{\bfnu} {\boldsymbol{\nu}}
\newcommand{\bftheta} {\boldsymbol{\theta}}
\newcommand{\bfTheta} {\boldsymbol{\Theta}}
\newcommand{\bfSigma} {\boldsymbol{\Sigma}}
\newcommand{\bfDelta} {\boldsymbol{\Delta}}
\newcommand{\bfOmega} {\boldsymbol{\Omega}}
\newcommand{\bfUpsilon} {\boldsymbol{\Upsilon}}

\newcommand{\bfR} {\mathbf{R}}
\newcommand{\bfGamma} {\mathbf{\Gamma}}
\newcommand{\bfC} {\mathbf{C}}
\newcommand{\bfP} {\mathbf{P}}
\newcommand{\bfT} {\mathbf{T}}
\newcommand{\bfI} {\mathbf{I}}
\newcommand{\bfZ} {\mathbf{Z}}
\newcommand{\bfY} {\mathbf{Y}}
\newcommand{\bfX} {\mathbf{X}}
\newcommand{\bfV} {\mathbf{V}}
\newcommand{\bfE} {\mathbf{E}}
\newcommand{\bfW} {\mathbf{W}}
\newcommand{\bfD} {\mathbf{D}}
\newcommand{\bfA} {\mathbf{A}}
\newcommand{\bfB} {\mathbf{B}}
\newcommand{\bfK} {\mathbf{K}}
\newcommand{\bfQ} {\mathbf{Q}}
\newcommand{\bfU} {\mathbf{U}}
\newcommand{\bfS} {\mathbf{S}}
\newcommand{\bfPsi} {\mathbf{\Psi}}
\newcommand{\bfPhi} {\mathbf{\Phi}}
\newcommand{\bft} {\mathbf{t}}
\newcommand{\bfs} {\mathbf{s}}
\newcommand{\bfx} {\mathbf{x}}
\newcommand{\bfy} {\mathbf{y}}
\newcommand{\bfz} {\mathbf{z}}
\newcommand{\bfb} {\mathbf{b}}
\newcommand{\bff} {\mathbf{f}}
\newcommand{\bfd} {\mathbf{d}}
\newcommand{\bfepsilon} {\mathbf{\epsilon}}
\newcommand{\B}{\mathsf{B}}

\renewcommand{\Pr}{\mathsf{P}}
\newcommand{\p}{\mathsf{P}}
\newcommand{\E}{\mathsf{E}}
\newcommand{\M}{\mathsf{M}}
\newcommand{\Cov}{\mathsf{Cov}}
\newcommand{\Cor}{\mathsf{Cor}}
\newcommand{\reals}{\mathbb{R}}
\newcommand{\naturals}{\mathbb{N}}
\newcommand{\dd}{\mbox{d}}

\newcommand{\ground}{\mathcal{G}}

\newcommand{\kl}{\mathsf{KL}}
\newcommand{\nbd}{\mathsf{nbd}}
\newcommand{\normal}{\mathsf{N}}
\newcommand{\tst}{\mathsf{T}}
\newcommand{\DP}{\mathsf{DP}}
\newcommand{\Dir}{\mathsf{Dir}}
\newcommand{\Gam}{\mathsf{G}}
\newcommand{\IGam}{\mathsf{IG}}
\newcommand{\bin}{\mathsf{Bin}}
\newcommand{\geo}{\mathsf{Geo}}
\newcommand{\Exp}{\mathsf{Exp}}
\newcommand{\Wis}{\mathsf{Wis}}
\newcommand{\IWis}{\mathsf{IW}}
\newcommand{\Poi}{\mathsf{Poi}}
\newcommand{\bet}{\mathsf{Beta}}
\newcommand{\Uni}{\mathsf{Uni}}
\newcommand{\GWis}{\mathsf{Wis}}
\newcommand{\Mult}{\mathsf{Multinom}}
\newcommand{\SB}{\mathsf{SB}}
\newcommand{\bern}{\mathsf{Bernoulli}}
\newcommand{\CAR}{\mathsf{CAR}}
\newcommand{\MCAR}{\mathsf{MCAR}}
\newcommand{\cone}{\mathsf{P}}
\newcommand{\InvGauss}{\mathsf{InvGauss}}
\newcommand{\indep}{\perp \!\!\! \perp}

\title[Human Activity Space Estimation]{An Object-Oriented Spatial Statistics Approach for Human Activity Space Estimation}
\thanks{CONTACT H. Wu. Email: wuhy0902@uw.edu}

	\author[Wu, Chen and Dobra]{Haoyang Wu\textsuperscript{a},
			Yen-Chi Chen\textsuperscript{a} and Adrian Dobra\textsuperscript{a}}
    \address{\textsuperscript{a}Department of Statistics, University of Washington, Seattle, WA, USA}

\begin{abstract}
Human activity spaces are shaped by individual mobility and the built environment, motivating statistical methods that integrate GPS observations with GIS representations of places and routes. We propose a novel methodology to estimate activity spaces in built environments from GPS data within the Object Oriented Spatial Statistics framework. We characterize daily mobility through the distribution of time across spatial polygons and road segments, aiming to capture entity-specific time-use fractions and level-$\gamma$ activity spaces. We develop a time-weighted estimator to handle irregularly sampled GPS observations. We derive an error bound that quantifies the effects of measurement error, nearest-entity misclassification, temporal gaps, boundary crossings, and day-to-day variability. We also develop a map-augmented representation of daily activity patterns, a dwell-time-weighted distance for clustering daily trajectories, and polygon- and road-based stability summaries. Simulation studies and a real-data application demonstrate that the proposed framework recovers concentrated stationary anchors, interpretable travel corridors, and distinct stabilization behavior for dwelling and movement components, supporting the benefits of weighting under irregular sampling.\\
KEYWORDS: GPS data, GIS, human mobility, space-time geography.
\end{abstract}
            
\maketitle

\date{\today} 

\tableofcontents

\section{Introduction} 

This article focuses on the statistical estimation of human activity spaces, which represent areas of direct contact during daily activities \citep{entwisle2007putting,golledge-stimson-1997}. Activity spaces measure individual spatial behavior and capture experiences through observed location choices \citep{golledge-1999}. People do not move randomly in space \citep{hagerstrand-1970}. Due to preferences, needs, knowledge, and movement constraints, individuals concentrate their visits around one or more anchor locations that serve as origin and destination hubs. Anchor locations hold material or symbolic significance, including home, work, a child's school, favorite markets, grocery stores, entertainment venues, and airports. The likelihood of visiting a specific location decreases with distance from anchor locations and depends on its position relative to common travel routes. The shape, structure, and extent of activity spaces result from the spatial arrangement of anchor locations and the routes traveled between them \citep{schonfelder-axhausen-2003}. 

Recent developments in global positioning systems (GPS) for wearable technology, such as smartphones, have garnered significant interest from scientists studying environmental influences on various population groups \citep{RN143,RN144,RN146,RN149,RN145,zenk2011activity,RN147,wiehe2013adolescent,entwisle2007putting}. \citet{mazimpaka2016trajectory} documents over 100 studies across 20 disciplines that collect and analyze time-stamped GPS location data from humans. This data is essential for understanding where people spend their time during daily activities and for establishing relationships with socioeconomic outcomes, crime victimization, and both physical and mental well-being. Advances in GPS tracking enable precise analysis of human mobility patterns \citep{wang_urban_2018}. Early mobility studies show that individuals often visit locations beyond their neighborhoods. The neighborhoods they engage with differ from their homes \citep{jones_redefining_2014}. People venture outside their residential areas for various reasons, including school, work, and recreation. This behavior fosters meaningful social connections between neighborhoods \citep{phillips_social_2021}.

Several human mobility studies have utilized travel log diaries from the Los Angeles Family and Neighborhoods Study (L.A.FANS) \citep{lafans-2006}. Travel log diaries enable researchers to map individual movements throughout daily routines; however, they have limitations \citep{schonfelder-axhausen-2003, schonfelder-axhausen-2004}.  First, large-scale samples are costly to collect. The costs of large-scale surveys \citep{olson_survey_2021} increase due to the burden of collecting open-ended data, which necessitates complex GIS processing \citep{wong_measuring_2011}. Second, travel logs are susceptible to various biases. Respondents may not document all visited locations \citep{jones_redefining_2014, bricka_comparative_2006}. Recall-based methods can be biased, because participants may forget details of events or experiences. These omissions can distort a respondent's activity space. 

Key advances in GPS technology have made location data collection more affordable and efficient. With the widespread adoption of smartphones, a significant portion of the population now tracks their mobility through digital trace data \citep{hughes_inferring_2016}. Digital trace data includes methods for obtaining location information from mobile devices, such as call detail records, location-enabled apps, social media activity, WiFi connections, and web searches \citep{hughes_inferring_2016}. However, GPS data from smartphones is limited to device owners, raising concerns about its representativeness and the insights it provides into behavioral patterns \citep{zagheni_demographic_2015}. The public availability and scale of this data make it a valuable resource for academic research. Researchers can utilize public data from social media platforms like Twitter to gather information on activity spaces at a larger scale than traditional travel diaries. This enables studies to be conducted at the neighborhood level rather than solely on an individual basis, facilitating assessments at both neighborhood and metropolitan levels. 

Early attempts to estimate human activity spaces from GPS data involved elliptical regions connecting two anchor locations \citep{sherman2005suite, newsome1998urban}. Models defined activity spaces as convex hulls that encompass an individual's recorded locations \citep[e.g.][]{buliung2006urban, fan2008urban}. More recent methods, such as kernel density estimation in a two-dimensional Cartesian space \citep[e.g.,][]{schonfelder-axhausen-2003,schonfelder-axhausen-2004}, assume uniform distributions of activity within grid cells. This implies that estimates of activity space will be influenced by the chosen grid cell size— a clear example of bias induced by the modifiable areal unit problem \citep{maup-2008}.

An important step forward has been made by estimation methods of activity spaces based on the shortest-path spanning trees of a road network. An individual's travel routes are estimated by projecting their locations onto a relevant road network and connecting each consecutive pair of origin-destination locations associated with trips via the shortest road path \citep{schonfelder-axhausen-2003,schonfelder-axhausen-2004}. \citet{golledge-1999} asserts that road networks shape people's perceptions and knowledge of places, indicating that activity spaces should reflect the paths taken by travelers through the network. Thus, an individual's activity space is represented by the spanning tree that encompasses the area defined by the union of the shortest paths connecting consecutively visited locations. Anchor locations and frequently used segments of the road network can be identified based on visitation frequencies.

In this paper, we propose a novel approach to estimating activity spaces from GPS data, integrating key features of existing methods into a coherent framework. We assume individuals spend their time within a set of spatial polygons or along a road network connecting these polygons. Our goal is to estimate the time spent in each polygon and road network component. The spatial polygons can vary in size and shape, representing bounded structures (e.g., home, office, gym), public spaces, or elements of the built environment. Our methodology falls within Object Oriented Spatial Statistics (O2S2) \citep{cressie-et-2006,NKDEBook,menafoglio-secchi-2017}, where data are points in an appropriate functional embedding space. By including spatial polygons, we create a more accurate framework for human activity spaces compared to methods (e.g., shortest-path spanning trees) that limit activities to road networks. We refer to this O2S2 version of the human mobility polygon-network (PN, henceforth) as activity spaces.

Our proposed methodology uses readily available GIS information about the built environment, which constrains daily activities to specific patterns that cannot be accurately recovered if individuals are assumed to move freely in two-dimensional space, such as the activity space KDE methods of \citet{schonfelder-axhausen-2003,schonfelder-axhausen-2004,chen2020measuring}. We note that most current methods for analyzing point patterns of a linear network \citep[e.g.,][]{NKDEBook,baddeley-et-2021} cannot directly estimate human activity spaces, as they assume GPS locations represent points on a linear network from which a KDE can be derived. This is related to the frequency with which GPS locations are generated; a greater number of GPS locations observed on a linear network segment does not necessarily indicate more time spent by an individual in that segment. Mobile devices that record GPS locations generate more data when traveling at higher speeds. Conversely, when stationary for longer intervals, mobile devices tend to generate fewer GPS records. Our work creates interpretable activity distributions of \citet{Adrianpaper} that reflect the time individuals spend in grid cells within a rectangular spatial window. We ensure that the support of these distributions is embedded in selected GIS features, spatial polygons, and road networks.

The paper is organized as follows: Section \ref{sec:data} describes the GPS data that motivates and illustrates our developments. Section~\ref{sec: mobility pattern} introduces the polygon-network framework, defines latent trajectory and activity-space objects, and presents the time-weighted estimator with its theoretical error bound. Section~\ref{sec:SMM-map} develops a map-augmented movement model along with methods for clustering daily activity trajectories and assessing the temporal stability of polygon-based and road-based activity spaces. Section~\ref{sec:real-data} illustrates our methodological framework with the GPS data described in
Section \ref{sec:data}. Section~\ref{sec:conclusion} concludes with a discussion of the main findings and future research directions. Supplementary material provides details on preprocessing GIS data, simulation experiments, and the proof of the main theoretical result.

\section{GPS data}\label{sec:data}

We analyze GPS data from a longitudinal study tracking the daily movement patterns of over 150 individuals using GPS-enabled smartphones over one month. Study participants were selected through convenience sampling from multiple urban areas within the same country. The positional information recorded by smartphones was securely transmitted to the study database on a secure server via cellular or wireless networks using state of the art encryption techniques. Positional data includes timestamps, smartphone IDs, latitude and longitude coordinates, and accuracy information regarding the reported coordinates (e.g., satellite connectivity). This GPS study design was approved by the relevant biomedical research ethics committees. Participants signed written consent allowing the study team to record GPS data from their smartphones. During GPS tracking, participants maintained their regular daily routines. 

For this analysis, we use data from a single randomly selected participant with 7,248 GPS records collected over $n=25$ days, averaging about 290 observations daily. GIS data for the geographic region where this participant was active were processed as detailed in the Supplementary Material. All identifying information about specific locations visited by this participant has been removed from the following maps and summaries. We added random distance-based noise to the raw GPS data translated from a neutral reference point. We report relative times instead of exact timestamps.

The GPS observation timestamps are unevenly spaced in time, as shown in the left panel of Figure~\ref{fig:example_summary}, which presents the empirical distribution of timestamps over the observation window. 
The right panel of Figure~\ref{fig:example_summary} depicts the distribution of consecutive time gaps between observations, truncated at the 90th percentile to emphasize the predominant patterns. 
This irregular sampling pattern highlights the necessity of employing statistical methods that explicitly accommodate nonuniform temporal spacing, rather than depending on simple averaging procedures. These descriptive summaries demonstrate that, even for a single participant’s path, there is considerable variation in both the temporal granularity and the spacing of observations. This, in turn, supports the use of methods capable of accommodating irregularly spaced spatiotemporal data, especially when combining them with GIS sources such as polygons and road networks to describe individual activity patterns.

\begin{figure}[ht]
	\centering
	\includegraphics[width=0.45\linewidth]{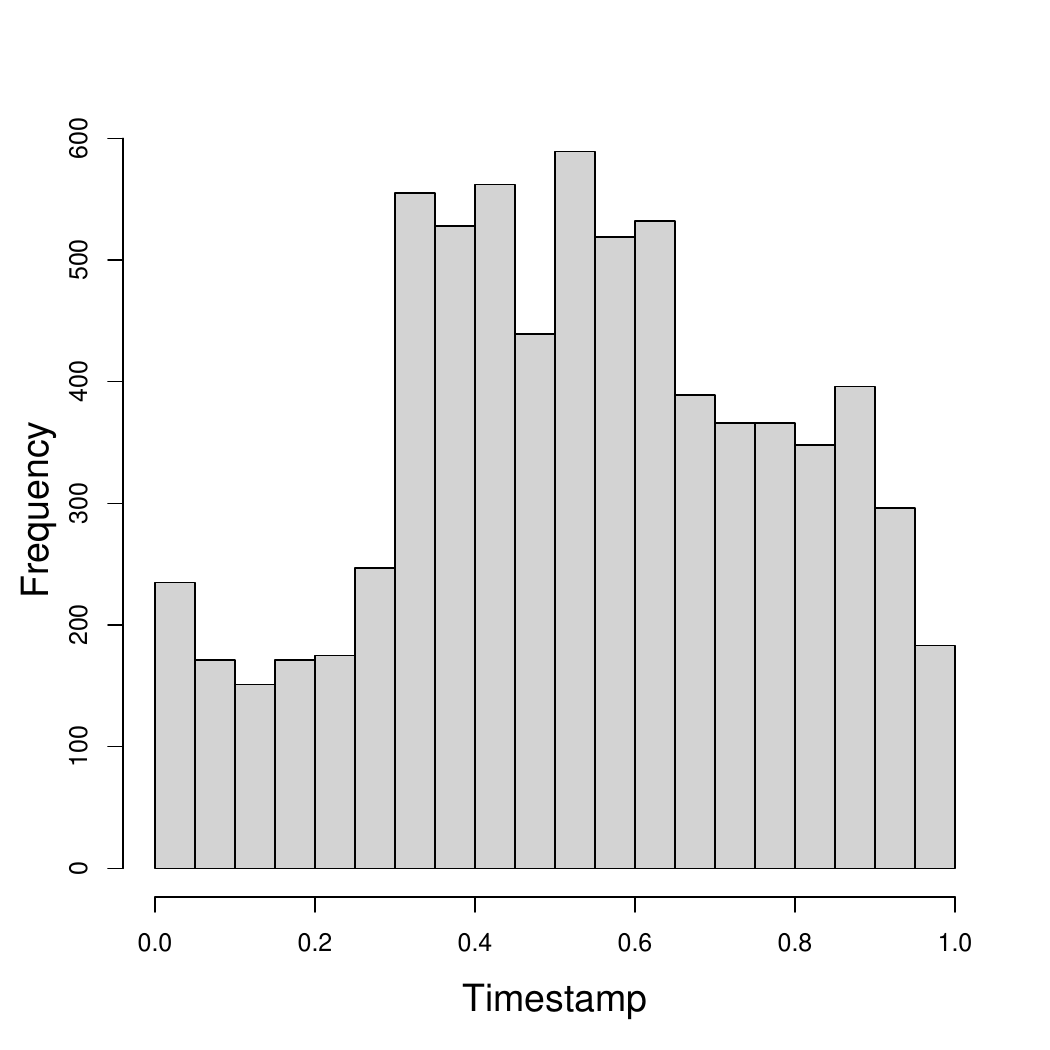}	\includegraphics[width=0.45\linewidth]{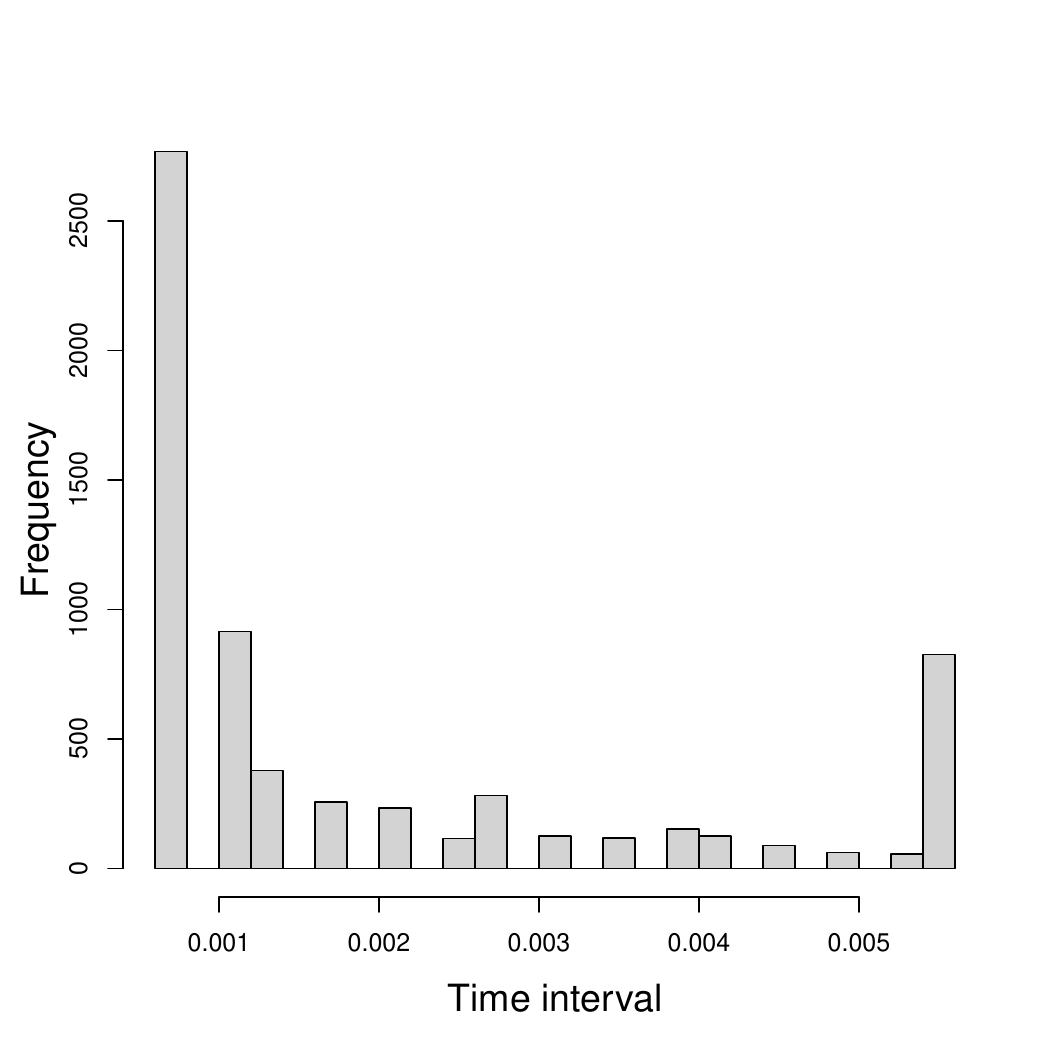}
	 
	\caption{Summary information for the GPS data recorded for the selected study participant. The left panel displays the histogram of recorded timestamps, showing irregular sampling across the observation window. The right panel shows the histogram of time intervals between successive observations, truncated at the 90th percentile of all intervals to suppress the influence of extreme gaps.}
	\label{fig:example_summary}
\end{figure}  

We based our analysis on accurate GIS spatial data, including unmodified polygon entities and road segments. Thus, all nearest-entity assignments, time-use proportions, activity-space estimates, clustering summaries, and stability analyzes are derived from real-world polygon-network representations. However, the map figures were produced using privacy-preserving graphical layers. For visualization, road segments without nearby GPS observations were randomly thinned, and polygonal entities were displayed as equal-sized squares centered at their centroids. These modifications affect only the rendered figures, not the underlying estimates or conclusions. Details are provided in the Supplementary Material.

\section{Our modeling framework}\label{sec: mobility pattern}

We consider GPS data collected from a single individual during \(n\) non-overlapping time periods of equal length.  We treat one period as a day; however, our methodology applies unchanged to coarser intervals (e.g., weeks, weekdays, weekends) or finer intervals (e.g., 07:00-17:00). We represent the individual's GPS data as random spatial locations recorded at fixed timestamps $(X_{i,j},\, t_{i,j}) \ \in\  \mathbb{R}^{2}\times [0,1]$, for $i = 1,\dots,n$ and $j = 1,\dots,m_{i}$, where \(X_{i,j}\) is the planar spatial coordinate of the \(j\)-th observation on day \(i\); \(m_{i}\) is the total number of observations on day \(i\); and \(t_{i,j}\) is its timestamp, rescaled so that the beginning and end of each day map to 0 and 1, respectively: $0 \leq t_{i,1} < t_{i,2} < \dots < t_{i, m_i} \leq 1$, for $i=1,\dots,n$.

The observed locations belong to a rectangular spatial window \(\mathcal{W}\subset\mathbb{R}^{2}\). This spatial window contains a finite number of polygons $\mathcal{A}=\{a_1,\ldots,a_{n_A}\}$ and a road network $\mathcal{L}$, where  each $a_i\subset \mathcal{W}$ and $\ell_j\subset \mathcal{W}$ — see Figure~\ref{fig:map}. We model $\mathcal{L}$ as a planar network of undirected segments $\{l_1,...,l_{n_L}\}$ that intersect only at their endpoints \citep{NKDEBook}.  We define the set $\mathcal{E}=\mathcal{A}\,\cup\,\mathcal{L}$ of polygons and segments, referred to as entities.

\begin{figure}[H]
	\centering
    \includegraphics[width=0.8\linewidth]{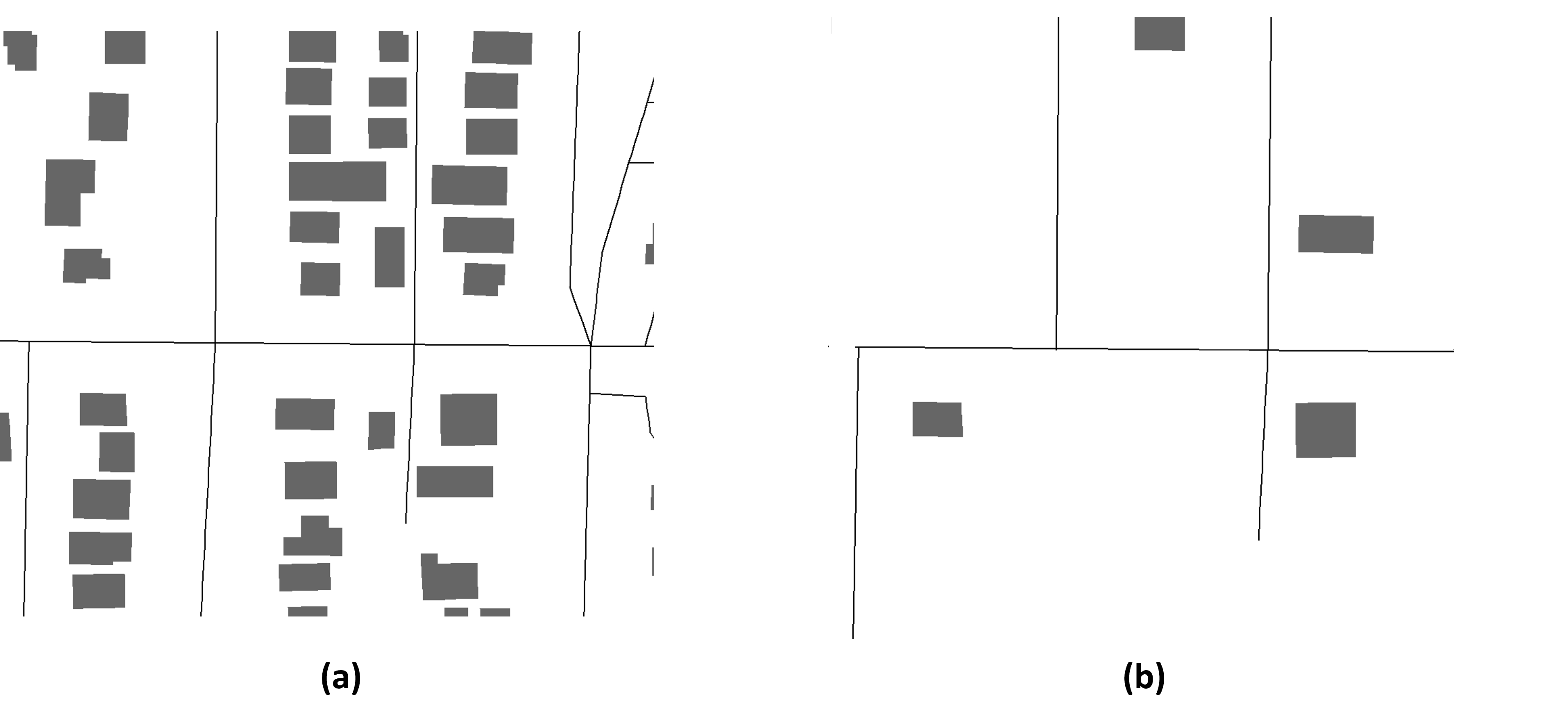}
	\caption{(a) An example spatial window containing spatial polygons and a road network; (b) An example polygon-network activity space (subset of all spatial polygons and road network with high activities) in this spatial window.}
	\label{fig:map}
\end{figure}

The individual performs daily living activities within the spatial polygons $\mathcal{A}$ and can move between them on the road network $\mathcal{L}$. We assume that this individual spends negligible time outside the polygon-network (PN, henceforth) space $E=A\cup L\subset \mathbb{R}^2$, where $A = \bigcup_{i=1}^{n_A} a_i$ and $L=\bigcup_{i=1}^{n_L} \ell_i$ denote the 2D regions covered by the polygons and road segments, respectively. This assumption is valid if all GIS configurations of polygons and segments available in a spatial window (e.g., bounded structures, parks, parking lots, roads, and alleys) are included in the PN space $E$. While including all available GIS polygon-network features might be feasible in rural areas, it is impractical for urban environments with complex built structures that involve many layers and numerous features. Dense GIS layers with too many spatial polygons and road segments inflate storage demands and distance-to-entity calculations, leading to a surplus of entities where the example individual likely spends little time. Consequently, this results in significant increases in computation time and more challenging results to interpret. Thus, the PN space should only include polygons and segments relevant to the individual’s daily living patterns while remaining sufficiently dense to ensure that the time spent outside \(E\) is negligible. Figure~\ref{fig:dense_and_sparse} contrasts two extremes: the left panel shows
an overly dense PN space containing too many entities, while the
right panel displays a PN space with fewer entities than necessary to represent the locations visited by the selected study participant. The middle panel illustrates a PN space that includes sufficient entities to encompass the GPS data of the selected study participant. For privacy-preserving visualization, the displayed road-network layers have been thinned; this modification is for rendering the figure only and does not affect the PN-space construction or subsequent analysis.

\begin{figure}[H]
	\centering
  \includegraphics[width=0.32\linewidth]{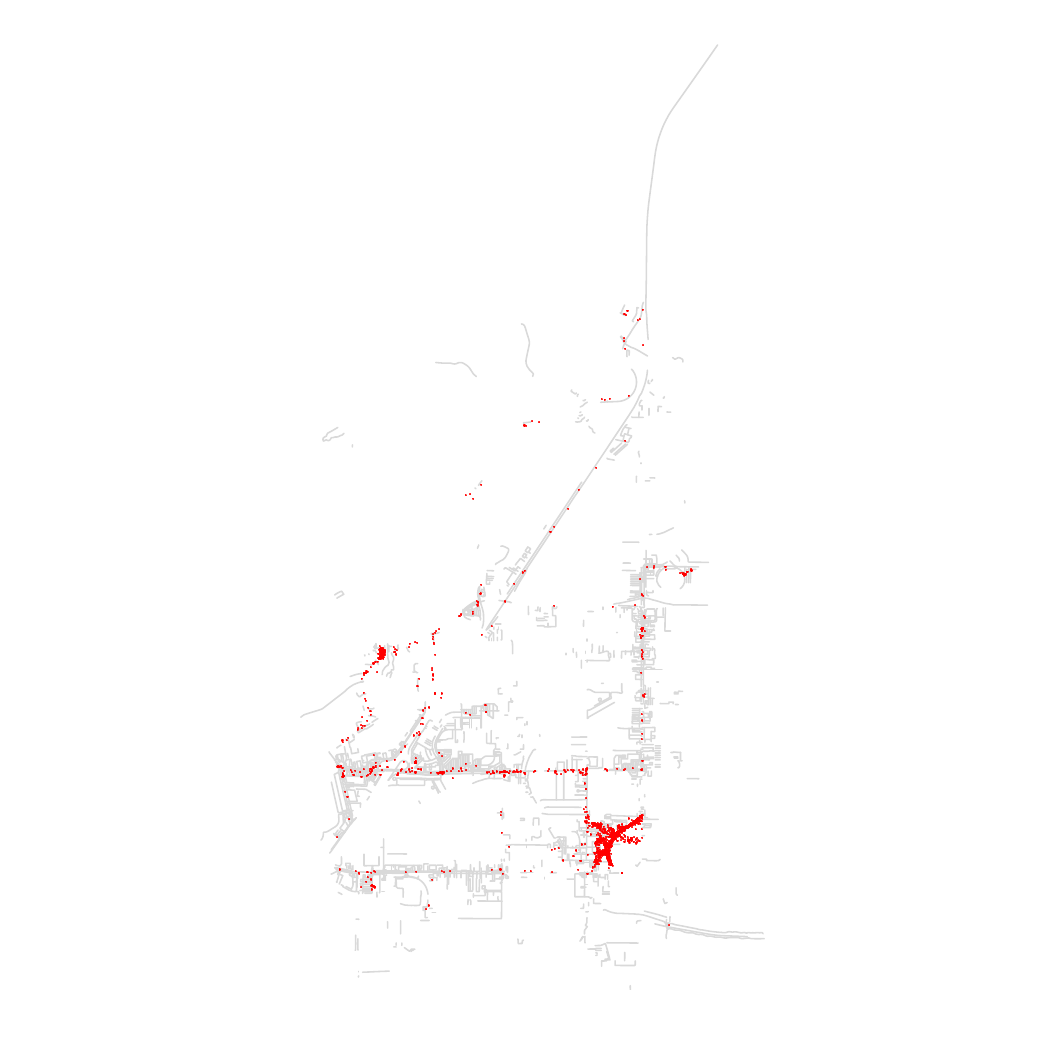}
	\includegraphics[width=0.32\linewidth]{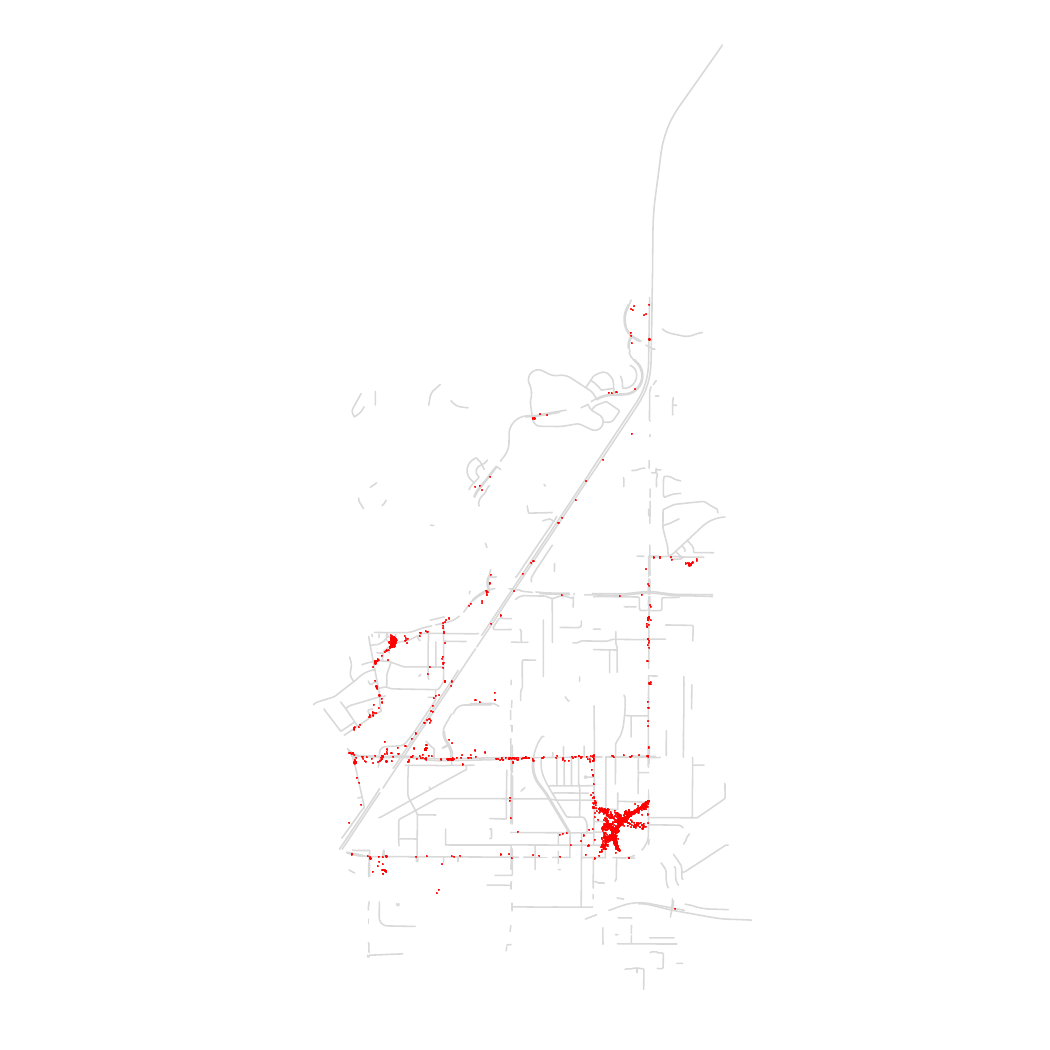}	
   	\includegraphics[width=0.32\linewidth]{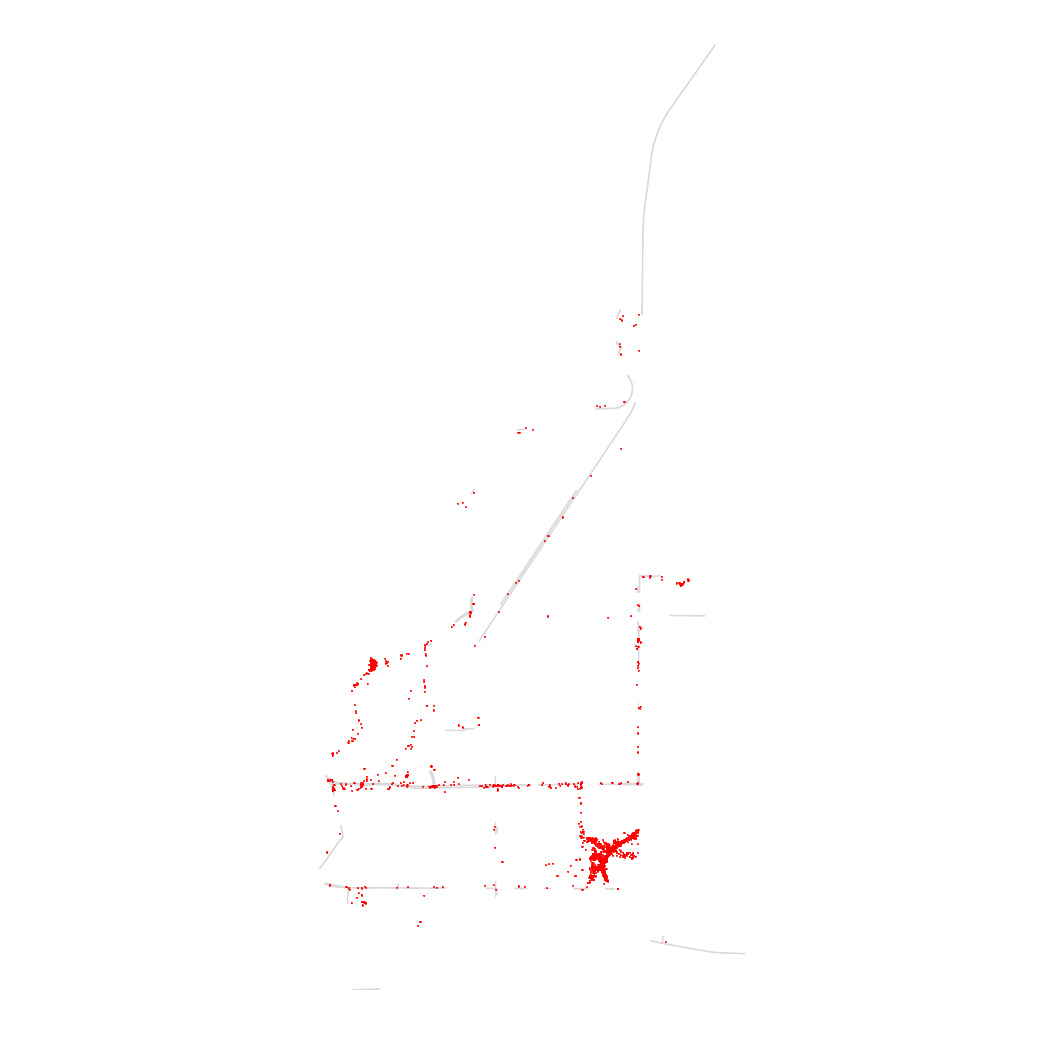}
    \caption{Illustration of the selection of an appropriate PN space shown in grey with respect to a set of GPS records shown in red. Left panel: an example of a PN space that is too dense. Right panel: an example of a PN space that is too sparse. Middle panel: an example of a PN space that is judiciously chosen to match the GPS data. Note that the road-network layers in each panel are privacy-preserving renderings used for visualization only.}
	\label{fig:dense_and_sparse}
\end{figure}

We follow two key rules for identifying appropriate PN spaces in simulated and real data analysis. We retain polygons and segments covering the majority of GPS-recorded locations. Locations far from the retained entities are discarded. Second, we remove road segments entirely within polygons classified as entities. These rules balance computational efficiency and ease of interpretation while capturing meaningful PN activity spaces.

\subsection{Polygon-network latent trajectory}
\label{subsec:trajectory}

In our framework, an individual can only be in locations belonging to the PN space $E$. We model the individual's movement on day \(i\in \{1,2,\ldots,n\}\) using a map $S_i : [0,1]\longrightarrow E$. We refer to \(S_i\) as the (latent) trajectory for day \(i\). Daily trajectories \( S_1,\dots,S_n \) are i.i.d. realizations of a distribution \(P_{\mathcal{S}}\) in the space $\mathcal{S}$ of continuous functions mapping $[0,1]$ into $E$. Each day, the individual follows a random trajectory consistent with $E$, independent of spatial trajectories on other days. We assume GPS data are recorded with measurement error \citep{ranacheret-2016}: $X_{i,j} \;=\; S_i(t_{i,j}) + \varepsilon_{i,j}$, for $j = 1,\dots,m_i$ and $i=1,\ldots,n$. The errors \(\epsilon_{i,j}\) follow a bivariate Normal distribution with mean zero and location-dependent variance:  $\varepsilon_{i,j}\mid\{X^{\mathrm{true}}_{i,j}\in e\}\ \sim\ \normal_2(\bfzero,\sigma_e^2 I_2)$, with \(\sigma_e^2\) constant within each entity. This distribution captures spatial heterogeneity (e.g., open sky versus urban canyon) while maintaining local isotropy. For tractability, we treat \(\{\varepsilon_{i,j}\}\) as i.i.d. Measurement error in data from Global Navigation Satellite Systems (GNSS) may exhibit temporal correlation over seconds to tens of seconds due to satellite geometry and multipath, but this dependence is generally weak at typical mobility sampling scales and is ignored here \citep{park2013dgps, lapadat2021sensors}. The true location without error at time \(t_{i,j}\) is \(X^{\mathrm{true}}_{i,j} = S_i(t_{i,j})\). We note that the true locations visited by the example individual are assumed to be unknown.

\begin{figure}[H]
	\centering
	\includegraphics[width=.7\linewidth]{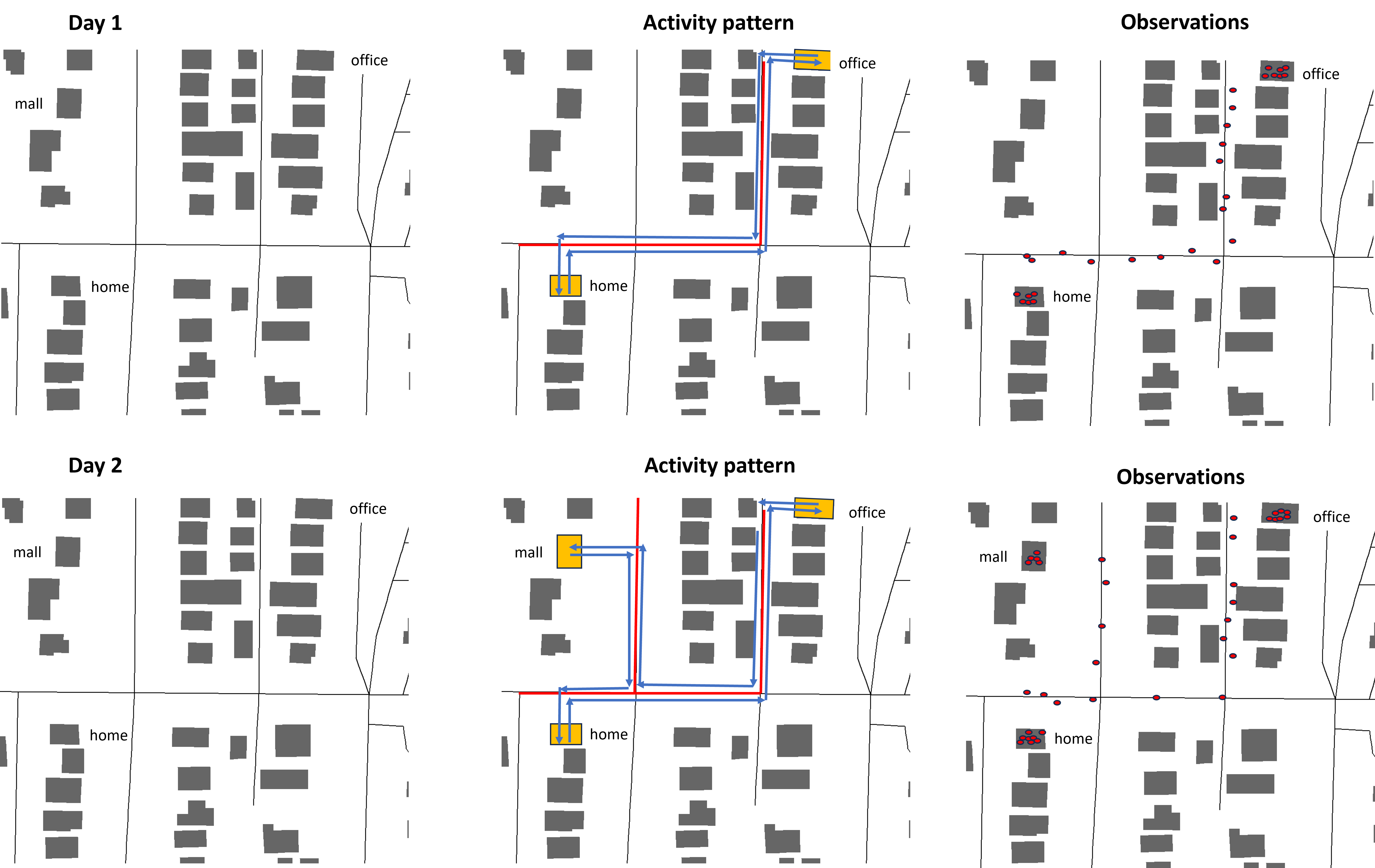}
	\caption{Illustration of the proposed data generating mechanism. Left column: study window with relevant spatial polygons and road network in two different days. Middle column: two sample latent trajectories comprising polygon dwell times (yellow) and path segments (blue, with a highlighted red sub-segment). The line segments the individual uses when moving between polygons are marked in red. Right panel: noisy GPS observations (red dots) recorded for these two daily trajectories.}
	\label{fig:DGP}
\end{figure}

Figure~\ref{fig:DGP} shows an example of our proposed data generating mechanism. The middle column displays the latent trajectories of an individual over two days, during which they remain within two polygons and move between them along the red-marked road segments. The right column illustrates the recorded GPS observations, which trace the true paths but slightly deviate due to measurement errors.

\subsection{Polygon-network activity space}\label{sec:activity-space}

Let $S_d\sim P_{\mathcal{S}}$ such that $S_d\in \mathcal{S}$ be an example individual's latent trajectory on a particular day $d$. Consider a measurable subset $B\subseteq E$. The individual might never visit $B$, or they might visit it once, twice, or several times. Each visit to $B$ might vary in duration. The probability that the individual is present in $B$ at time $t$ during day $d$ is \(\Pr\bigl\{S_d(t)\in B\bigr\}\). 
Let \(U\sim  {\sf Uniform}[0,1]\) be independent of the latent trajectory \(S_d\). Given the trajectory $s$ on day $d$, the time proportion the individual spends in $B$ during day $d$ is 
\begin{eqnarray*}
    \Pr\bigl\{S_d(U)\in B\mid S_d=s\bigr\} &=& \int_0^1 I\bigl\{ s(t)\in B \bigr\}\, dt.
\end{eqnarray*}

The mean proportion of time spent in $B$ across all periods is
\begin{eqnarray}
\Pr\bigl\{S_d(U)\in B\bigr\} = \mathbb{E}_{S_d}\bigl(\Pr\bigl\{S_d(U)\in B \mid S_d\bigr\}\bigr) &=&\int_{s\in \mathcal{S}}\int_0^1 I\bigl\{s(t)\in B\bigr\}\, dtdP_{\mathcal{S}}(s). 
\label{eq:avepnasdistribution}
\end{eqnarray}
The individual's activity space relative to the PN space $E$ is called the polygon-network (PN) activity space, fully determined by the distribution \eqref{eq:avepnasdistribution}. It includes all entities visited by the individual. Considering the collection of entities $\mathcal{E}=\mathcal{A}\cup\mathcal{L}=\{ e_1,e_2,\ldots,e_{n_A+n_L}\}$ that includes all road segments and polygons, let 
\begin{eqnarray}
    T_e&=\Pr\bigl\{S_d(U)\in e\bigr\}=&\mathbb{E}_{S_d}\bigl(\Pr\bigl\{S_d(U)\in e\mid S_d\bigr\}\bigr)
    \label{eq:meantimespententity}
\end{eqnarray}
\noindent  denote the mean proportion of time an individual spends in entity $e\in \mathcal{E}$. The PN activity space is $\bigcup_{\{e\in \mathcal{E}:T_e>0\}}\{e\}$. The level $\gamma$ PN activity space, denoted by $\mathcal{AS}_{\gamma}$, is defined as the subset of entities with the smallest number of elements where an individual spends at least $100\gamma \%$ of their time. Denote $\mathcal{Q}_{\gamma}=\left\{ Q\subseteq \mathcal{E}:\sum\limits_{e\in Q}T_e\ge \gamma\right\}$. Then $\mathcal{AS}_{\gamma}$ is the element of $\mathcal{Q}_{\gamma}$ such that $\mathrm{card}(\mathcal{AS}_{\gamma})\le \mathrm{card}(Q)$ for all $Q\in \mathcal{Q}_{\gamma}$. If multiple subsets of $\mathcal{E}$ in \(\mathcal{Q}_{\gamma}\) have the same minimal cardinality, we select the one with the maximum total time-spent proportion \(\sum_{\{e: e\in Q\}} T_e\). 

Analytically, individuals typically spend longer periods within spatial polygons than on road segments. Therefore, if all polygon and road entities are ranked together, a level $\gamma$ PN activity space may be dominated by polygons and have few or no road segments, making it difficult to study individual movement patterns on the road network. By normalizing time-spent proportions within polygons and road segments, we obtain activity spaces that summarize key stationary locations and major movement routes on comparable scales. Substantively, this distinction aligns with the time-geographic perspective of \citet{hagerstrand-1970}, where human mobility encompasses both activity locations and the routes connecting them. Therefore, in the empirical and simulation analyzes below, we primarily use polygon-based and road-based activity spaces to separately characterize these two components of mobility. Let
\[
T_{\mathcal{A}}=\sum_{a\in\mathcal{A}}T_a,
\qquad
T_{\mathcal{L}}=\sum_{\ell\in\mathcal{L}}T_\ell
\]
be the total mean proportions of time spent in polygons and road segments, respectively. For the polygon-based activity space, given that $T_{\mathcal{A}}>0$, we rescale the polygon time-spent proportions by
\[
T^{\mathcal{A}}_a=\frac{T_a}{T_{\mathcal{A}}},
\qquad a\in\mathcal{A},
\]
so that $\sum_{a\in\mathcal{A}}T^{\mathcal{A}}_a=1$. The level $\gamma$ polygon-based activity space, denoted by $\mathcal{AS}^{\mathcal{A}}_{\gamma}$, is the smallest subset of polygons where the individual spends at least $100\gamma\%$ of their polygon time. Similarly, provided that $T_{\mathcal{L}}>0$, we define
\[
T^{\mathcal{L}}_\ell=\frac{T_\ell}{T_{\mathcal{L}}},
\qquad \ell\in\mathcal{L},
\]
so that $\sum_{\ell\in\mathcal{L}}T^{\mathcal{L}}_\ell=1$. The level $\gamma$ road-based activity space, denoted by $\mathcal{AS}^{\mathcal{L}}_{\gamma}$, is the smallest subset of road segments where the individual spends at least $100\gamma\%$ of their road-segment time.

To cover the activity space of both polygons and roads, we use the \emph{composed activity space} 
$\mathcal{AS}^{\mathcal{A}}_{\gamma}\cup\mathcal{AS}^{\mathcal{L}}_{\gamma}$.  
In both cases, ties are resolved for the PN activity space by selecting the subset with the largest total normalized time-spent proportion.
The composed activity space encompasses both polygons and high-activity roads without being dominated by polygons. 
In our analysis, we primarily use the composed activity space. This definition of the PN activity space can be made more flexible in several ways. One extension is to introduce weights $w_e$ associated with the entities $e$. The weights could be proportional to the areas of the spatial polygons and the lengths of the network segments. 
The level $\gamma$ PN activity space will include entities with a minimum total weight where individuals spend at least $100\gamma$\% of their time, i.e., $\mathcal{AS}_{\gamma}\in \mathcal{Q}_{\gamma}$ such that
\[
\sum_{e\in \mathcal{AS}_{\gamma}}w_e
\le
\sum_{e\in Q}w_e
\quad
\text{for all } Q\in \mathcal{Q}_{\gamma}.
\]

\subsection{Estimation of time spent in entities}\label{sec:estimation}

Since GPS data consist of locations at discrete timestamps, we use a weighted approach to estimate the mean proportion of time $T_e$ an individual spends in an entity $e\in \mathcal{E}$ \ -- see Eq.~\eqref{eq:avepnasdistribution}. For each day $i=1,\ldots,n$ and each $j=1,\ldots,m_i$, we assign a mark $W_{i,j}$ to the observation $(X_{i,j},t_{i,j})$ to indicate the time spent at location $X_{i,j}$. We assume this time is half the elapsed time between $(X_{i,j},t_{i,j})$ and the previous location, plus half the time between $(X_{i,j},t_{i,j})$ and the next location on the same day. Consequently, we set \(W_{i,j}=(t_{i,j+1}-t_{i,j-1})/2\) for \(1<j<m_i\). For the first and last GPS locations recorded on the $i-$th day, we set $W_{i,1} = (t_{i,1}+t_{i,2})/2$ and $W_{i,m_i} = 1-(t_{i,m_i-1}+t_{i,m_i})/2$ to ensure that the time spent at all GPS locations recorded on day $i$ encompasses the full interval of the day $[0,1]$: \(\sum\limits_{j=1}^{m_i} W_{i,j} = 1\).

We assign each GPS location $(X_{i,j},t_{i,j})$ to its nearest entity $E_{i,j}\in \mathcal{E}$: \( d\left(X_{ij},E_{ij}\right) \;\le\; d\left( X_{i,j},e\right)\) for all \(e\in \mathcal{E}\). Here, $d(x,y)$ represents the minimum geodesic distance from any point in the first geometry $x$ to any point in the second geometry $y$ \citep{pebesma-bivand-2023}.

We estimate the time an individual spends in entity $e\in \mathcal{E}$ on a given day as the sum of GPS marks recorded that day, which are closer to $e$ than to any other entity. An estimate $\widehat{T}_e$ of the mean proportion of time $T_e$ an individual spends in $e$ is the average proportion of time spent in $e$ across all days:
\begin{align}
	\widehat{T}_{e}
	\;=\;
	\frac{1}{n}\sum_{i=1}^{n}\sum_{j=1}^{m_i}
	W_{i,j}\,    
	1\!\left( E_{i,j}=e
	\right).\label{esti:Tk}
\end{align}
 We denote $E^{\mathrm{true}}_{i,j}$ as the entity with the true location $X^{\mathrm{true}}_{i,j}=S_i(t_{i,j})\in E$. The trajectory $S_i$ crosses an entity $e\in \mathcal{E}$ boundary between two consecutive GPS locations $(X_{i,j-1},t_{i,j-1})$ and $(X_{i,j},t_{i,j})$ if the individual's true location is in $e$ at $t_{i,j-1}$ or $t_{i,j}$, but not at the other time point. We collect all indices associated with boundary crossings of entity $e$ in a set
\[
\mathcal{I}_{e}
\;=\;
\bigl\{(i,j)\,:\,
\mathbf{1}\{E^{\mathrm{true}}_{i,j-1} = e\}
\;\neq\;
\mathbf{1}\{E^{\mathrm{true}}_{i,j}=e\}
\bigr\}.
\]

Let \( n_{e}=\mathrm{card}(\mathcal{I}_{e})\). We say the trajectory $S_i$ visits the entity $e$ if there exists a non-empty time interval \([\tau_{i,j_1},\tau_{i,j_2}]\subset[0,1]\) and $\xi>0$ such that 
\(S_i(\tau)= e\) holds for all \(\tau_{i,j_1}\le \tau\le \tau_{i,j_2}\), along with \(S_i(\tau_{i,j_1}-\beta)\ne e\) and \(S_i(\tau_{i,j_2}+\beta)\ne e\) for any $\beta\in[0,\xi]$. Let $V_{i,e}$ represent the number of visits to $e$ on day $i$. Denote $V_e=\sum\limits_{i=1}^n V_{i,e}$. Each visit results in two boundary crossings—an entry and an exit from $e$. Thus \(\mathrm{card}\bigl(\mathcal{I}_{e}\bigr)
	\;=\;
	2\,V_e\).
    
To study the asymptotic behavior of the estimator  in Eq.~\eqref{esti:Tk}, we assume that the individual spends a finite time in each entity and visits each a finite number of times.
\begin{assumption}\label{ass:visits}
	\begin{enumerate}[label=\textnormal{(\arabic*)}]
		\item \textbf{Detectability.} Every visit of the example individual to any entity can be recovered from the recorded GPS data. That is, at each day $i$, for any visit to the entity $e$ that occurs during the time interval $[\tau_{i,j_1},\tau_{i,j_2}]$, there is at least one GPS observation $(X_{i,j},t_{i,j})$ recorded in $[\tau_{i,j_1},\tau_{i,j_2}]$, i.e. $\tau_{i,j_1}\le t_{i,j} \le \tau_{i,j_2}$. 
		\item \textbf{Finite crossing rate.}  The total number of boundary crossings during the entire time window is bounded by average number of GPS records: \(
		\displaystyle
		\frac{\sum_{e\in \mathcal{E}}\,V_{e}}{\sum_{i=1}^n m_i}= O\left(\frac{n}{\sum_{i=1}^n m_i}\right)
		\). 
	\end{enumerate}
\end{assumption}
For each GPS location $(X_{i,j},t_{i,j})$, we define the entity misclassification probability as the likelihood that the entity closest to $X_{i,j}$ differs from the one closest to \(X^{\mathrm{true}}_{i,j}\):
\( \pi_{i,j}
\;=\;
\Pr\bigl\{
E_{ij}\neq E^{\mathrm{true}}_{i,j}
\bigr\}\).
This probability depends on GPS measurement error, the spatial configuration of the entities $\mathcal{E}$, and the proximity of the true locations \(X^{\mathrm{true}}_{i,j}\) to the entities' boundaries.

Understanding the misclassification probabilities \(\pi_{i,j}\) is important. For each entity \(e\in\mathcal{E}\), let its Voronoi cell be
\(
V_e=\{x\in\mathbb{R}^2:\ d(x,e)\le d(x,e')\ \text{for all }e'\in\mathcal{E}\}
\).
Let \(d_{i,j}\) denote the distance from the true location \(X^{\mathrm{true}}_{i,j}\) to the nearest boundary where the nearest-entity assignment can change. Equivalently, \(d_{i,j}\) is the distance from \(X^{\mathrm{true}}_{i,j}\) to the boundary of the Voronoi cell of its true nearest entity. Misclassification can only occur if the measurement error is large enough to push the observed location across this boundary. Therefore, we have 
\(
\pi_{i,j}
\;\le\;
\Pr\!\bigl(\|\varepsilon_{i,j}\|\ge d_{i,j}\bigr)
\).
Let \(r_{0.95}\) denote a 95\% horizontal accuracy radius, i.e.,
\(
\Pr\!\bigl(\|\varepsilon_{i,j}\|\le r_{0.95}\bigr)=0.95
\).
Then \(d_{i,j}\ge r_{0.95}\) implies \( \pi_{i,j}\le 0.05\).
Figure~\ref{fig:piij-bound} illustrates when misclassification events occur. The solid dot marks the true location \(X^{\mathrm{true}}_{i,j}\), the open square marks a possible observed location \(X_{i,j}\), and the vertical line represents the Voronoi boundary where the nearest-entity assignment changes. The quantity \(d_{i,j}\) is the distance from the true location to this boundary, and the dashed circle indicates a representative GPS uncertainty scale \(r_{0.95}\). When \(d_{i,j}\ge r_{0.95}\), most measurement errors stay within the same Voronoi cell, so nearest-entity assignment is unaffected by measurement error with at least \(0.95\) probability.

The GPS SPS Performance Standard \citep{gps-sps-ps-2020} reports worst-site position accuracy of \(\leq 15\,\mathrm{m}\) at the 95\% horizontal error level under typical user conditions. If the true location is more than \(15\) meters from the nearest assignment boundary, the probability of incorrectly assigning the nearest entity is at most \(5\%\) under conservative GPS accuracy. Since \(15\) meters is usually small compared to human mobility, nearest-entity misclassification rarely occurs in most GPS observations.

Under the Gaussian error model in Section~\ref{subsec:trajectory}, a sharper bound is available:
\(
\pi_{i,j}\le \exp\!\left(-\frac{d_{i,j}^2}{2\sigma^2}\right)
\).
Once \(d_{i,j}\) moderately exceeds the GPS error scale, the misclassification probability becomes negligible.
\begin{figure}[ht]
\centering
\begin{tikzpicture}[scale=1.2, font=\small]

\fill[gray!10]   (-4,-1.2) rectangle (0,1.45);
\fill[gray!10] (0,-1.2) rectangle (4,1.45);

\draw[thick] (0,-1.2) -- (0,1.45);

\node at (-2.2,1.25) {\(V_{e(i,j)}\)};
\node at (2.2,1.25) {neighboring Voronoi cell};

\draw[thick] (-2.1,-0.03) rectangle ++(0.12,0.12);
\node[below left] at (-1.7,0.02) {\(X_{i,j}\)};

\coordinate (C) at (-1.5,0.15);

\draw[densely dashed] (C) circle (0.85);

\draw[thin] (C) -- ++(-45:0.85);
\node at (-1.2,-0.32) {\(r_{0.95}\)};

\filldraw[black] (C) circle (1.6pt);
\node[above] at (-1.5,0.18) {\(X^{\mathrm{true}}_{i,j}\)};

\draw[dashed] (C) -- (0,0.15);
\node[above] at (-0.3,0.18) {\(d_{i,j}\)};

\node[align=left] at (2.1,-0.55)
{\(\pi_{i,j}\) is small when\\[2pt] \(d_{i,j}\ge r_{0.95}\)};

\end{tikzpicture}
\caption{Misclassification can occur only if the measurement error moves the observed location across the assignment boundary. Thus, if the true location is farther than \(r_{0.95}\) from the boundary, then the misclassification probability is at most \(0.05\).}
\label{fig:piij-bound}
\end{figure}

We are ready to present the main result on the convergence rate of the estimation error.
\begin{theorem}\label{thm:1}
	Under Assumption \ref{ass:visits}, for each entity $e\in\mathcal{E}$ the asymptotic absolute error \( |\widehat{T}_{e}-T_e|\) between the mean proportion of time spent by the individual in $e$ and its estimator \eqref{esti:Tk} is
	\begin{align}
		&O_p\left(\frac{1}{n}\sqrt{\sum_{i=1}^{n} \sum_{j=1}^{m_i} W_{i,j}^2 \pi_{i,j}}\right)+O\left(\frac{1}{n} \sum_{i=1}^n\sum_{j=1}^{m_i}W_{i,j} \pi_{i,j}\right)\nonumber\\
        & +O\left(\frac{1}{n}\sum_{i=1}^n \sum_{j:(i,j)\in  \mathcal{I}_{e}} t_{i,j}-t_{i,j-1} \right)+
		O_p\left(\sqrt{\frac{1}{n}}\right),\label{eq:thm1 res}
	\end{align}
	as $\max_{(i,j)}W_{i,j}\rightarrow 0$ and $n\rightarrow \infty$.
\end{theorem}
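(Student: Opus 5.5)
The plan is to decompose $\widehat{T}_e - T_e$ into four pieces, one for each term in \eqref{eq:thm1 res}, and bound them separately. Write $T_{i,e}=\int_0^1 \mathbf{1}\{S_i(t)\in e\}\,dt$ for the true time fraction on day $i$. Let $\widetilde{T}_{i,e}=\sum_j W_{i,j}\mathbf{1}\{E^{\mathrm{true}}_{i,j}=e\}$ be its oracle discretization, which uses the true entities. Then
\begin{align*}
\widehat{T}_e-T_e
&=\frac1n\sum_{i,j}W_{i,j}\bigl(\mathbf{1}\{E_{i,j}=e\}-\mathbf{1}\{E^{\mathrm{true}}_{i,j}=e\}\bigr)
+\frac1n\sum_{i}\bigl(\widetilde{T}_{i,e}-T_{i,e}\bigr)
+\Bigl(\frac1n\sum_i T_{i,e}-T_e\Bigr).
\end{align*}
The first sum carries the misclassification error. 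The second is the temporal discretization error. The third is the day-to-day sampling error.

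\textbf{Sampling term.} Conditional on nothing, the $T_{i,e}$ are i.i.d. in $[0,1]$ with mean $T_e$ by \eqref{eq:meantimespententity}. Chebyshev's inequality then gives $O_p(n^{-1/2})$.

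\textbf{Misclassification term.} Set $D_{i,j}=\mathbf{1}\{E_{i,j}=e\}-\mathbf{1}\{E^{\mathrm{true}}_{i,j}=e\}$, and condition on the trajectories $S_1,\dots,S_n$, with the timestamps treated as fixed. Then $|D_{i,j}|\le \mathbf{1}\{E_{i,j}\neq E^{\mathrm{true}}_{i,j}\}$, and the $D_{i,j}$ are independent because the errors $\varepsilon_{i,j}$ are i.i.d. Split $D_{i,j}=(D_{i,j}-\mathbb{E}D_{i,j})+\mathbb{E}D_{i,j}$. The mean part is bounded in absolute value by $\frac1n\sum W_{i,j}\pi_{i,j}$, which is the second term. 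The centered part has variance at most $\frac1{n^2}\sum W_{i,j}^2\,\mathrm{Var}(D_{i,j})\le \frac1{n^2}\sum W_{i,j}^2\pi_{i,j}$. Chebyshev's inequality gives the first $O_p$ term. If $\pi_{i,j}$ is defined unconditionally, I would apply the same argument with conditional probabilities and average, using the tower property for both the mean and the variance bound.

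\textbf{Discretization term (main obstacle).} Each $W_{i,j}$ is the length of the interval $J_{i,j}=[(t_{i,j-1}+t_{i,j})/2,\,(t_{i,j}+t_{i,j+1})/2]$, using the obvious endpoint conventions. These intervals partition $[0,1]$. Hence
\[
\widetilde{T}_{i,e}-T_{i,e}=\sum_j\int_{J_{i,j}}\bigl(\mathbf{1}\{E^{\mathrm{true}}_{i,j}=e\}-\mathbf{1}\{S_i(t)\in e\}\bigr)\,dt .
\]
The integrand on $J_{i,j}$ can be nonzero only if the membership of $S_i$ in $e$ changes inside $[t_{i,j-1},t_{i,j+1}]$.

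\emph{Detectability.} Assumption~\ref{ass:visits}(1) guarantees that every visit contains an observation. So a change of membership between consecutive observation times is recorded exactly as a pair $(i,j)\in\mathcal{I}_e$. A change followed by a return within one gap, which would leave no trace, cannot occur.

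\emph{Localizing the error.} On a gap $[t_{i,j-1},t_{i,j}]$ with no crossing, both half-intervals assign the correct label, so the error is zero. On a crossing gap the error is at most the gap length $t_{i,j}-t_{i,j-1}$. Summing over days gives the third term.

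\emph{Multiple crossings within one gap.} Here I would use Assumption~\ref{ass:visits}(2). It ensures the number of such events is $O(n)$, so any extra error is absorbed into the crossing sum as $\max W_{i,j}\to 0$. Care is needed to check that a gap containing both an entry and an exit is still charged at most its length, once per index in $\mathcal{I}_e$.

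Combining the four bounds with the triangle inequality yields \eqref{eq:thm1 res}.
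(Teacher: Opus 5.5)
Your proposal is correct and follows essentially the paper's own route. Your split of the misclassification sum into a centered part and a mean part reproduces the paper's Terms (I) and (II), and you make explicit the conditioning on the trajectories that the paper leaves implicit. Your sampling term is Term (IV), and your half-interval localization of the discretization error is an equivalent substitute for the paper's Term (III) sandwich, which places both the oracle sum and $T_{e,i}$ between the ``both endpoints in $e$'' and ``either endpoint in $e$'' gap sums.

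The one blemish is your final paragraph, which you should drop. The paper never uses Assumption~\ref{ass:visits}(2) in this proof. Detectability already rules out multiple crossings within a gap, as you observed yourself. Moreover, if such a gap did occur, its endpoints would have the same membership, so it would not belong to $\mathcal{I}_e$ at all. Its $O(\max_{i,j} W_{i,j})$ contribution therefore could not be absorbed into the crossing sum.
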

This decomposition highlights four key sources of estimation error. Component (a), GPS misclassification (\(\pi_{i,j}\)), reflects the likelihood of misassigning a GPS observation to a different entity due to displacement across a decision boundary. This probability is higher near boundaries and lower deep within an entity. In practice, it can be reduced by using higher-accuracy devices, removing low-quality observations, employing map-matching/smoothing, or coarsening entities with small buffers to avoid sharp boundaries. Component (b), irregular sampling (large gaps \(t_{i,j}-t_{i,j-1}\)), occurs from long gaps between observations, causing the piecewise-constant dwell-time approximation to over-allocate time to single observations; increasing the sampling rate (or using adaptive sampling that records more frequently during movement) mitigates this error. Component (c), frequency of boundary crossings, arises from boundary crossings between observations: the longer the trajectory spends near boundaries relative to the sampling interval, the larger the crossing-induced bias. Finer sampling during transitions, speed constraints, or map-matching can reduce it; merging very small adjacent entities can also help. Component (d), day-to-day variance (the term \(O_{p}(n^{-1/2})\)), is intrinsic day-to-day variability in behavior and decreases at the usual \(\sqrt{n}\) rate as more days are observed.

Let \(\bar\pi_W=\frac{1}{n}\sum_{i=1}^n\sum_{j=1}^{m_i}W_{i,j}\pi_{i,j}\) and \(W_{\max}=\max_{i,j}W_{i,j}\). Then
\[
\frac{1}{n}\sqrt{\sum_{i,j}W_{i,j}^2\pi_{i,j}}
\;\le\;
\sqrt{\frac{W_{\max}}{n}}\sqrt{\bar\pi_W},
\]
since \(\sum_{i,j}W_{i,j}^2\pi_{i,j}\le W_{\max}\sum_{i,j}W_{i,j}\pi_{i,j}=nW_{\max}\bar\pi_W\). In \eqref{eq:thm1 res}, the first term is a stochastic (variance) term that shrinks with \(n\) and denser sampling (smaller \(W_{\max}\)); the second is a bias term of order \(\bar\pi_W\). Bias dominates when misclassification does not decrease with more data (e.g., persistent GPS error near boundaries or dense entity definitions); the variance term vanishes while the bias remains. Variance dominates only when misclassification is extremely small and decreasing (e.g., fixes well inside entities, high-accuracy sensors, or aggressive filtering), making the bias term negligible. Under even spacing with $m$ observations per day, i.e., \(W_{\max}\asymp 1/m\), the variance term is \(O_p(\sqrt{\bar\pi_W/(nm)})\), and it can exceed the bias only if \(\bar\pi_W\ll 1/(nm)\); otherwise, the bias sets the error floor. In many real-world settings with non-negligible GPS error, the bias-dominated regime is the conservative and practically relevant case, which we apply in the even-spacing analysis below.

Next, we discuss error regimes with evenly spaced recordings. For simplicity, we assume there are $m$ observations recorded evenly each day with a finite crossing time. Thus, the boundary-crossing term satisfies
\(\frac{1}{n}\sum_{(i,j)\in\mathcal{I}_e}(t_{i,j}-t_{i,j-1})=O(1/m)\) \--- see Assumption~\ref{ass:visits}.
\begin{corollary}\label{cor:even}
Given \(m\) daily, evenly spaced observations 
\((t_{i,j}=j/(m+1))\), we have
\begin{align*}
	|\hat T_{e}-T_e|  &= 
	O_p\left(\frac{1}{n}\sqrt{\frac{1}{m}\sum_{i=1}^{n}\sum_{j=1}^{m}  \pi_{i,j}}\right)+O\left(\frac{1}{nm} \sum_{i=1}^n\sum_{j=1}^{m} \pi_{i,j}\right)\\
    & +O\left(\frac{1}{m} \right)+
	O_p\left(\sqrt{\frac{1}{n}}\right),
\end{align*}
	as $\max_{(i,j)}W_{i,j}\rightarrow 0, n\rightarrow \infty$.
\end{corollary}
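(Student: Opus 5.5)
The plan is to specialize Theorem~\ref{thm:1} term by term to the evenly spaced design. Throughout I take Theorem~\ref{thm:1} and Assumption~\ref{ass:visits} as given.

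\textbf{Step 1: compute the marks.} With $t_{i,j}=j/(m+1)$ and $m_i=m$, each interior mark is
\[
W_{i,j}=\frac{t_{i,j+1}-t_{i,j-1}}{2}=\frac{1}{m+1}, \qquad 1<j<m .
\]
The endpoint marks are $W_{i,1}=\tfrac{3}{2(m+1)}$ and $W_{i,m}=1-\tfrac{2m-1}{2(m+1)}=\tfrac{3}{2(m+1)}$. Hence
\[
\frac{1}{m+1}\le W_{i,j}\le \frac{3}{2(m+1)} \quad\text{for all }(i,j),
\]
so $W_{i,j}\asymp 1/m$ uniformly. In particular $\max_{(i,j)}W_{i,j}\to0$ is equivalent to $m\to\infty$.

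\textbf{Step 2: the misclassification terms.} Substitute the two-sided bound $W_{i,j}\asymp 1/m$ into the first two terms of \eqref{eq:thm1 res}. This gives $\sum_{i,j}W_{i,j}^2\pi_{i,j}\asymp m^{-2}\sum_{i,j}\pi_{i,j}$, so the first term is
\[
O_p\Bigl(\frac{1}{nm}\sqrt{\textstyle\sum_{i,j}\pi_{i,j}}\Bigr).
\]
Since $\frac1m\le 1$, this is at most the stated $O_p\bigl(\frac1n\sqrt{\frac1m\sum_{i,j}\pi_{i,j}}\bigr)$. Likewise $\frac1n\sum_{i,j}W_{i,j}\pi_{i,j}\asymp \frac{1}{nm}\sum_{i,j}\pi_{i,j}$, which is the second displayed term.

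\textbf{Step 3: the boundary-crossing term (main obstacle).} Each gap now equals $t_{i,j}-t_{i,j-1}=1/(m+1)$, so
\[
\frac1n\sum_{(i,j)\in\mathcal{I}_e}(t_{i,j}-t_{i,j-1})=\frac{n_e}{n(m+1)}=\frac{2V_e}{n(m+1)} .
\]
It therefore suffices to show $V_e=O(n)$.
\begin{itemize}
\item Assumption~\ref{ass:visits}(2) with $\sum_i m_i=nm$ gives $\sum_{e'}V_{e'}/(nm)=O(n/(nm))$, i.e.\ $\sum_{e'}V_{e'}=O(n)$.
\item Since $V_e\le \sum_{e'}V_{e'}$, this yields $V_e=O(n)$.
\end{itemize}
So the crossing term is $O(1/m)$, as the paper indicates just before the corollary. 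The only delicate point is whether the crossing count is deterministic or random. If it is random, the conclusion becomes $O_p(1/m)$; I will read the $O(\cdot)$ in Assumption~\ref{ass:visits} as holding for the realized crossings, matching the form of the theorem.

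\textbf{Step 4: assemble.} The last term $O_p(n^{-1/2})$ carries over unchanged. Adding the four bounds from Steps 2 and 3 to it gives the stated bound, with the limit regime $\max W_{i,j}\to0$, $n\to\infty$ inherited from Theorem~\ref{thm:1}.
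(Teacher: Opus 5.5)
Your proposal is correct and follows essentially the same route as the paper. You specialize Theorem~\ref{thm:1} using $W_{i,j}\asymp 1/m$ under even spacing, and you bound the crossing term by $2V_e/(n(m+1))=O(1/m)$ via Assumption~\ref{ass:visits}(2), which is exactly the reasoning the paper sketches just before the corollary. Your sharper first term $O_p\bigl(\frac{1}{nm}\sqrt{\sum_{i,j}\pi_{i,j}}\bigr)$ is dominated by the stated one. It also agrees with the paper's own remark that the variance term is $O_p(\sqrt{\bar\pi_W/(nm)})$ under even spacing.
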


Let \(S_\pi=\sum_{i=1}^n\sum_{j=1}^m \pi_{i,j}\). The leading terms in Corollary~\ref{cor:even} are\\
\(O_p\!\bigl(\frac{1}{n}\sqrt{S_\pi/m}\bigr)\), \(O(S_\pi/(nm))\), \(O(1/m)\), and \(O_p(n^{-1/2})\). The overall rate is the maximum of the misclassification bias \(S_\pi/(nm)\), the sampling/crossing term \(1/m\), and the day-to-day variance \(n^{-1/2}\). The stochastic misclassification term is smaller than its bias counterpart when \(S_\pi\) does not vanish too quickly. Thus, the overall error rate is: (a) bias-dominated: if \(S_\pi/(nm)\) is the largest term (e.g., persistent GPS error or many observations near boundaries), error is influenced by misclassification and does not significantly decrease by increasing \(n\) or \(m\) alone; improving sensor accuracy or coarsening entities is necessary for a meaningful reduction; (b) sampling/crossing-dominated: if \(1/m\) is largest, error is driven by temporal discretization and boundary crossings; increasing the sampling rate or using adaptive sampling reduces error; (c) day-to-day variance-dominated: if \(n^{-1/2}\) is largest, the overall rate can be reduced by collecting GPS observations over more days. When measurement error is negligible (i.e., \(S_\pi/(nm)\) is small), the dominant terms simplify to \(O(1/m)\) and \(O_p(n^{-1/2})\).

If GPS error is negligible relative to entity size and crossings rarely occur near boundaries, the dominant bias and variance reduce to the recording time interval during crossings and day-to-day variance:
	\begin{align*}
		\max_k|\hat T_{e}-T_e|  = & 	O\Bigl(
		\tfrac{1}{n}\sum_{(i,j)\in\mathcal{I}_{e}}
		(t_{i,j}-t_{i,j-1})
		\Bigr)+O_p\left(\sqrt{\frac{1}{n}}\right).
	\end{align*}
Under the even-spacing setup, this simplifies to \(O(1/m)+O_p(n^{-1/2})\).
Finer sampling and more recording days improve accuracy.

\section{The map-augmented simple movement model} \label{sec:SMM-map}

This section introduces a model that captures the spatial distribution of an individual's daily activities using GIS contextual information. Following the space-time geography framework of \citet{hagerstrand-1970}, we decompose a daily space–time path into an alternating sequence of stations and routes. Stations represent conceptual pockets of local order—bounded areas where individuals engage in activities such as sleeping, working, or strolling. The boundaries of stations do not have a specific geometric shape. Routes link segments connecting consecutive stations while the individual is in motion. A station can range from a room to a city park. Within its boundaries, fine‐scale motion does not change the individual’s activity state.  Representing stations as polygons and routes as road segments embeds the conceptual units of \citet{hagerstrand-1970} into a GIS layer without altering their behavioral meaning. The GIS map layers allow us to assign operational geometry to the behavioral units detailed in Table \ref{tab:smm}.

\begin{table}
    \centering
	\begin{tabular}{@{}lll@{}}
		\toprule
		Time–geography term & Operational representation & SMM symbol \\ \midrule
		Station (conceptual) & Chosen spatial polygon & $a\in\mathcal{A}$ \\
		Route (conceptual)   & Ordered road segments     & $l\in\mathcal{L}$ \\ \bottomrule
	\end{tabular}
    \caption{Key components of the map-augmented simple movement model.}
    \label{tab:smm}
\end{table}

The GIS information, containing spatial polygons and road segments, serves as a proxy for translating 
H\"{a}gerstrand’s abstraction into a computable model without altering its meaning. We present the map–augmented simple movement
	model (map-SMM) that generates trajectories consistent with the latent
paths \(S(t)\) introduced in Section~\ref{subsec:trajectory}. In our previous work \citep{wu2025statisticalframeworkanalyzingactivity}, we proposed the simple movement model (SMM) that generates daily trajectories by alternating between anchor locations and route segments, capturing regular human activity patterns. Building on the SMM structure, the map-SMM embeds stations and routes into the polygon–network layer, operationalizing H\"{a}gerstrand’s space–time geography.

Recall that \(\mathcal{A}=\{a_{1},\dots ,a_{n_A}\}\) and \(\mathcal{L}=\{l_{1},\dots ,l_{n_L}\}\) denote the set of spatial polygons and road segments. In the map-SMM,
a one–day activity pattern is represented by an action vector
\(
B \;=\;
\bigl(
a_{1},
l_{1,1},\dots ,l_{1,n_{1}},
a_{2},\dots ,
a_{k}
\bigr).
\)
For $j=1,\ldots,k-1$, the ordered sequence
\(\ell_{j,1},\dots ,\ell_{j,n_{j}}\in\mathcal{L}\) represents the route involving road segments
 from polygon \(a_{j}\) to polygon \(a_{j+1}\).
Daily variability is represented by treating \(B\) as a random vector generated from a collection of vectors. Given the action vector \(B\), the map-SMM  includes dwelling and travel time vector 
\(
Z
=\bigl(
Z_{1},\;
Z_{1,1},\dots ,Z_{1,n_{1}},\;
Z_{2},\dots ,Z_{k}
\bigr),
\)
that records the proportion of time spent in each polygon and on each visited road segment. The sum of all elements in $Z$ is always 1. The pair \((B,Z)\) encapsulates the trajectory \(S(t)\) at the
entity level: it specifies the polygon or road segment visited and the duration of the visit for each \(t\in[0,1]\)
, capturing the essential structure of the daily activity pattern, even though the precise intra-entity spatial locations remain unspecified.

\paragraph{Example.}
The top-left panel in Figure~\ref{fig:map_pattern} displays a synthetic map from a simulation in the Supplementary Material. This map contains six polygons denoted by \(\mathrm{P}1\)--\(\mathrm{P}6\) and twelve road segments denoted by \(\mathrm{S}1\)--\(\mathrm{S}12\). This notation simplifies specifying movement patterns like \(\mathrm{S}3\!-\!\mathrm{S}10\!-\!\mathrm{S}11\), which denotes a path home (\(\mathrm{P}1\)) to the supermarket (\(\mathrm{P}4\)).

\begin{figure}[H]
	\centering
	\includegraphics[width=4.5in]{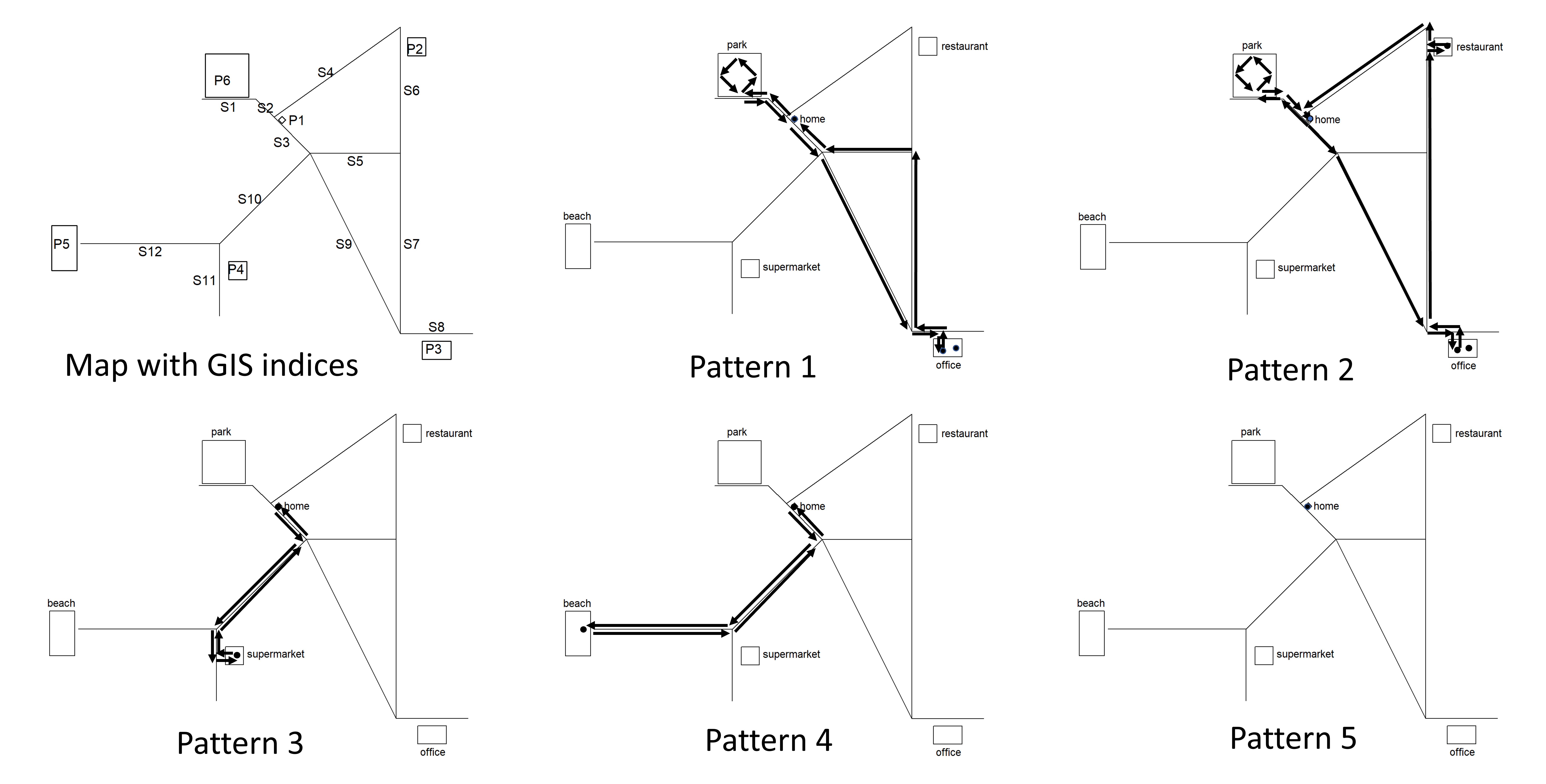}
	\caption{Synthetic map (top left panel) and five possible daily action
		vectors (other panels).  Full details appear in Supplementary Material.}
	\label{fig:map_pattern}
\end{figure}

The remaining panels in Figure~\ref{fig:map_pattern} illustrate five distinct daily movement patterns of a fictitious individual. For example, pattern~1 is represented by
\[
B_{1}=
\{\,
\mathrm{P}1,\;
\mathrm{S}3,\mathrm{S}9,\mathrm{S}8,\;
\mathrm{P}3,\;
\mathrm{S}8,\mathrm{S}7,\mathrm{S}5,\mathrm{S}3,\;
\mathrm{P}1,\;
\mathrm{S}3,\mathrm{S}2,\mathrm{S}1,\;
\mathrm{P}6,\;
\mathrm{S}1,\mathrm{S}2,\mathrm{S}3,\;
\mathrm{P}1
\}.
\]
Here $\mathrm{P1} = \textit{home}$, $\mathrm{P3} = \textit{office}$ and $\mathrm{P6} = \textit{park}$. The sequences
\(\mathrm{S3}\!-\!\mathrm{S9}\!-\!\mathrm{S8}\) and
\(\mathrm{S}8\!-\!\mathrm{S7}\!-\!\mathrm{S5}\!-\!\mathrm{S3}\) depict the individual's daily commute, while \(\mathrm{S3}\!-\!\mathrm{S2}\!-\!\mathrm{S1}\) describes an evening walk  from home to the park. 

\paragraph{Intra-polygon motion.}
The original time–geographic term station referred to 
being at a bounded locale. By representing a station as a polygon, we extend the notion of staying to include dwelling within it, allowing local movement without altering the activity state.
In Pattern~1, the \texttt{office} polygon \(\mathrm{P3}\)
might include both the worker’s desk area and a nearby commercial space; the individual can circulate freely between these sub-locations while the action vector records only the total office dwell time. Similarly, within the \texttt{park} polygon \(\mathrm{P6}\), the
individual may move continuously rather than remaining still; however, the model counts the entire interval as a single stay episode. For simplicity, the action vector encoding the daily activity pattern does not differentiate motion within a polygon (e.g., strolling  for 30 minutes vs. standing in the park for 30 minutes); only the total dwell time is considered and contributes to \(Z\).

Given a realization \((B,Z)\), the implicit time sub-interval from \(Z\) partitions \([0,1]\); the corresponding elements of \(B\) determine whether the individual is in a polygon (station) or traveling along segments (route) during each sub-interval. Hence, the map–based SMM encapsulates the daily trajectory \(S(t)\) as an ordered sequence of entities with dwell times. It preserves H\"{a}gerstrand’s station–route alternation while remaining clear and storage-efficient for describing daily activities. It also enables the analysis and simulation of daily human activity patterns.

\subsection{Clustering spatiotemporal trajectories}\label{subsec:clustering-trajectory}

Human mobility data exhibit significant day-to-day variability influenced by work schedules, social activities, and irregular events, even for the same individual. Mobility analysis aims to extract meaningful behavioral patterns from this variability by identifying representative daily patterns that summarize recurrent activity configurations. Clustering daily mobility patterns reduces high-frequency variation while preserving interpretable behavioral modes, enabling analyzes of routine stability, behavioral change, and exposure heterogeneity. In particular, day-level clustering allows researchers to distinguish structurally different activity patterns—such as workdays, weekends, or home-bound days—without imposing strong parametric assumptions on temporal ordering or spatial usage.

Let $\{(X_{d,j},t_{d,j})\}_{j=1}^{m_{d}}$ represent the GPS observations recorded on day $d$. Each observation is matched to its nearest entity, denoted by $\widehat{e}_{d,j}\in\mathcal{E}$ as in Section~\ref{sec:estimation}, and assigned weight $w_{d,j}= \tfrac{1}{2}(t_{d,j+1}-t_{d,j-1})$, with time adjustments for $j=1$ and $j=m_{d}$. We aggregate consecutive observations sharing the same matched entity by following these steps until no two adjacent entities are identical: (i) scan the sequence $(\widehat{e}_{d,1},\dots,\widehat{e}_{d,m_{d}})$ from left to right; and (ii) when $\widehat{e}_{d,j}=\widehat{e}_{d,j-1}$ occurs, merge the two records and add their weights. The result is the estimated action vector $\hat B_d$, along with the estimated proportion of aggregated time spent $\widehat{\mathbf{Z}}_{d}$ corresponding to each entity in the action vector
\[
\widehat{\mathbf{B}}_{d}
=\bigl(
e^{(d)}_{1},\dots ,e^{(d)}_{r_{d}}
\bigr),
\qquad
\widehat{\mathbf{Z}}_{d}
=\bigl(
z^{(d)}_{1},\dots ,z^{(d)}_{r_{d}}
\bigr).
\]
Here, $e^{(d)}_{j}$ is the entity label of the $j$-th block, $z^{(d)}_{j}$ is the sum of associated weights, and $\sum_{j=1}\limits^{r_{d}} z^{(d)}_{j}=1$. After determining the pair
$\bigl(\widehat{\mathbf{B}}_{d},\widehat{\mathbf{Z}}_{d}\bigr)$
 For each day $d\in\{1,\dots ,n\}$, the next task is to compare daily sequences to cluster similar activity patterns.

\subsubsection{Clustering daily activity}\label{subsec:clustering}

The Levenshtein distance \citep{Levenshtein1966} between two strings
is the minimum number of operations, such as insertions, deletions, and substitutions of a single character, required to transform one string into the other. By treating the compressed action vector $\widehat{\mathbf{B}}_{d}$ as a word with entity label letters
, we create a natural analogy: the edit operations represent changes in activity (specifically, entity visited) needed to transform one day’s pattern into another—inserting an absent activity, deleting one that did not occur, or substituting one activity for another at the same point in the sequence.  Each edit, therefore, represents a concrete behavioral adjustment made by the individual.

Sequence-based similarity measures compare mobility patterns as ordered activity sequences. For example, \citet{Yang2025Sequence} proposes a time-informed longest common subsequence (T-LCS) framework that incorporates temporal gaps and timestamps to align daily mobility sequences and derive similarity scores for clustering and pattern mining. Other research improves edit distance–based methods by incorporating spatial and temporal penalties, such as the augmented space–time–weighted edit distance of \citet{Zhai2019STWED}, where costs increase with spatial separation and temporal misalignment between activities.

Our approach differs from these methods in both construction and interpretation. Instead of aligning activities with clock time or penalizing spatio-temporal mismatches, we directly weight edit operations by dwell time. This adjustment ensures the resulting distance reflects the minimum time needed to transform the activity pattern on day~$d_{1}$ into that on day~$d_{2}$. This results in a dissimilarity metric with a clear behavioral interpretation in time units, suitable for clustering daily activity patterns within individuals, where variations in time allocation across activities are the main source of difference.

To implement this concept, we quantify how different the daily activity patterns are on two days $d_{1}$ and $d_{2}$ by calculating the edit distance between their respective compressed action vectors, $\widehat{\mathbf{B}}_{d_{1}}$ and $\widehat{\mathbf{B}}_{d_{2}}$. The resulting distance is symmetric and reflects the minimal behavioral changes needed to transform the sequence of visited entities on day $d_{1}$ into that on day $d_{2}$. We observe that not all changes in activities are equally important; for instance, substituting a brief stop is qualitatively distinct from redistributing several hours across multiple locations. To account for this heterogeneity, we define a dwell-time-weighted variant of the Levenshtein distance, in which the cost of each edit operation is proportional to the duration of the activity it modifies.

To reflect the importance of each modification, we assign a weight to every edit according to the dwell time of the corresponding entity: (i) inserting an entity $e$ from day~$d_{2}$ into the sequence of day~$d_{1}$ incurs a cost of $z_{e}^{(d_{2})}$, corresponding to its dwell time on $d_{2}$; (ii) deleting an entity $e$ from day~$d_{1}$ incurs a cost of $z_{e}^{(d_{1})}$; (iii) substituting $e\mapsto e'$ incurs a cost of $z_{e}^{(d_{1})}+z_{e'}^{(d_{2})}$, since the total dwell time needs to be reassigned. The time–weighted edit distance $d_{\text{TW}}(d_{1},d_{2})$ is the minimum total cost to convert $\widehat{B}_{d_{1}}$ into $\widehat{B}_{d_{2}}$. The metric helps answer the question: ``How much time must the individual reallocate to convert the pattern on day~$d_{1}$ into the activity pattern on day~$d_{2}$?'' This indicates that individuals' trajectories can be clustered into groups based on the action vector $B$. 

We construct the $n\times n$ distance matrix \(D_{ab}=d_{\text{TW}}(a,b)\) using the weighted Levenshtein metric. Since $D$ is a proper metric, any classical clustering method can be applied; we use agglomerative hierarchical clustering with single linkage, providing
 both a flat partition and a dendrogram for 
analysis. Days with unusually large average distances to others are
flagged as behavioral outliers.

\subsection{Temporal stability of human mobility patterns}

A central challenge in mobility research is to determine how many days of GPS tracking are necessary to reliably represent an individual’s spatial behavior. As emphasized by \citet{Adrianpaper}, identifying the minimal observation window is crucial for limiting participant burden and reducing both the monetary and privacy costs associated with extended data collection. In our framework, activity spaces are derived from GIS entities—polygons corresponding to fixed locations such as home, workplace, and discretionary venues, together with road segments that encode travel pathways. By conducting a stability analysis, we can evaluate how quickly the cumulative distribution of time spent across these entities converges to its full-period distribution and, in turn, infer how many days of data are required before the main features of a person’s mobility pattern (e.g., habitual destinations and frequently traveled routes) become evident.

To quantify the stability of an individual’s mobility pattern, \citet{Adrianpaper} divide the study region into a fine square grid \(\mathcal{G}=\{g_{1},\dots ,g_{N}\}\). For each day \(d\), they compute the proportion of time the individual spends in each grid cell, denoted \(\mathbf{p}^{(d)}=(p^{(d)}_{1},\dots ,p^{(d)}_{N})\). They then form the mean daily time-allocation vector over the first \(D\) days,
\[
\bar{\mathbf{p}}(D)=\tfrac{1}{D}\sum_{d=1}^{D}\mathbf{p}^{(d)}.
\]
For each \(D\in\{1,\dots,n\}\) and a fixed \(c\in[0,1]\), they sort the entries of \(\bar{\mathbf{p}}(D)\) and identify the smallest set of cells whose cumulative probability is at least \(c\). This yields the \emph{$c$–core} \(S_{c}(D)\), serving as the grid-based analog of an activity space at level \(c\).  The last–crossing time (LCT) is defined as
\[
\widehat{\mathrm{LCT}}_{\text{grid},c}(\gamma)
=\max_{D=1,\dots,n}
\Bigl\{
D:\,
\tfrac{|S_{c}(D)\triangle S_{c}(n)|}
{|S_{c}(n)|}
>\gamma
\Bigr\},
\]
that is, the latest day on which the set of high-use cells still differs from its final configuration by more than the tolerance \(\gamma\). Here, \(\triangle\) denotes the symmetric difference and \(|\cdot|\) the cardinality of a set. This measure characterizes how rapidly the core spatial region stabilizes as additional days of observation are incorporated.

In our object-oriented framework, grid cells are replaced by GIS entities
\(\mathcal{E}=\mathcal{A}\cup\mathcal{L}\), where \(\mathcal{A}\) denotes polygons and \(\mathcal{L}\) denotes road segments. For the first \(D\) days, we construct cumulative $c$–level composed activity spaces
\[
\underbrace{\mathcal{AS}^{\mathcal{A}}_{c}(D)}_{\text{polygon part}}
\;\cup\;
\underbrace{\mathcal{AS}^{\mathcal{L}}_{c}(D),}_{\text{road--network part}}
\]
where \(\mathcal{AS}^{\mathcal{A}}_{c}(D)\) is \(100c\%\) polygon--based activity space observed up to day \(D\),
and \(\mathcal{AS}^{\mathcal{L}}_{c}(D)\) is the  \(100c\%\) of road-based activity. These sets are an individual’s core entity collections in the built environment. Building on this concept, we evaluate stability using the GIS entities introduced in our framework: polygons representing stationary activity locations and road segments representing travel routes. Because these two layers describe different behavioral processes—stationary versus mobile activity—we assess their stability separately. This separation also maintains consistency with our definition of activity spaces, estimated independently for both polygon and network components.

We define two LCT measures to monitor the temporal stabilization of polygon and road-based activity spaces:
\begin{eqnarray}
    \widehat{\mathrm{LCT}}^{\mathcal{A}}_{c}(\xi)
	& = & \max_{D=1,\dots ,n}
	\Bigl\{
	D:
	\frac{
		\lvert
		\mathcal{AS}_{c}^{\mathcal{A}}(D)
		\triangle
		\mathcal{AS}_{c}^{\mathcal{A}}(n)
		\rvert}{
		\lvert
		\mathcal{AS}_{c}^{\mathcal{A}}(n)
		\rvert}
	>\xi
	\Bigr\},\\ \nonumber
    \widehat{\mathrm{LCT}}^{\mathcal{L}}_{c}(\xi)
	& = &\max_{D=1,\dots ,n}
	\Bigl\{
	D:
	\frac{
		\lvert
		\mathcal{AS}_{c}^{\mathcal{L}}(D)
		\triangle
		\mathcal{AS}_{c}^{\mathcal{L}}(n)
		\rvert}{
		\lvert
		\mathcal{AS}_{c}^{\mathcal{L}}(n)
		\rvert}
	>\xi
	\Bigr\}.
	\label{eq:LCT-level}
\end{eqnarray} 

Here, \(\widehat{\mathrm{LCT}}^{\mathcal{A}}_{c}(\xi)\) and \(\widehat{\mathrm{LCT}}^{\mathcal{L}}_{c}(\xi)\) denote the final days when the polygon-based and road-based activity spaces, respectively, still differ from their full-period versions by more than a fraction \(\xi\). Smaller values reflect quicker stabilization of frequently visited polygons or segments, while larger values indicate ongoing spatial evolution, such as the gradual integration of new buildings or travel routes into an individual’s routine. This two-tier formulation maintains LCT interpretability while accommodating varying spatial resolutions of built-environment features. It generalizes the grid-cell approach of \citet{Adrianpaper} to a spatial object-oriented setting, assessing stability separately for stationary and network-based activity spaces.

\subsubsection{Understanding LCT values concerning daily activity patterns}\label{subsec:LCT and pattern}
The polygon- and road-based LCT values can be understood by tracking how the cumulative distribution of time spent across the polygon or road segment sets approaches the full-period distribution as more days are included. Since each polygon or segment contributes a fixed share of daily time use, the activity space at level~$c$ is defined as the smallest collection of entities whose cumulative probability reaches~$c$. Entities with larger time shares are included in the activity space at lower coverage levels, while those with smaller positive time allocations are included only at higher coverage levels. Stability is evaluated by comparing, for each $D$, the interim activity space $\mathcal{AS}^{\mathcal{A}}_{c}(D)$ and $\mathcal{AS}^{\mathcal{L}}_{c}(D)$ to their corresponding full-period activity space $\mathcal{AS}^{\mathcal{A}}_{c}(n)$ and $\mathcal{AS}^{\mathcal{L}}_{c}(n)$.

The notion of PN space and associated mobility patterns helps explain the mechanisms that produce large LCT values. Such high values occur when a person’s day-to-day movement is governed by distinct behavioral regimes that shift over time. For example, an individual may follow a temporary routine for the first several weeks (such as during a vacation or an unusual work schedule) and only later settle into a stable home–work routine once regular employment begins. In this situation, the cumulative distribution based on the first $D$ days is heavily shaped by the initial, short-lived pattern and thus differs substantially from the distribution over the full observation window, which is mainly driven by the later, more persistent regime. Only after enough days under this new, stable pattern have been included does the cumulative distribution begin to converge toward the long-term distribution, which in turn yields large LCT values.

The LCT curves are not necessarily monotonic in their coverage. This phenomenon can also be observed in our simulation experiments; see Section \ref{subsec: real-data:LCT} and the Supplementary Material for examples. To explain this, we illustrate the mechanism using the polygon-based activity space.
The LCT can achieve large values at coverage levels where a new polygon first enters the activity space for target level $c$, which is close to the cumulative probability of the marginal polygon. For such $c$, even small discrepancies between the cumulative distributions computed from the first $D$ days and from all $n$ days can influence whether this polygon is included in $\mathcal{AS}^{\mathcal{A}}_{c}(D)$. As a result, polygon collections for mid-period days may differ from the full-period collection because the threshold passes through a sensitive area of the cumulative distribution. This produces LCT curves that may exhibit a high value at one coverage level, a lower value at the next, and then a renewed increase once a different polygon becomes marginal, even if the underlying behavior from day to day is relatively stable.

When daily mobility alternates among stable routines, time spent across polygons is consistent each day. The cumulative distribution then converges rapidly to its long-run shape, producing low LCT values. This interpretation aligns with \citet{schneider2013unravelling}, indicating that many people cycle through recurring mobility motifs, which creates regularity in aggregate spatial behavior.

The LCT characterizes how quickly an individual's core spatial behavior can be inferred from daily observations. High LCT values occur either when there are genuine changes in mobility regimes or when the measure is sensitive to thresholds as new polygons are incorporated into the activity space. Low LCT values, by contrast, signal stable, repetitive mobility patterns. Assessing this stability clarifies when an activity space can be treated as statistically reliable and how regularity or shifts in behavior manifest over time in its evolution.

\section{Analysis of GPS data}\label{sec:real-data}

We apply our methodology to the GPS data in Section \ref{sec:data}.

\subsection{Estimation of time spent proportion and activity space} 
\begin{figure}[htbp]
	\centering
	\includegraphics[width=0.48\linewidth]{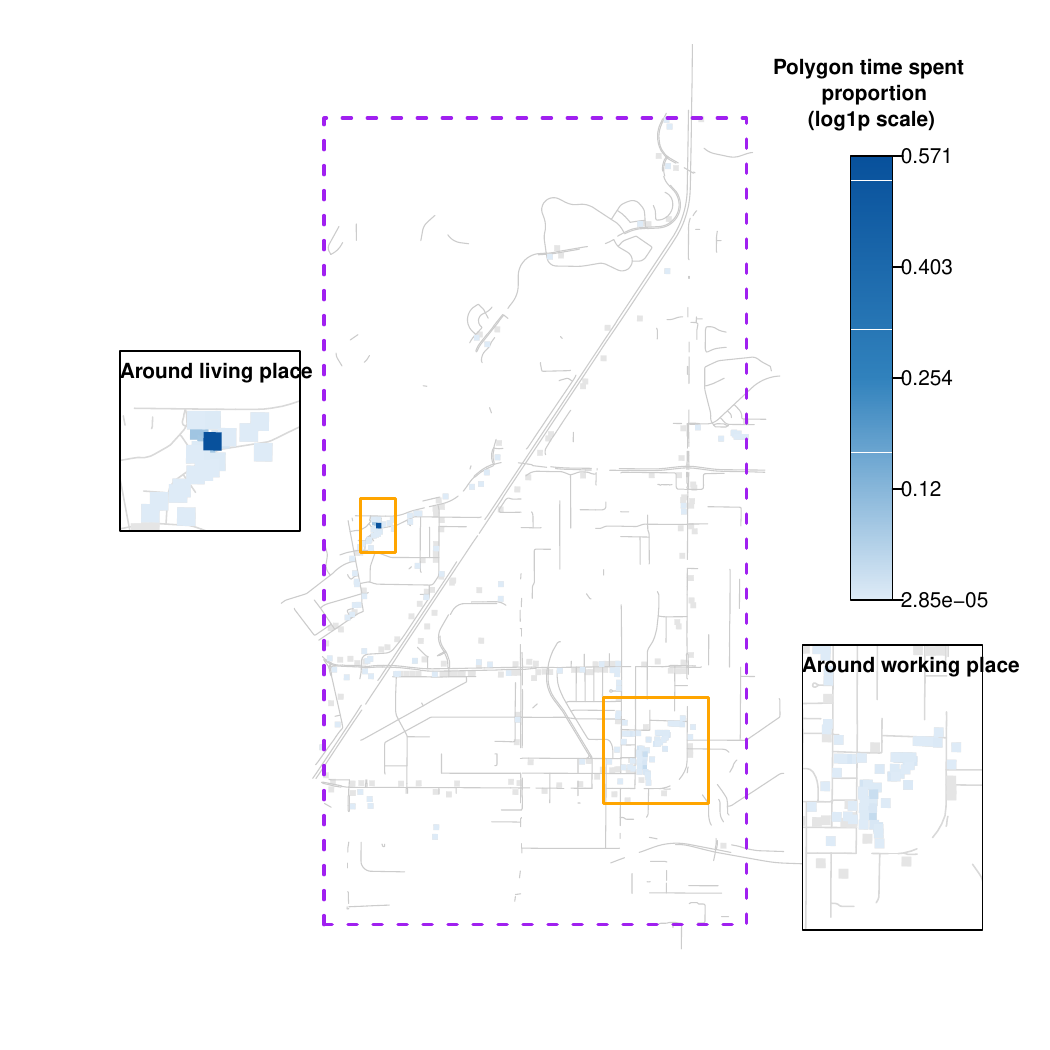}
	\includegraphics[width=0.48\linewidth]{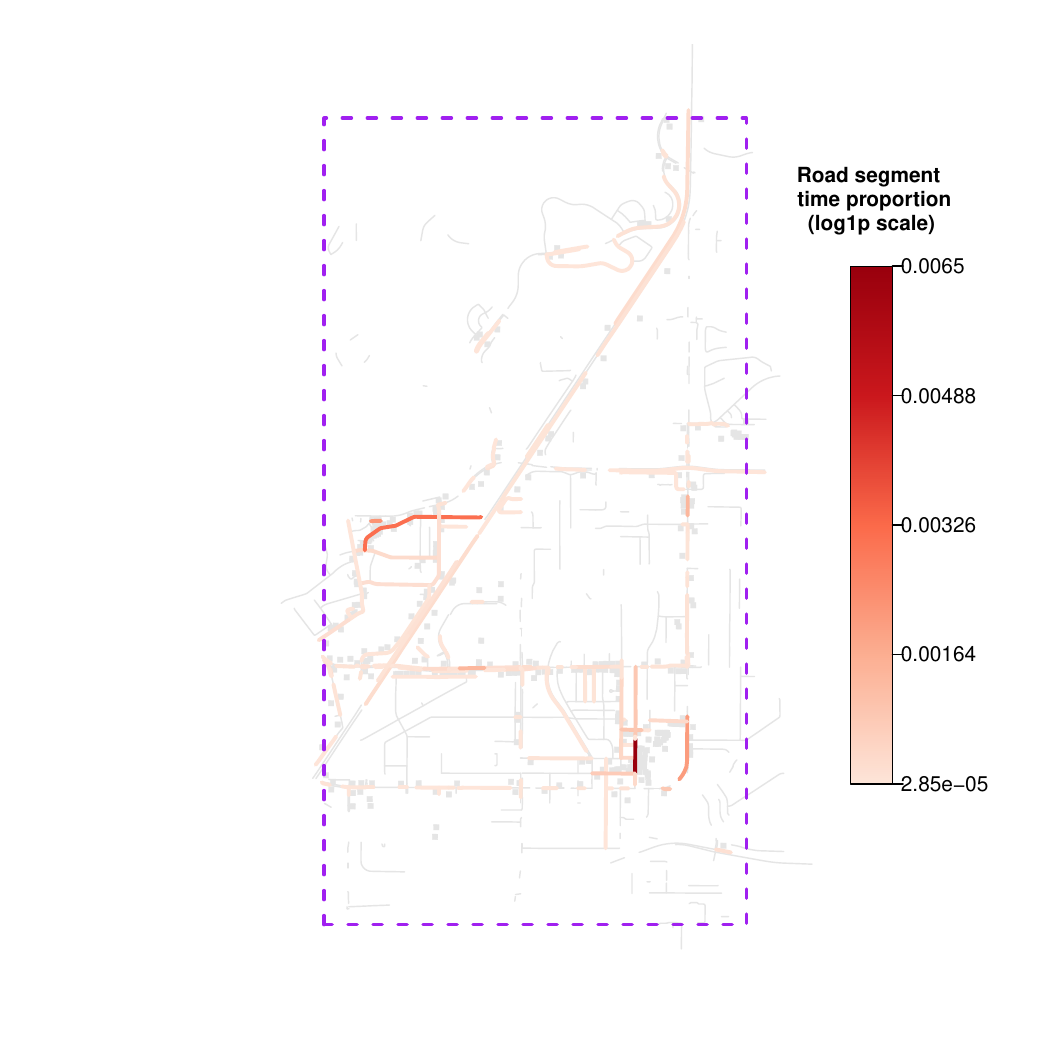}
	\includegraphics[width=0.32\linewidth]{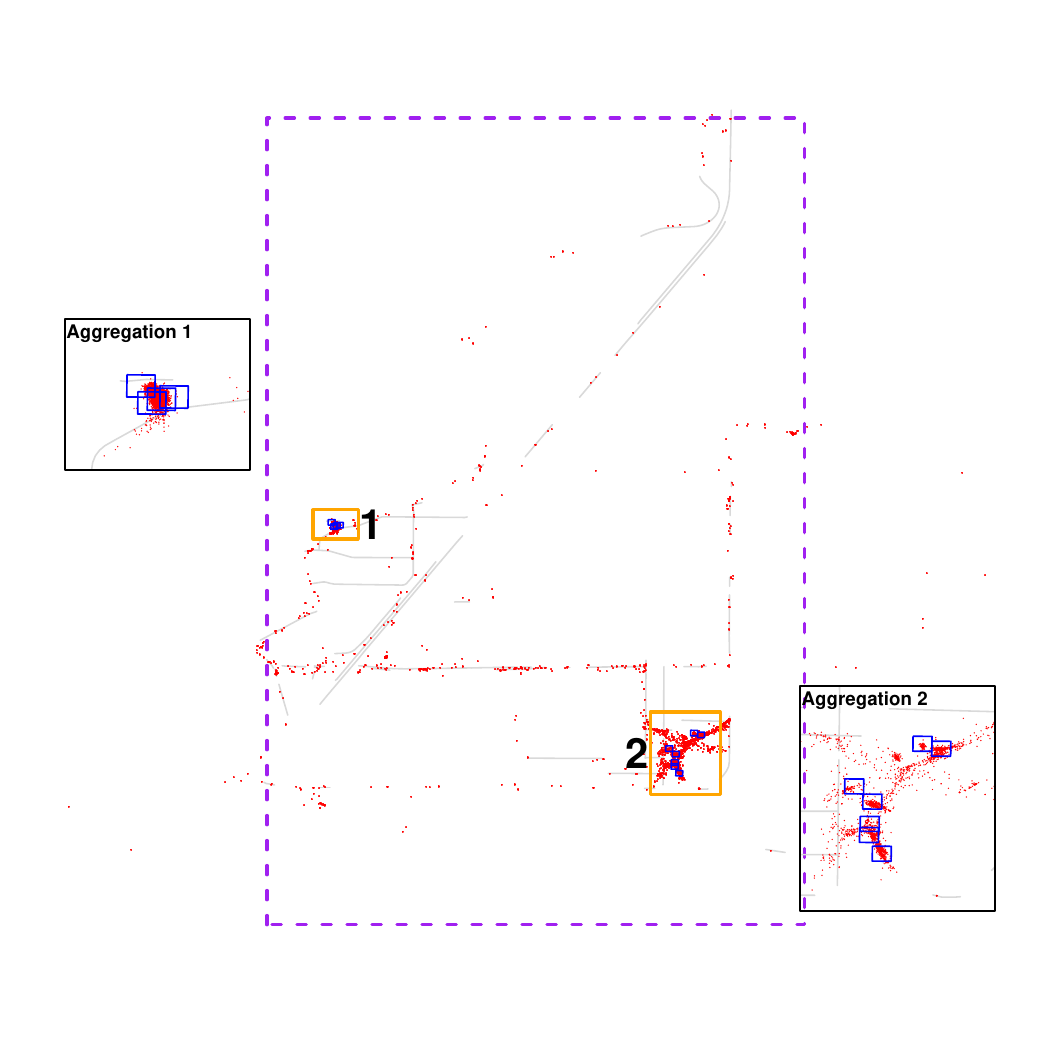}
	\includegraphics[width=0.32\linewidth]{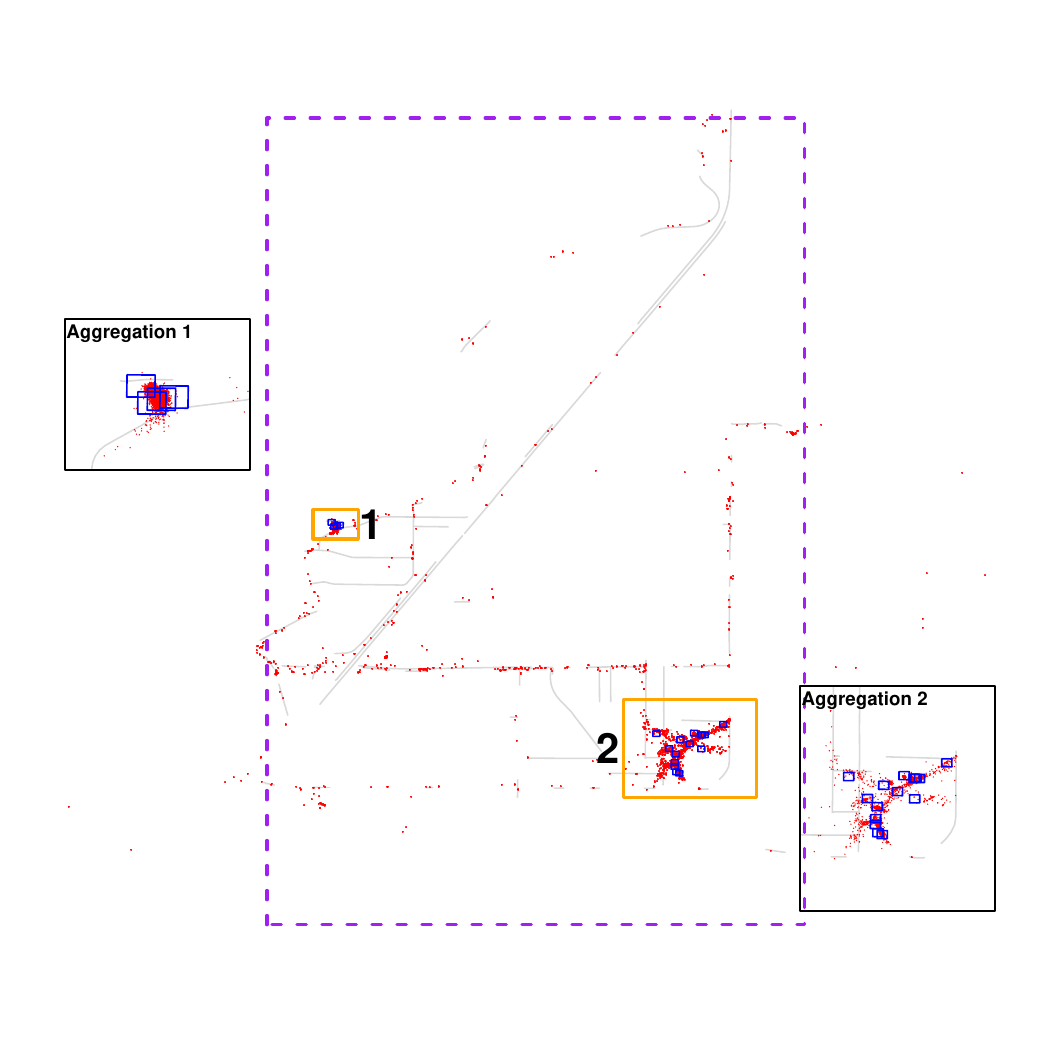}
	\includegraphics[width=0.32\linewidth]{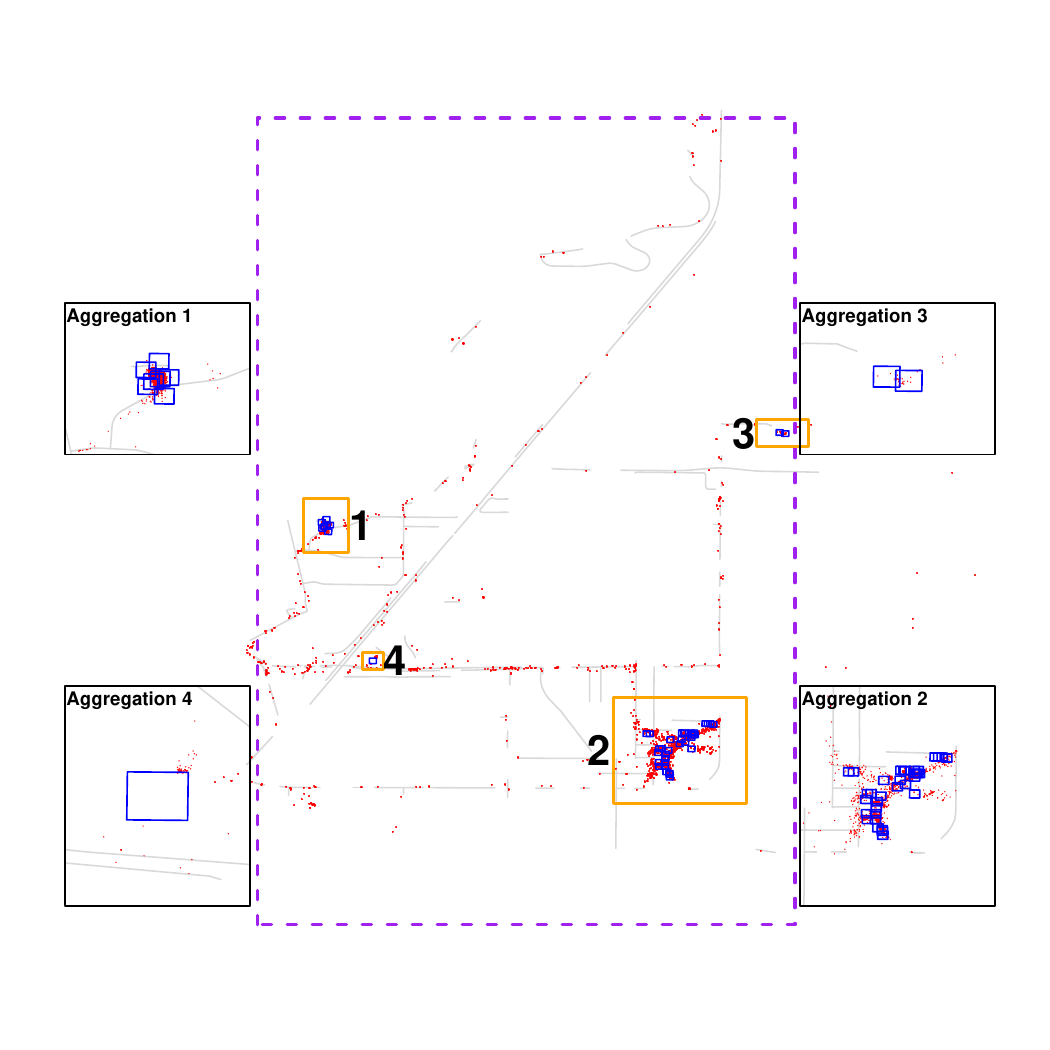}		
	\caption{First row: Spatial distribution of estimated time-spent proportions across
		 estimated time-spent proportion in building polygons (left panel) and on road segments (right panel). Darker blue indicates higher $\log(1+p_e)$; grey denotes zero. Second row: Estimated composed activity spaces at three coverage levels ($90\%$, $95\%$, and $99\%$). Blue polygons correspond to building-based spaces (stationary exposure) with insets highlighting aggregations of buildings; red dots represent GPS observations. Aggregation~1 is identified as the living place, aggregation~2 as the main daytime hub, aggregation~3 (at the 99\% level) as a functionally similar site to aggregation~2, and aggregation~4 (at the 99\% level) as a commercial space. Grey lines denote the road network, with highlighted red segments indicating road-based activity spaces (transitional exposure).}
	\label{fig:time_spent_maps}
\end{figure}

We begin with a pre-processing stage where raw GPS measurements and GIS data are combined to derive a compact set of spatial entities. We define an individual-specific study area based on GPS traces, extract the relevant road network and building polygons, and apply distance-based filtering and aggregation to reduce spatial fragmentation in dense environments. This procedure ensures that the final spatial entities represent meaningful activity locations and are relatively insensitive to GPS measurement error. Further information on GIS selection, bounding-box definition, road-network filtering, and polygon aggregation is provided in the Supplementary Material.

After pre-processing, we compute the time spent within each spatial entity. 
Let $\mathcal{A}$ denote the final collection of building-based activity polygons and $\mathcal{E}$ the chosen set of road segments. 
For every entity $e \in \mathcal{A} \cup \mathcal{E}$, we determine its proportion of time spent using Eq.~\eqref{esti:Tk}, which transforms the GPS observations located in that entity into an estimate of the share of time that the individual devotes to that place. These estimated time shares form the basis for the later construction of activity space analyses, daily activity profiles, and subsequent clustering and stability analyses. For visualization, we present the time spent proportion estimates for polygons and road segments in separate subpanels. A $\log(1+x)$ transformation is applied to mitigate heavy-tailed values and enhance the visibility of low-intensity areas; locations with zero time are depicted in gray. The first row of Figure~\ref{fig:time_spent_maps} displays the estimated time spent proportions for polygons and road segments.

The left panel in the top row of Figure~\ref{fig:time_spent_maps} reveals that stationary exposure is highly concentrated: a small number of polygons accounts for the majority of time spent, consistent with recurrent anchor locations (e.g., home or central hubs). The inset, showing polygons to the left of the study area, highlights an aggregation of polygons around the residence. A second aggregation appears in the bottom-right, corresponding to primary locations where the individual spends daytime. The right panel in the top row of Figure~\ref{fig:time_spent_maps} shows that transitional exposure spreads across the road network, complementing the polygon-based view of mobility by capturing where people dwell (polygons) and how they travel between locations (roads). Multiple potential routes are visible between home and other destinations, some of which are occasionally utilized. When traveling, the main roads are used more frequently than local road segments farther from home and daytime activity polygons. Moreover, the distribution of time spent on road segments is heavy-tailed: a small subset of segments near home and main daytime activity locations accounts for a disproportionately large fraction of total travel time. In addition to reflecting lower speeds near these locations (on local roads and during departures or arrivals), this pattern indicates frequent use of these segments for commuting and social activities.

After estimating the time proportions spent in each polygon and road segment, we construct polygon- and road-based activity spaces for the chosen coverage level $\alpha \in \{90\%, 95\%, 99\%\}$, i.e., the composed activity spaces. We depict them in the second row of Figure~\ref{fig:time_spent_maps}. The marked polygons in each panel indicate the dominant polygon aggregations for stationary exposure. In all three panels, two prominent polygon aggregations are clearly visible. Polygon aggregation 1 is located in a residential area, likely including the participant’s home, where most time is spent. Polygon aggregation 2 represents the primary daytime activity hub, comprising multiple buildings where the participant conducts daily activities. Insets highlight these polygon aggregations, while smaller clusters capture infrequent or peripheral visits within the $99\%$ activity space shown in the rightmost panel of the second row of Figure~\ref{fig:time_spent_maps}. Polygon aggregation 3 functions like aggregation 2, while polygon aggregation 4 corresponds to a commercial space visited by the participant.

The road segments that comprise the activity spaces show how transitional exposure is distributed across the network, with a small portion of these segments accounting for most travel. Raising the coverage threshold from $90\%$ to $99\%$ enlarges the spatial extent: the $90\%$ activity space for road segments contains the primary route frequently used between home and the main daytime hub, along with another major road connecting to aggregation 3 of polygons. The $95\%$ composed activity space is similar to the $90\%$ space, while the $99\%$ space expands into more peripheral areas, encompassing aggregations 3 and 4, in addition to less frequently traveled routes. This stepwise expansion highlights the heavy-tailed nature of human mobility, where a small number of anchors and routes account for most exposure, while numerous other locations are visited infrequently.

\subsection{Clustering daily activity trajectories}
\begin{figure}[htbp] 
	\centering
	\includegraphics[width=0.32\linewidth]{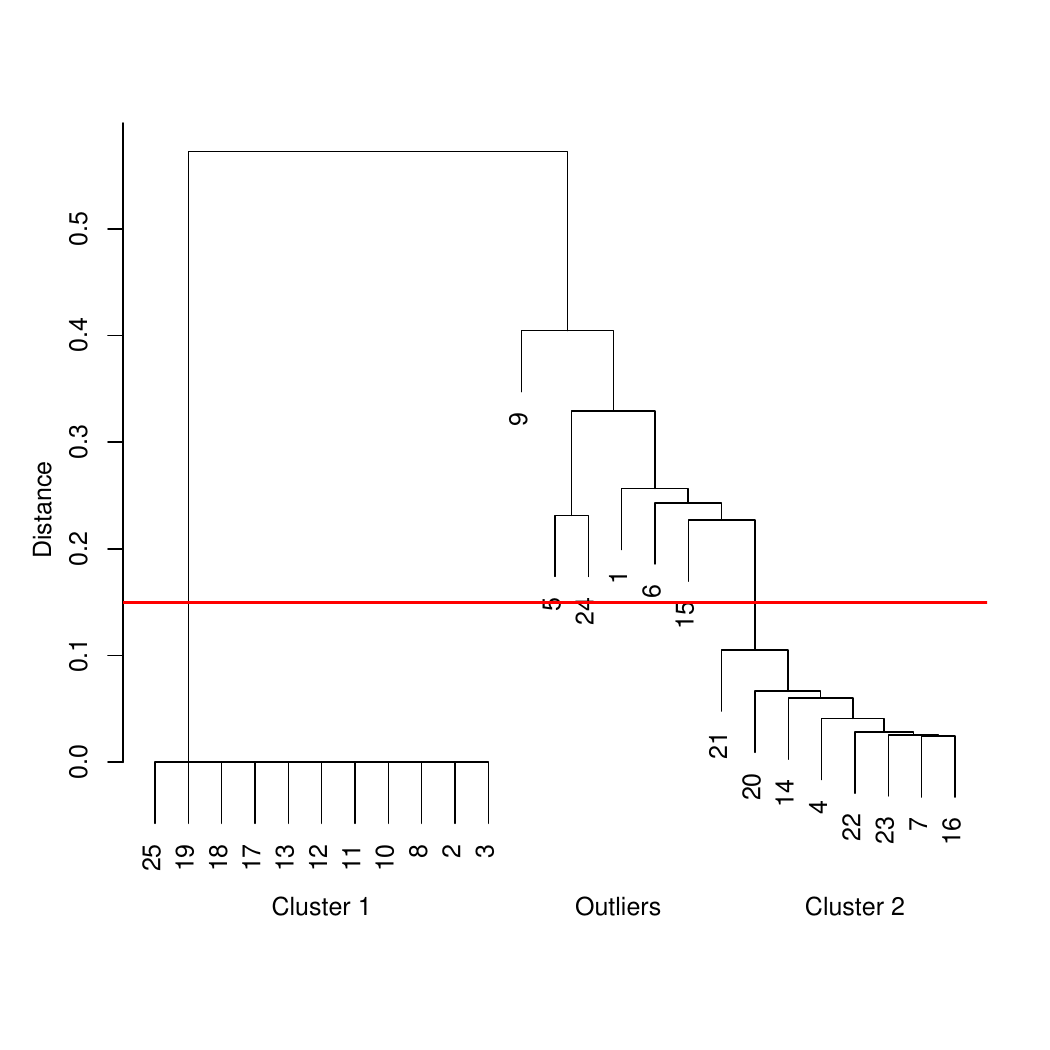}
	\includegraphics[width=0.32\linewidth]{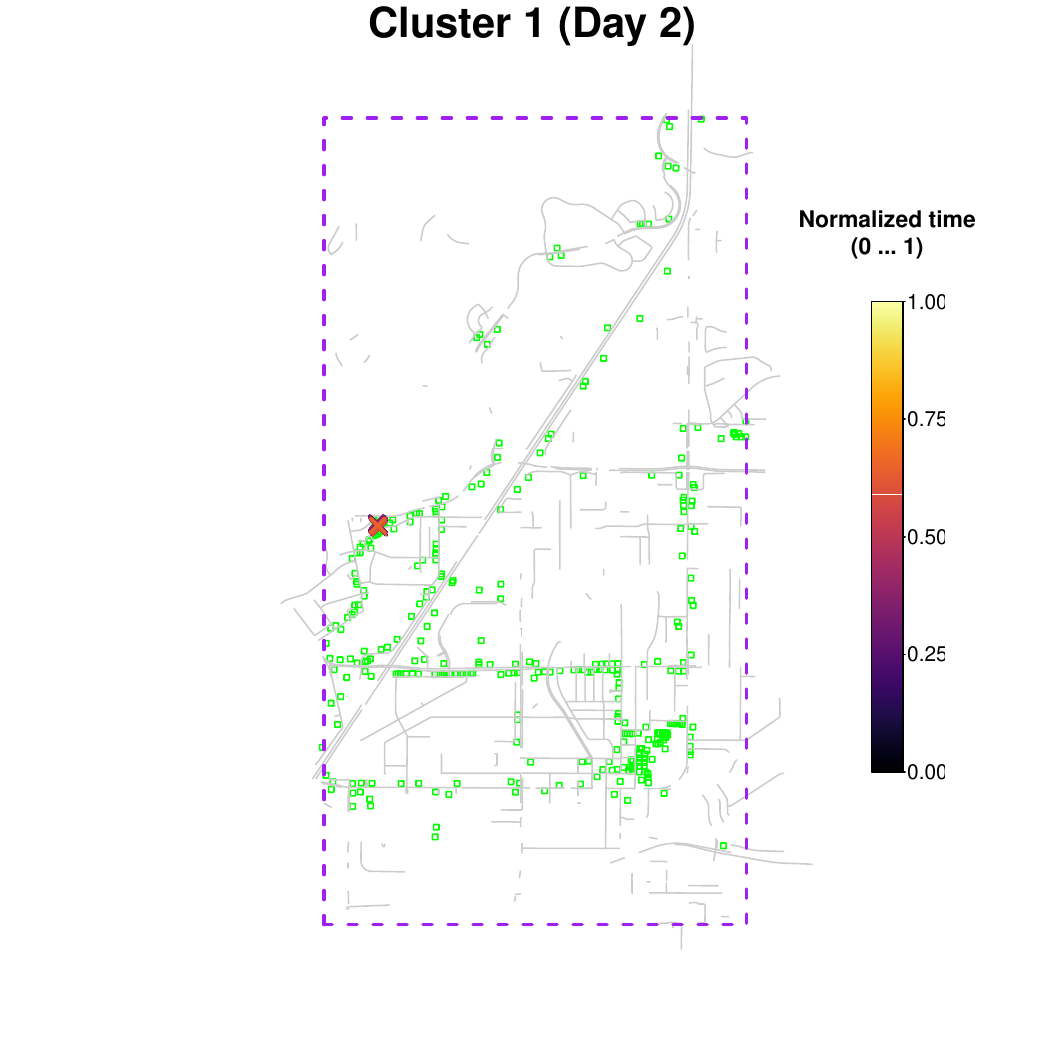}
	\includegraphics[width=0.32\linewidth]{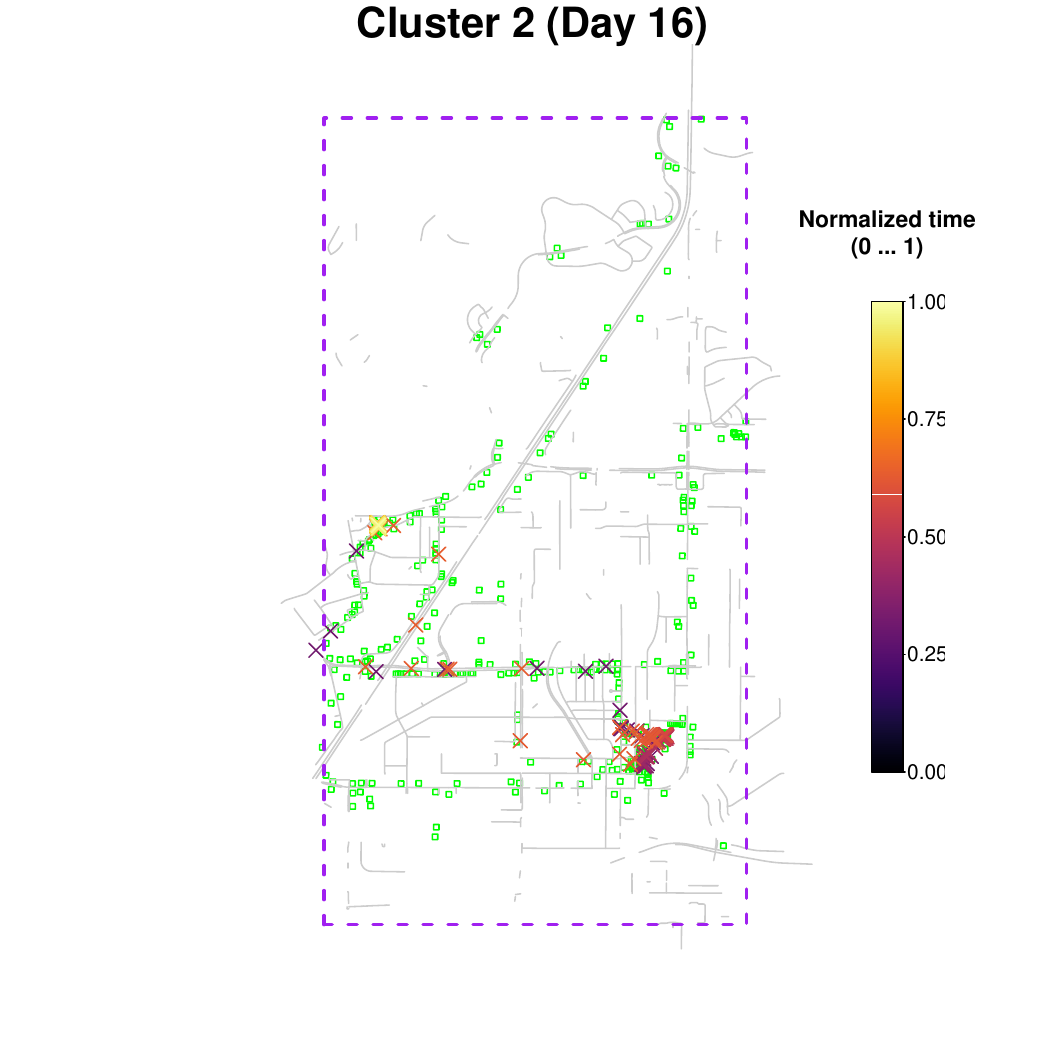}
	\caption{Left panel: hierarchical clustering dendrogram of daily trajectories (single linkage, Levenshtein distance weighted by dwell time). Two main clusters are identified, along with a few outlier days. Middle and right panels: representative daily mobility patterns randomly chosen from clusters 1 and 2.}
	\label{fig:clustering}
\end{figure}

To more precisely capture day-to-day heterogeneity, we clustered daily activity trajectories using their time-spent distributions. For each day, the ordered list of visited entities (buildings and road segments) was encoded as a symbolic trajectory, and pairwise dissimilarities were derived via a Levenshtein distance weighted by dwell time. This distance measure incorporates both the ordering of locations and the proportion of time spent at each. We then performed hierarchical clustering with single linkage on the resulting distance matrix.

The left panel of Figure~\ref{fig:clustering} shows the hierarchical clustering dendrogram, indicating that the days separate into two main clusters and a small group of outlier days. Cluster~1 includes days when the individual remained mostly at home with minimal movement. Cluster~2 consists of days characterized by regular travel between home and the main daytime hub, with occasional minor detours or activities. The outlier days exhibit distinct patterns that do not align with either of the two main clusters. Representative days from each cluster highlight internal similarities and minor variations. To illustrate our findings, we randomly select one day from cluster~1 and one from cluster~2. The middle panel of Figure~\ref{fig:clustering} displays all observations from the selected day in cluster~1, reflecting a largely stationary pattern at home. The right panel of Figure~\ref{fig:clustering} shows all observations from a randomly chosen day in cluster~2, reflecting a travel pattern between home and the primary daytime hub.

\subsection{Polygon- and road-based LCT results}\label{subsec: real-data:LCT}

\begin{figure}[htbp]
	\centering
    	\includegraphics[width=0.32\linewidth]{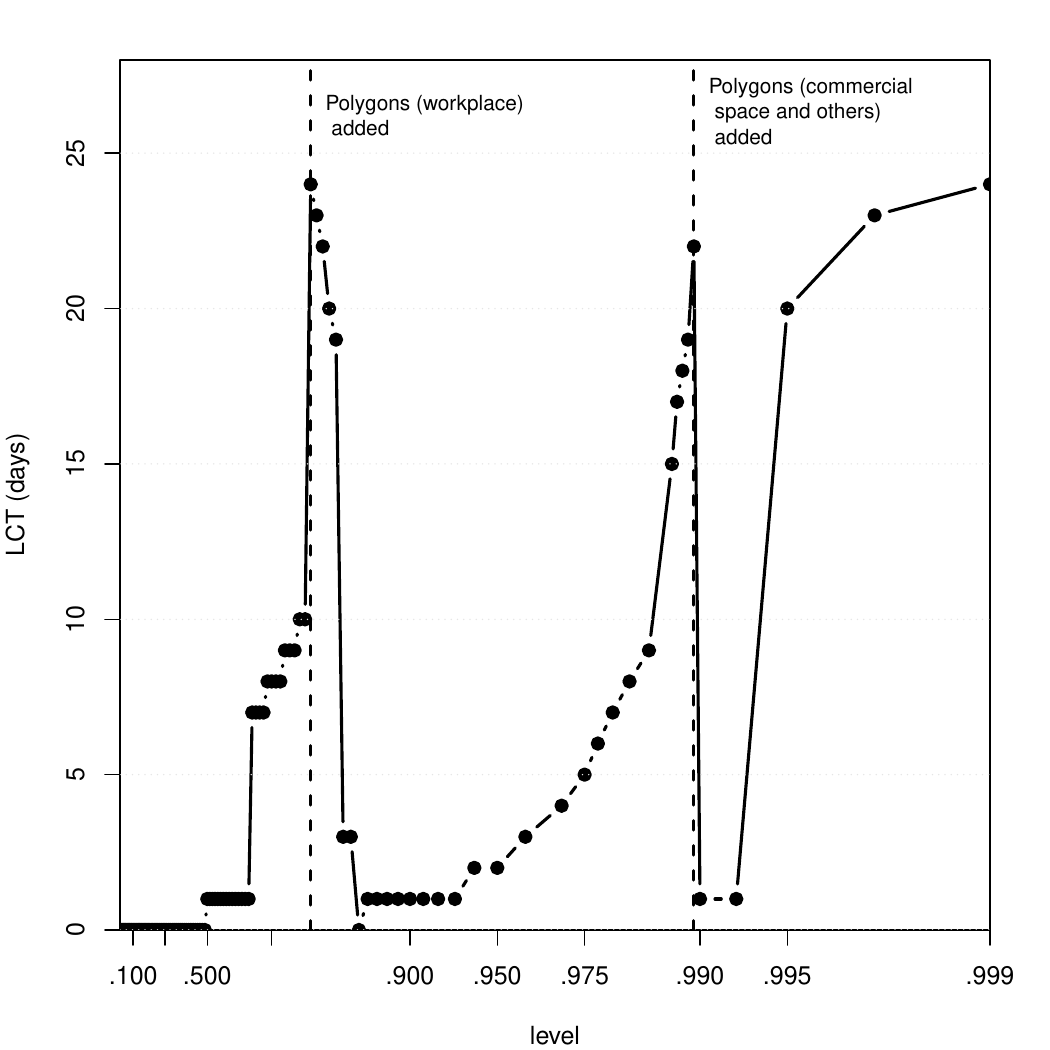}
	\includegraphics[width=0.32\linewidth]{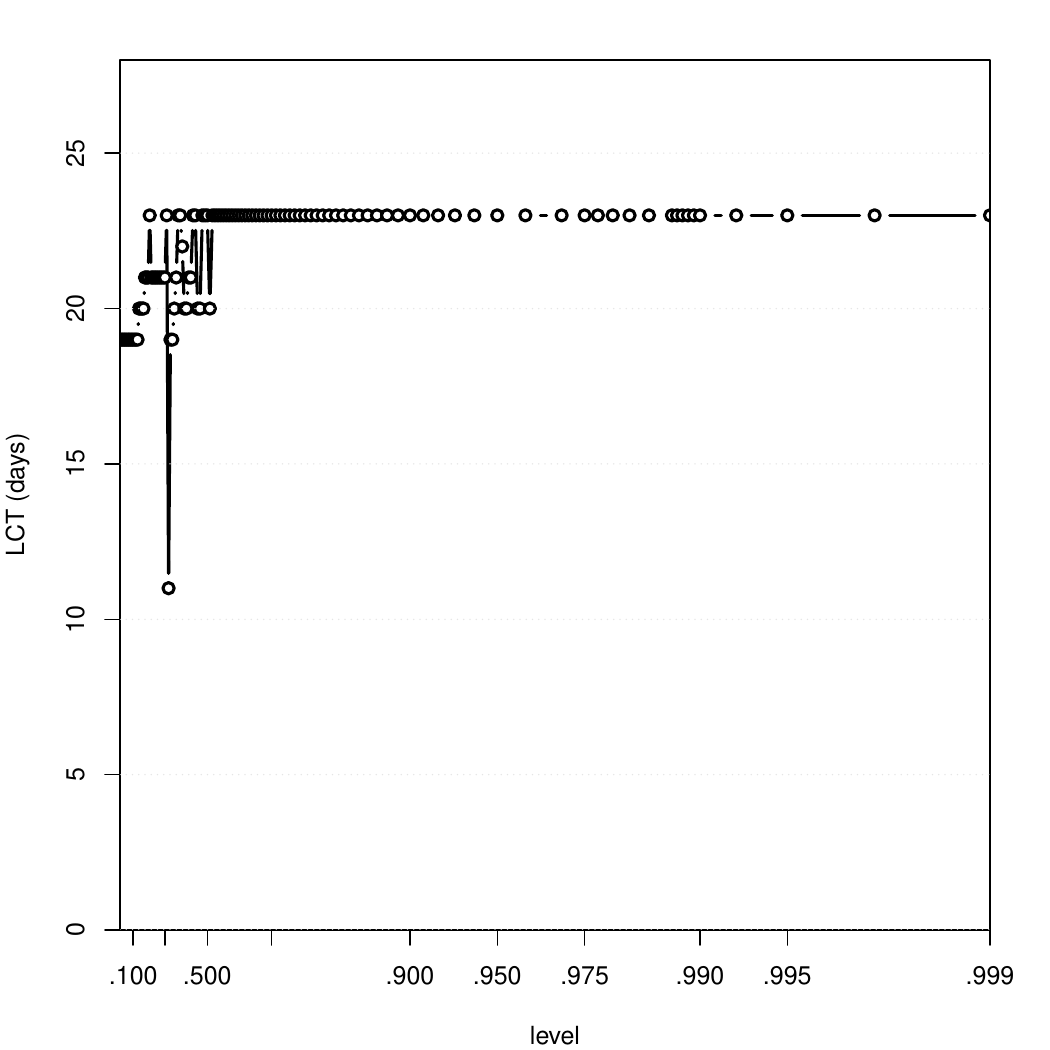}
	\includegraphics[width=0.32\linewidth]{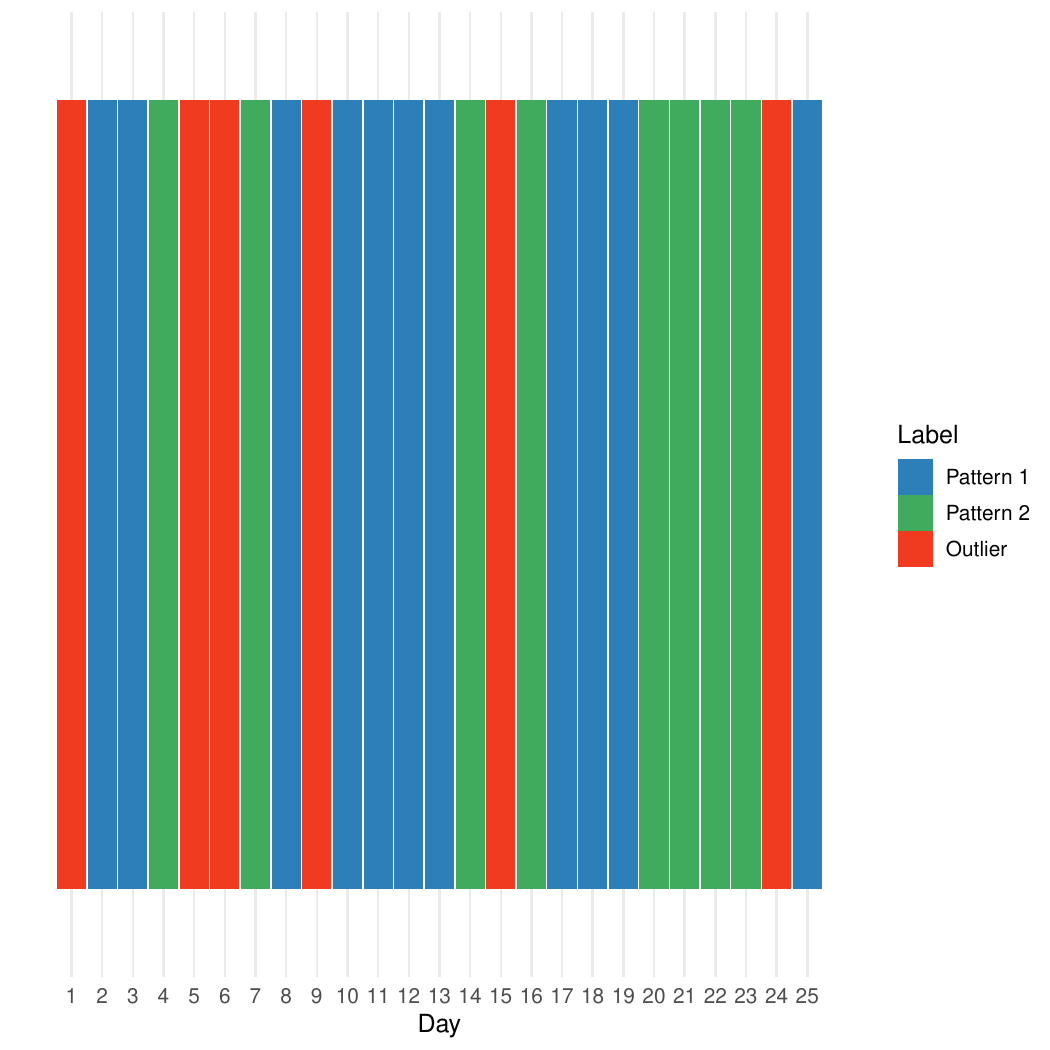}
	\caption{
		Left panel: Last--crossing times (LCTs) for polygon--based activity spaces across different levels~$c$. Dashed vertical lines mark the inclusion of new polygons: as the level of activity space increases
		polygon for workplace at $c=0.78$ and polygons  and other occasional sites at $c=0.99$. Middle panel: LCTs for road--based activity spaces, which show a generally higher and more stable trend, reflecting persistent variability in route choices. Right panel: Daily pattern labels classified into three states (Pattern~1, Pattern~2, and \emph{Outlier}) illustrate alternating patterns of daily mobility and occasional irregular days.
	}
	\label{fig:LCT-all}
\end{figure}

Figure~\ref{fig:LCT-all} provides an overview of the temporal stability of both polygon- and road-based activity spaces, together with the day-to-day mobility patterns observed over the 25-day study period. The left panel of Figure~\ref{fig:LCT-all} presents the LCTs for the polygon-based activity space across various coverage thresholds. At low coverage levels ($c \le 0.7$), LCT values remain near zero, suggesting that the activity space stabilizes quickly around a single predominant polygon. This polygon represents the person’s home, indicating that time spent at this main stationary location becomes consistently dominant early in the observation period.

As the coverage level approaches $c=0.78$, the workplace polygon becomes contained within the activity space, causing a sharp rise in the LCT. As outlined in Section~\ref{subsec:LCT and pattern}, this jump illustrates the sensitivity of the LCT to the threshold values at which additional polygons enter the activity space. At this coverage level, the workplace polygon is near the boundary of the cumulative probability, so minor differences between the cumulative distributions from the first $D$ days and the entire observation window can determine whether the polygon is included in $\mathcal{AS}^{\mathcal{A}}_{c}(D)$. Consequently, the provisional activity spaces based on early days differ from the complete activity space, generating high LCT values despite the established home–work pattern.

In line with this threshold-sensitive behavior, the LCT declines steeply at slightly higher coverage levels once the workplace polygon is consistently included. Beyond this stage, additional increases in $c$ no longer introduce new dominant stationary sites, and the cumulative activity distribution converges quickly. An analogous pattern arises as the coverage level nears $c \approx 0.99$, when polygons associated with commercial areas and other rarely visited places are incorporated into the activity space. These sites receive small but nonzero time allocations and therefore only appear near the upper tail of the cumulative distribution. Their incremental addition again induces a localized rise in the LCT, followed by a rapid drop once these polygons are fully integrated.
Taken together, the polygon-based LCT profile exhibits a distinct hierarchical organization: a very stable core of routine locations (home), then a regularly frequented secondary location (work), and finally, an extended tail of occasional destinations.

The middle panel of Figure~\ref{fig:LCT-all} illustrates the road-based LCTs. Unlike the polygon-based findings, the road-based LCT values remain comparatively high and stable across almost all coverage thresholds. This pattern suggests a slower convergence of the road-based activity space and points to sustained day-to-day variation in route selection. Although the individual frequently uses a core subset of roads for routine travel, sporadic detours, alternative paths, and exploratory trips continually introduce additional road segments, thereby postponing the point at which the road network representation stabilizes. The right panel of Figure~\ref{fig:LCT-all} shows the daily mobility pattern labels for all 25 days, assigning each day to Pattern~1, Pattern~2, or an \emph{Outlier}. The switching between Pattern~1 and Pattern~2 reflects recurring mobility regimes, such as workdays versus non-workdays, whereas isolated outlier days capture atypical schedules or unusual travel behavior.

The three panels highlight complementary aspects of participants' mobility dynamics. Stationary activity spaces stabilize quickly but exhibit localized rises in LCT with new polygons; road-based activity spaces achieve stability more gradually due to variable movement paths; and daily pattern classification reveals short-term behavior shifts and rare anomalies within an otherwise stable long-term spatial routine.

\section{Conclusion and future research directions}\label{sec:conclusion}

This article introduces a polygon-network framework to characterize human activity spaces from GPS data collected in built environments. Our framework represents daily mobility through time spent in spatial polygons and along road segments, focusing on entity-specific time-use proportions and their associated level-$\gamma$ activity spaces as the main inferential targets. It includes a time-weighted estimator for irregularly sampled GPS data, a map-enhanced depiction of daily activity patterns, a dwell-time-weighted distance metric for clustering trajectories, and stability measures for polygon-based and road-based activity spaces. Simulation studies and an empirical application demonstrate that appropriate weighting is crucial under irregular sampling, as stationary and movement components can exhibit markedly different empirical patterns. Our methodology provides a foundation for statistical inference on mobility summaries in GIS-based environments and outlines a promising direction for integrating activity-space methods with object-oriented and network-oriented spatial statistics \citep{menafoglio-secchi-2017,NKDEBook,baddeley-et-2021}.

Several extensions warrant further investigation. One avenue for future research is integrating more complex stochastic models for GPS uncertainty and formal map-matching, ensuring preprocessing uncertainty is preserved in subsequent estimation and clustering steps \citep{ranacheret-2016,newson2009hidden,lou2009map-matching}. A second avenue is designing population-level variants of the framework to facilitate systematic comparisons across individuals and subgroups, linking polygon-network activity spaces to contextual exposure assessments in health and social science applications \citep{RN149,RN144,RN146}. 

\section{Acknowledgments}

H.~W. and Y.-C.~C. gratefully acknowledge the support of NSF DMS-2310578, NSF DMS-2141808 and NIH U24-AG072122. A.~D. gratefully acknowledges the support of NIH 1R01MH133488 and NIH 1R01MH131480. The funders had no role in the study design, data collection and analysis, decision to publish, or preparation of the manuscript.


\pagebreak
\bigskip
\begin{center}
	{\large\bf SUPPLEMENTARY MATERIAL}
\end{center}

\appendix
\section{Handling key issues in the analysis of GPS data}
\subsection{Privacy-preserving rendering of application maps}
\label{app:privacy_rendering}

The application maps involve fine-scale spatial information, including local road geometry, building footprints, and GPS observations. Such information can be visually distinctive even when addresses and place names are removed. We therefore used a privacy-preserving rendering procedure for all figures displaying road segments or polygonal spatial entities. The procedure was applied only to the graphical layers for producing figures. It did not modify the data used for estimation, the definition of polygon and network entities, the assignment of observations to entities, or the resulting time-use and activity-space estimates.

\subsubsection{Road-network thinning}
Let $\mathcal{R}$ denote the primary road-segment layer and $\mathcal{X}=\{X_1,\ldots,X_n\}$ the set of observed GPS locations. For each road segment $r\in\mathcal{R}$, we computed its minimum distance to the observed GPS locations,
\[
d(r,\mathcal{X})=\min_{1\le i\le n} d(r,X_i),
\]
where $d(r,X_i)$ is the geometric distance from the road segment to the observation. Road segments that meet $d(r,\mathcal{X})>r_0$ were considered to have no nearby observations and were eligible for graphical thinning. From this eligible set, a fixed fraction $q$ was randomly selected for removal from the displayed road layer. The remaining segments formed the primary displayed road layer. In the figures of this paper, we used $r_0=50$ meters and $q=0.85$.

The primary road layer defined consistent keep/remove decisions for other road-network layers shown in the figures. If a segment in another road layer overlapped a segment removed from the primary layer, it was also removed from display. If it overlapped a segment retained in the primary layer, it was kept, except in cases of overlap with both retained and removed primary segments, where removal took priority for privacy. Remaining segments in additional road layers with no nearby observations were thinned using the same random-removal fraction. This consistency rule prevents segments removed from one layer from reappearing in another.

\subsubsection{Polygon reshaping}
The original polygonal entities, such as building or parcel footprints, were not displayed directly in the privacy-preserving maps. Instead, each displayed polygon was replaced by a square with a fixed side length and the same centroid as the original polygon. The attribute table and entity identifiers were preserved, allowing the squares to be linked to the estimated time-use proportions and activity-space membership. This transformation removes potentially identifying footprint characteristics, including exact shape, area, boundary geometry, and orientation, while retaining the approximate spatial arrangement needed to interpret the activity-space figures.

\subsubsection{Scope of the transformation}
Both road thinning and polygon reshaping were used only for rendering maps. All statistical summaries, nearest-entity assignments, estimated time-use proportions, activity-space thresholds, clustering results, and LCR in stability analyzes were computed using the original spatial data. The privacy-preserving layers affect only the visual presentation of the results. The figures also omit street names, addresses, basemap labels, and study-location identifiers.
\subsection{Pre-processing GIS data}\label{subsec:GIS_info_select}

To concentrate the analysis on the geographic context that matters for individual mobility patterns, we must delineate a compact study area from the GPS data and then extract the road network and building polygons that intersect with, or lie close to, the observed spatiotemporal trajectories.

\subsubsection{Determining the boundary of the study area}

Let $(x_i,y_i)$ denote the longitude and latitude of observation $i$ with a non-negative dwell-time weight $w_i$. We sought an axis-aligned rectangle with edge length $< \theta$ (in degrees) that maximizes the total weight it contains. Since the full set may not fit within a single $\theta \times \theta$ box, we approximated the maximizer using a grid search with weighted counts on a fine grid.

In the examples presented in this writing, we set $\theta=0.05^{\circ}$ and $r=0.001^{\circ}$, yielding a bounding rectangle of approximately $0.05^{\circ}\!\times\!0.05^{\circ}$ ($\approx 5.5 \times 5.5$ km at the study latitude). This area includes about 97.7\% of the total weighted observations while excluding distant outliers. We consider this area the effective study boundary for further spatial selection.

\subsubsection{Selecting the road network}

OpenStreetMap (OSM) road geometries were queried within the bounding box of several road types to evaluate the trade-off between point coverage and network sparsity:

\begin{itemize}
	\item \textbf{Primary/secondary only}: \{\texttt{primary}, \texttt{secondary}\}
	\item \textbf{Arterial set}: \{\texttt{motorway}, \texttt{trunk}, \texttt{primary}, \texttt{secondary}, \texttt{tertiary}\}
	\item \textbf{Arterial--local mix}: \{\texttt{motorway}, \texttt{trunk}, \texttt{primary}, \texttt{secondary}, \texttt{tertiary}, \texttt{residential}\}
	\item \textbf{All classes}: \{\texttt{highway} = *\}
\end{itemize}

For each network, we calculated the minimum distance from every GPS point to the nearest road segment and summarized the coverage proportion, i.e., the fraction of observations within $d_0=150\,\text{m}$ of the network. Distances were evaluated in geographic degrees, with $0.00135^{\circ} \approx 150\,\text{m}$ at the study latitude. The coverage proportions are shown in Table \ref{tab:coverageproportions}. The arterial--local mix achieves nearly complete coverage without excessive density from including every minor path. We adopt this network for subsequent analysis.

\begin{table}
\begin{center}
	\begin{tabular}{c c c c}
		\hline
		Primary/secondary only & Arterial set & Arterial--local mix & All classes \\
		\hline
		0.647 & 0.655 & 0.988 & 1.000 \\
		\hline
	\end{tabular}
    \caption{Coverage proportions of road types.}
    \label{tab:coverageproportions}
\end{center}
\end{table}

\subsubsection{Selecting the spatial polygons}

To reduce irrelevant features, we retained only the building polygons closest to at least one GPS observation within $d_0=150\,\text{m}$ of the polygon set. We computed the point–polygon distance matrix, identified the nearest polygon for each point, and retained the unique polygons whose closest points were within $0.00135^{\circ}$. This yields a parsimonious set of structures most likely visited by the individual whose mobility pattern is recorded.

The selection workflow is summarized in Figure~\ref{fig:gis_selection_triptych}. The left panel displays all polygons, road segments, and GPS points within the bounding box. The middle panel shows the selected arterial--local road subset, achieving high coverage while remaining interpretable. The right panel shows the final polygon selection overlaid with roads and GPS points for spatial context.

\begin{figure}[htbp]
	\centering
	\includegraphics[width=0.32\linewidth]{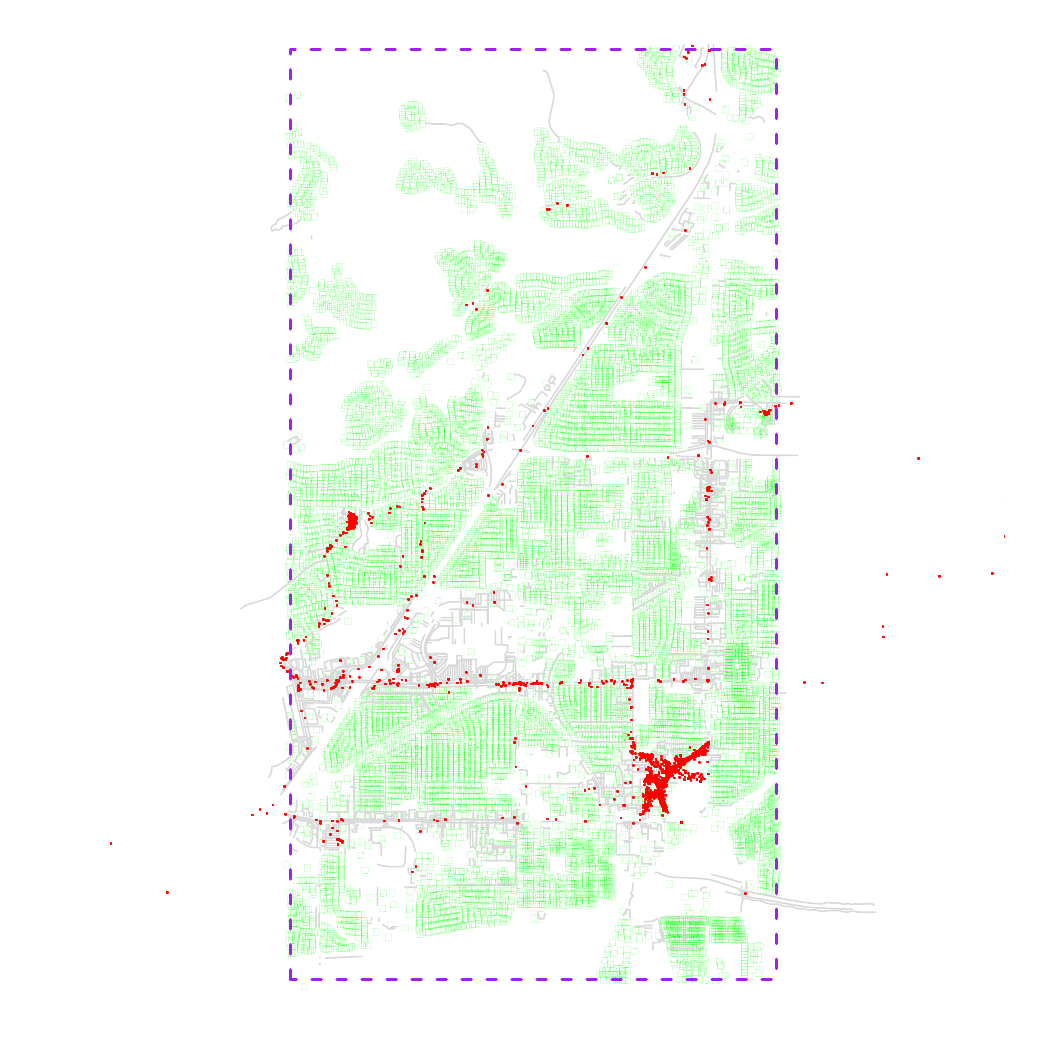}
	\includegraphics[width=0.32\linewidth]{privacy_r1-2-dense.pdf}
	\includegraphics[width=0.32\linewidth]{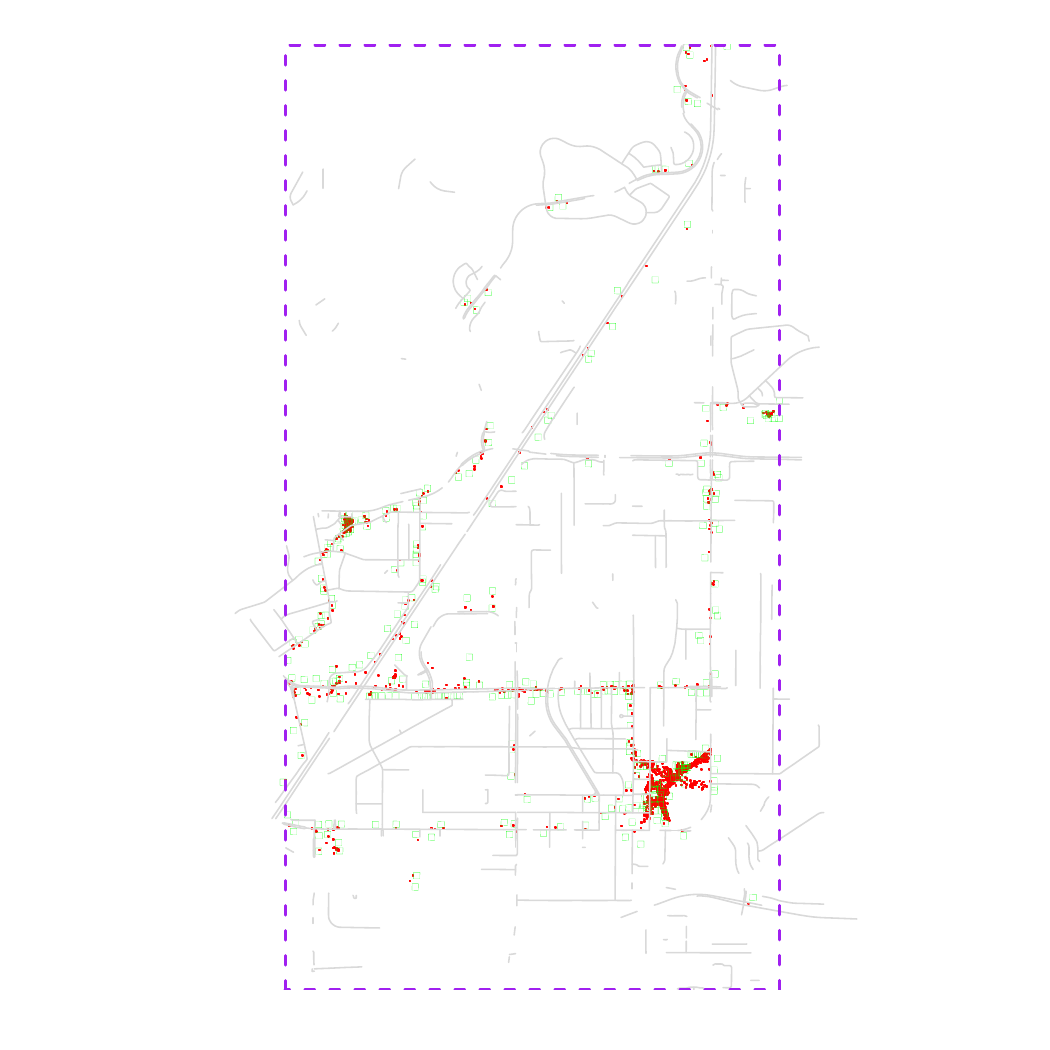}
	\caption{GIS selection workflow displayed in WGS84 (EPSG:4326). Polygons and road segments are shown in green, while GPS observations are shown in red. Left panel: full infrastructure inside the data-driven bounding box. Middle: road subset chosen to maximize coverage of points within $d_0=150$\,m while limiting density. Right panel: polygons retained based on proximity to observations (nearest within $0.00135^{\circ}\!\approx\!150$\,m).}
	\label{fig:gis_selection_triptych}
\end{figure}

\subsubsection{Aggregating the spatial polygons}\label{subsec:polygon_aggregation}

In dense residential areas, polygons representing buildings are often small and packed closely together. When GPS observations map to individual polygons, spatial fragmentation can disperse dwell-time estimates. Although individuals may spend significant time in a single functional location (e.g., a residential complex), measurement noise and positioning errors can lead to observations being intermittently assigned to neighboring polygons. 
As a result, time spent in a location may be distributed across many small polygons, each receiving only a minor estimated dwell time. This complicates the interpretation of activity patterns and increases the apparent complexity of the activity space.

To mitigate this issue, we aggregate nearby building polygons into composite entities before estimating activity and clustering with map-SMM. The aggregation relies on spatial proximity; we compute the centroids of the polygons and conduct single-linkage hierarchical clustering, merging those whose centroids are within a specified distance threshold. 
This procedure groups contiguous structures that likely represent a single functional location, such as adjacent residential buildings or interconnected units in the same housing complex.

In our data examples, we use a distance cutoff of $D=0.001^{\circ}$ (approximately 100 meters at the study latitude) to balance reducing over-fragmentation while preserving meaningful spatial distinctions between locations. 
Each resulting cluster is treated as a single aggregated polygon in subsequent analyzes, and GPS observations assigned to any polygon within the cluster contribute to the same activity entity. This aggregation step enhances the stability and interpretability of dwell-time estimation by minimizing sensitivity to GPS measurement error and polygon granularity while preserving the essential spatial structure for clustering daily activity patterns.

\subsection{Adjusting entity assignments in the presence of borderline GPS observations}
\label{subsec:assignment-adjustment}

A practical challenge in applying our method to real GPS data is the presence of borderline observations at the boundaries between polygons (e.g., residential buildings) and adjacent road segments. Entity assignment relies on the nearest geodesic distance; thus, a GPS point  close to a polygon may be assigned to a road segment due to GPS noise, slight geometric irregularities, or map–matching artifacts. In high–frequency GPS traces, this issue often causes unrealistic oscillations between a polygon and a road segment—for example, a sequence of polygon $\rightarrow$ road $\rightarrow$ polygon, even when the individual is clearly inside or adjacent to the polygon.

These oscillatory assignments do not accurately reflect the underlying human mobility processes. If two consecutive observations fall within the same polygon (or road segment) but the timestamp between them is assigned to a different entity due to noise, the estimated time allocation may inflate road time while deflating time spent in polygons. Such distortions accumulate over many days, particularly for entities with small spatial footprints, and can bias both naive and weighted estimators.

Humans rarely transition directly between an indoor space and a nearby road. Thus, if the midpoint in a triple $(E_{i,j-1},E_{i,j},E_{i,j+1})$ is inconsistent with its neighbors' movement patterns, the discrepancy is likely due to measurement noise rather than true movement. Correcting these inconsistencies enhances robustness in applications, especially when polygons represent indoor locations or areas with degraded GPS accuracy (e.g., high-density urban areas).

Before computing the weighted estimator, we modify the entity assignments as follows. Let $X_{i,j}$ be the GPS location at time $t_{i,j}$ and let $E_{i,j}$ denote its nearest-entity assignment.  
For each observation $(i,j)$, take the following actions.

\begin{itemize}
	\item \textbf{Polygon override for road segments.}  
	If $E_{i,j}$ corresponds to a road segment, $E_{i,j-1}$ and $E_{i,j+1}$ represent the same polygon $e$, and $d(X_{i,j},e)$ is less than a fixed distance threshold, then $E_{i,j}$ is reassigned to polygon $e$.
	
	\item \textbf{Road segment override for polygons.}  
	If $E_{i,j}$ corresponds to a polygon, $E_{i,j-1}$ and $E_{i,j+1}$ are the same road segment $e$, and 
	$d(X_{i,j},e)$ is below the same threshold,
	then $E_{i,j}$ is reassigned to road segment $e$.
\end{itemize}

Let $\tilde{E}_{i,j}$ denote the adjusted entity labels.  
After this adjustment, we apply the weighted estimator
\[
\widehat{T}_{e}^{\text{adj}}
\;=\;
\frac{1}{n}\sum_{i=1}^{n}\sum_{j=1}^{m_i}
W_{i,j}\,
1\!\left(\tilde{E}_{i,j}=e\right),
\]
where the weights $W_{i,j}$ represent time spent at $X_{i,j}$. This adjustment removes spurious entity switching from GPS noise or boundary artifacts, enhancing the stability and interpretability of estimated time allocation in real-data applications.

\subsection{Pre-processing loops in pattern clustering}

Measurement noise may create short jitter loops of the form
\(
e,x_{1}\dots x_{r},\,e
\)
with a very short duration for $x_{1},\dots ,x_{r}$. Assume an individual is at their home designated as entity $\mathrm{P1}$ from midnight to 11 a.m. Table \ref{tab:threerecords} provides three consecutive GPS observations. The second observation was collected at 9:00 a.m. and is projected incorrectly. A momentary positioning error projects the observation at 09:03 onto road segment $\mathrm{S2}$, producing the matched sequence
\(\mathrm{P1}\rightarrow\mathrm{S2}\rightarrow\mathrm{P1}\).
After weighting the observations by their timestamps, we obtain the estimated action vector fragment and the time spent estimation
\[
(\mathrm{P1},\mathrm{S2},\mathrm{P1}),
\qquad
(9/24,\,1/480,\,13/160),
\]
where the numbers represent the proportions of the day accounted
for by the three observations.  
Because the middle entity $\mathrm{S2}$ lasts only $1/480$ (a relatively tiny proportion) of the
day and is flanked by identical entities $\mathrm{P1}$, the sequence
forms a jitter loop.  

\begin{table}
\begin{center}
	\begin{tabular}{@{}cccc@{}}
		\toprule
		Clock time & True activity & Raw coordinates $(x,y)$ & Matched entity \\ \midrule
		08:57      & home          & $(0.42,\,-0.08)$      & $\mathrm{P1}$ \\
		09:00      & home          & $(0.43,\,-0.07)$      & $\mathrm{S2}$ \\
		09:03      & home          & $(0.42,\,-0.08)$      & $\mathrm{P1}$ \\ \bottomrule
	\end{tabular}
\end{center}
\caption{Three consecutive GPS records collected from an individual.}
\label{tab:threerecords}
\end{table}

Given a tolerance $\tau>0$, we detect such loops and replace
them with the entity $e$ whenever
$\sum_{s=1}^{r} x_{s}\le\tau$.  
This compression maintains the total dwell time of the surrounding entity
$e$ while eliminating short loops caused by one or two mismatched GPS
observations due to
measurement error—spurious, one-off detours that are behaviorally
irrelevant.



\begin{figure}[htbp]
	\centering
	\includegraphics[width=0.32\linewidth]{privacy_cluster2_2_2_gps.pdf}
	\includegraphics[width=0.32\linewidth]{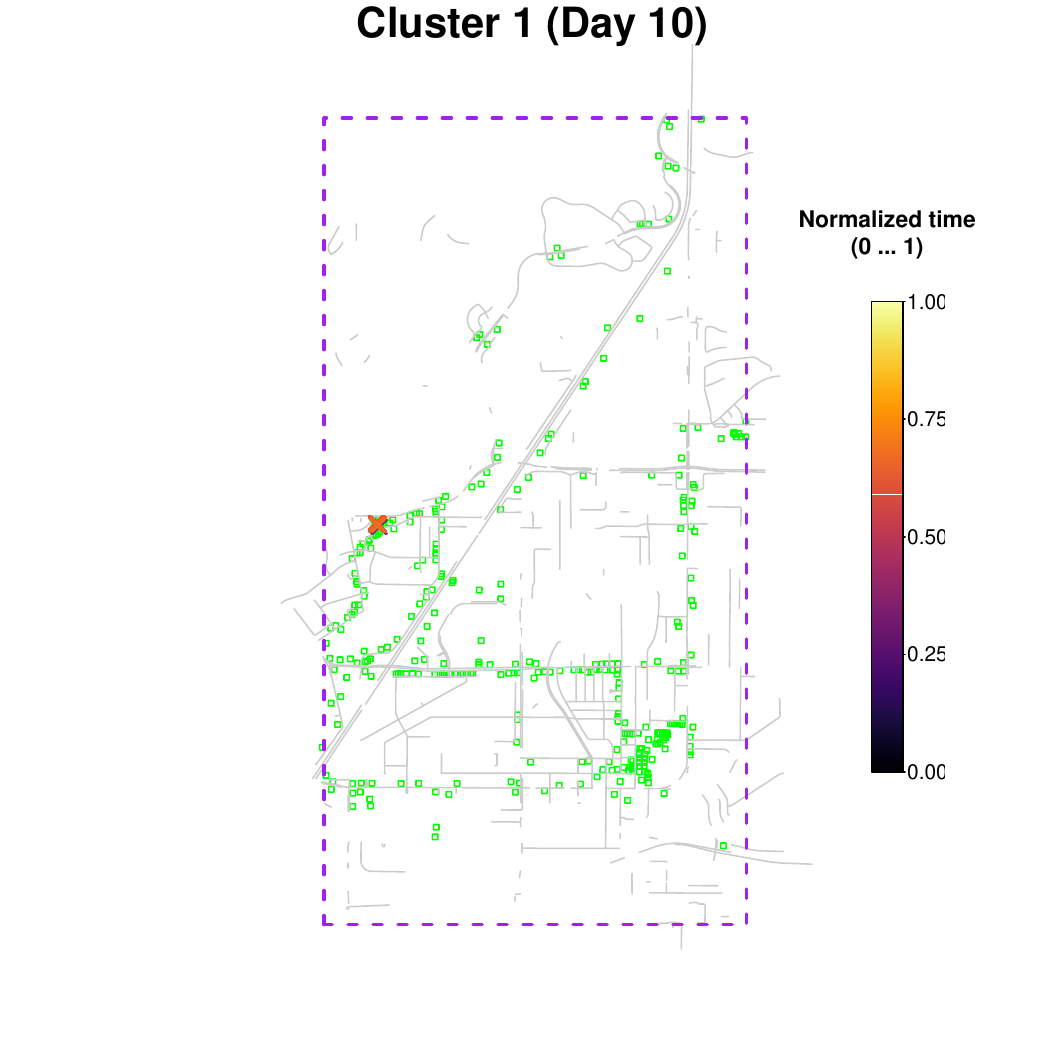}
	\includegraphics[width=0.32\linewidth]{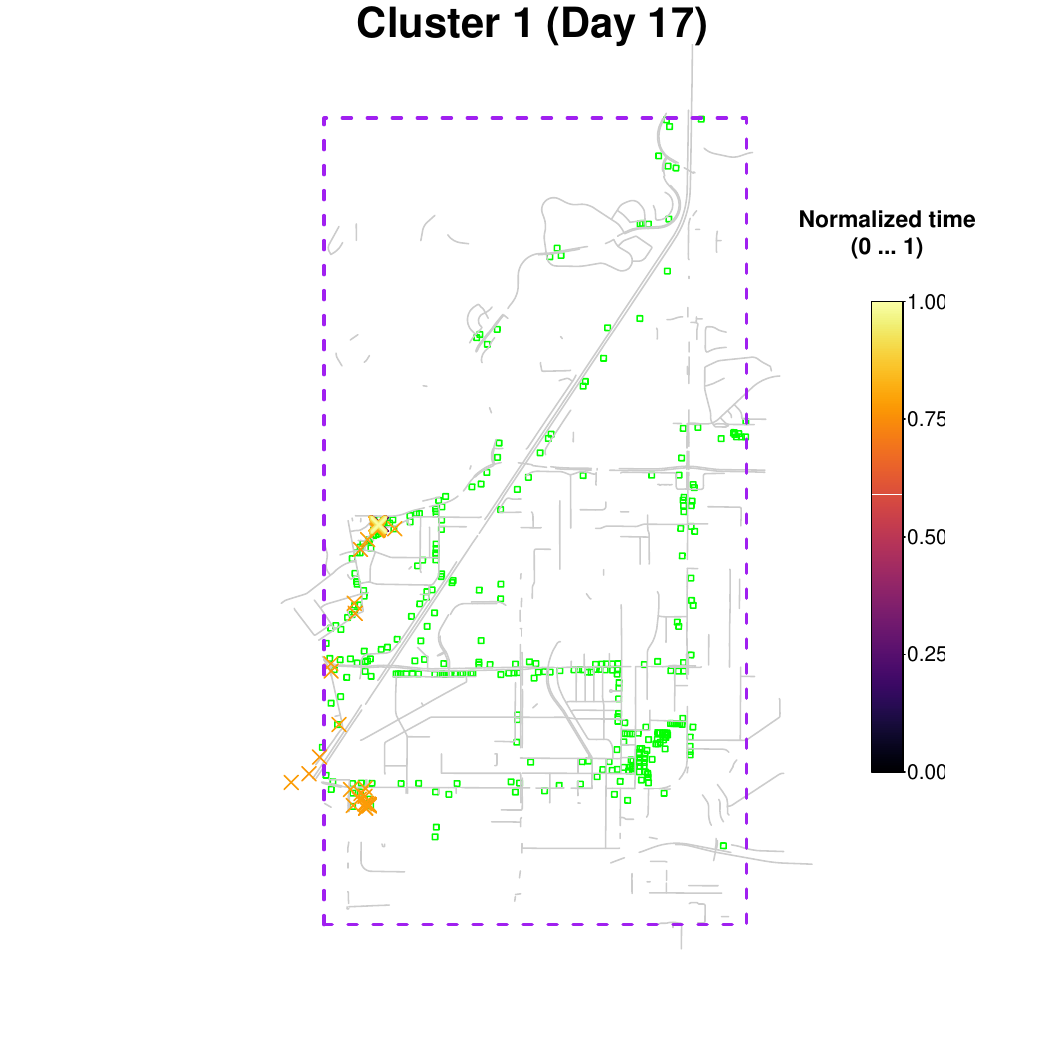}
	\caption{Representative days from Cluster~1 (days 2, 10, 17). These days are
		dominated by stationary activity at the living place, indicating limited
		mobility.}
	\label{fig:cluster1days}
\end{figure}

\begin{figure}[htbp]
	\centering
	\includegraphics[width=0.32\linewidth]{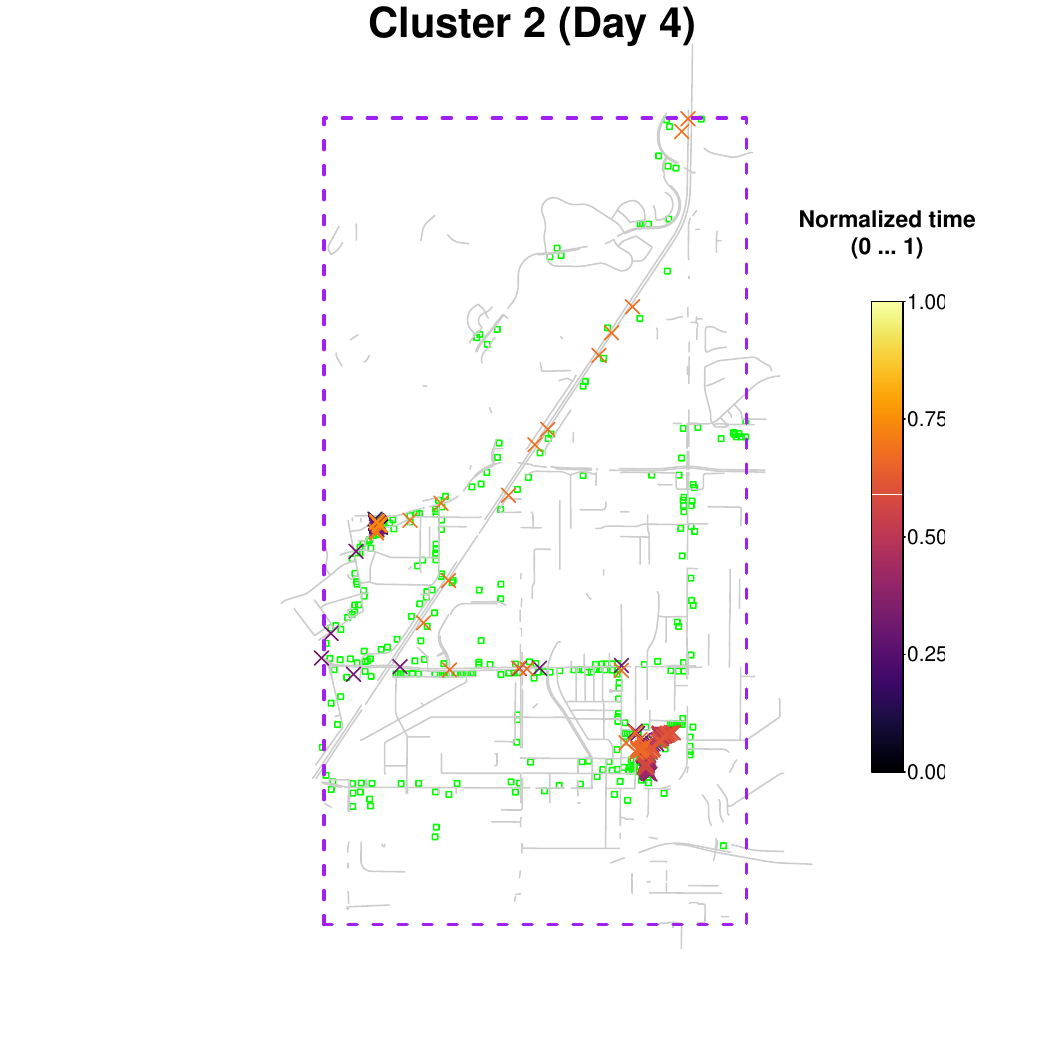}
	\includegraphics[width=0.32\linewidth]{privacy_cluster3_3_16_gps.pdf}
	\includegraphics[width=0.32\linewidth]{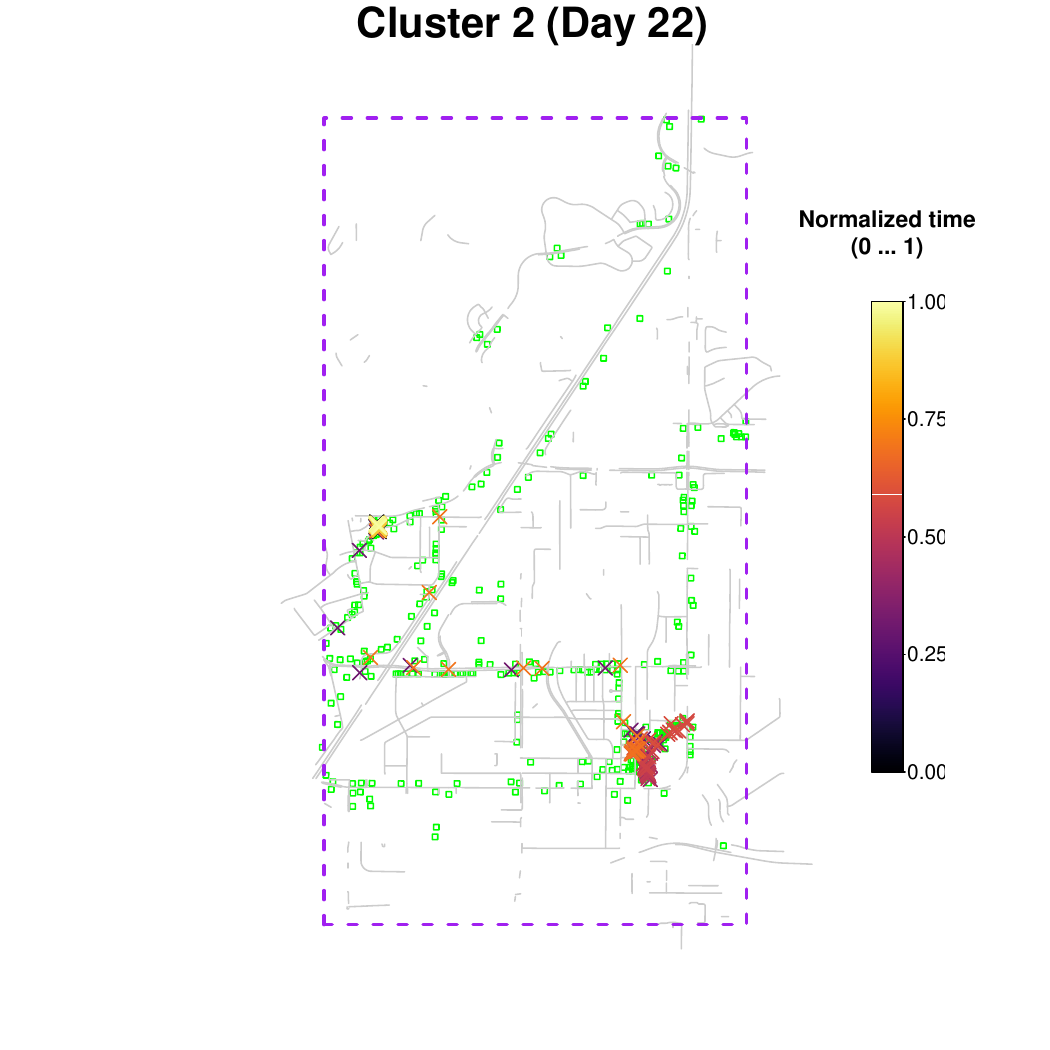}
	\caption{Representative days from Cluster~2 (days 4, 16, 22). These days show
		mobility between the living place and the main daytime hub, with minor
		variations such as detours or additional short-duration activities.}
	\label{fig:cluster2days}
\end{figure}

\section{Simulation Experiments} \label{simulationexperiments}
\subsection{Design of the experiments}\label{sec:timestamp-generation}

\begin{table}
	\centering
    {\tiny
	\begin{tabular}{|c|c|c|c|c|c|c|}
		\hline
		Day                       & Pattern                     & Probability             & \begin{tabular}[c]{@{}c@{}}Expected duration\\ (Hours)\end{tabular} & $\eta$ & $q$  & Actions               \\ \hline
		\multirow{24}{*}{Weekday} & \multirow{11}{*}{Pattern 1} & \multirow{11}{*}{15/28} & 9                                                                   & 0.15   & 0.5  & At home               \\ \cline{4-7} 
		&                             &                         & 0.5                                                                 & 0.08   & 0.25 & Home to office        \\ \cline{4-7} 
		&                             &                         & 2.5                                                                 & 0.15   & 0.5  & At office             \\ \cline{4-7} 
		&                             &                         & 0.1                                                                 & 0.03   & 0.05 & Office to office cafe \\ \cline{4-7} 
		&                             &                         & 0.5                                                                 & 0.15   & 0.05 & At office cafe        \\ \cline{4-7} 
		&                             &                         & 0.1                                                                 & 0.03   & 0.05 & Office cafe to office \\ \cline{4-7} 
		&                             &                         & 4.8                                                                 & 0.15   & 0.5  & At office             \\ \cline{4-7} 
		&                             &                         & 0.6                                                                 & 0.08   & 0.25 & Office to home        \\ \cline{4-7} 
		&                             &                         & 2                                                                   & 0.2    & 0.6  & At home               \\ \cline{4-7} 
		&                             &                         & 1                                                                   & 0.06   & 0.15 & Home to park to home  \\ \cline{4-7} 
		&                             &                         & 2.9                                                                 &        &      & Home                  \\ \cline{2-7} 
		& \multirow{13}{*}{Pattern 2} & \multirow{13}{*}{5/28}  & 8.5                                                                 & 0.15   & 0.5  & Home                  \\ \cline{4-7} 
		&                             &                         & 0.5                                                                 & 0.08   & 0.25 & Home to office        \\ \cline{4-7} 
		&                             &                         & 3                                                                   & 0.15   & 0.5  & At office             \\ \cline{4-7} 
		&                             &                         & 0.1                                                                 & 0.03   & 0.05 & Office to office cafe \\ \cline{4-7} 
		&                             &                         & 0.5                                                                 & 0.15   & 0.05 & At office cafe        \\ \cline{4-7} 
		&                             &                         & 0.1                                                                 & 0.03   & 0.05 & Office cafe to office \\ \cline{4-7} 
		&                             &                         & 4.3                                                                 & 0.15   & 0.5  & At office             \\ \cline{4-7} 
		&                             &                         & 0.75                                                                & 0.08   & 0.25 & Office to restaurant  \\ \cline{4-7} 
		&                             &                         & 1                                                                   & 0.08   & 0.25 & At restaurant         \\ \cline{4-7} 
		&                             &                         & 0.4                                                                 & 0.08   & 0.25 & Restaurant to home    \\ \cline{4-7} 
		&                             &                         & 0.95                                                                & 0.1    & 0.3  & At home               \\ \cline{4-7} 
		&                             &                         & 1                                                                   & 0.06   & 0.15 & Home to park to home  \\ \cline{4-7} 
		&                             &                         & 2.9                                                                 &        &      & At home               \\ \hline
		\multirow{11}{*}{Weekend} & \multirow{5}{*}{Pattern 3}  & \multirow{5}{*}{4/28}   & 11                                                                  & 0.3    & 1    & At home               \\ \cline{4-7} 
		&                             &                         & 0.75                                                                & 0.15   & 0.45 & Home to supermarket   \\ \cline{4-7} 
		&                             &                         & 2.5                                                                 & 0.3    & 1    & At supermarket        \\ \cline{4-7} 
		&                             &                         & 0.75                                                                & 0.15   & 0.45 & Supermarket to home   \\ \cline{4-7} 
		&                             &                         & 9                                                                   &        &      & At home               \\ \cline{2-7} 
		& \multirow{5}{*}{Pattern 4}  & \multirow{5}{*}{1/28}   & 10                                                                  & 0.3    & 1    & At home               \\ \cline{4-7} 
		&                             &                         & 0.8                                                                 & 0.3    & 0.7  & Home to beach         \\ \cline{4-7} 
		&                             &                         & 5.7                                                                 & 0.35   & 1    & Beach                 \\ \cline{4-7} 
		&                             &                         & 0.8                                                                 & 0.3    & 0.7  & Beach to home         \\ \cline{4-7} 
		&                             &                         & 6.7                                                                 &        &      & At home               \\ \cline{2-7} 
		& Pattern 5                   & 3/28                    & 24                                                                  &        &      & At home               \\ \hline
	\end{tabular}
    }
	\caption{The map-SMM model used for the simulation experiments. There are a total of five daily activity patterns. The time the individual spends at each location follows a truncated normal distribution with mean 
		(Expected Duration), standard deviation \(\eta\), and half-width \(q\). 
		Times shown are in hours. The last row associated with each pattern (where \(\eta\) and \(q\) are blank) absorbs any remaining time up to 24 hours.}
	\label{Table:pattern_detail}
\end{table}

To evaluate the performance of the proposed activity density estimators, we create a simplified scenario featuring a single fictitious individual. This individual's daily activities involve five spatial polygons: home, office, restaurant, beach, and supermarket, connected by several routes as shown in the top left panel of Figure \ref{fig:map_pattern}. The key aim of the simulation experiment is to estimate the time spent in each polygon and road segment.

We create five distinct daily activity patterns for this individual, as shown in Figure \ref{fig:map_pattern}. Key details about this individual's movements, including visit order, expected time at each location, and the probability of selecting each pattern, are summarized in Table \ref{Table:pattern_detail}. The simulated timeline corresponds to a regular week of five weekdays and two weekend days. Each day, the fictitious individual randomly selects one of the activity patterns using the probabilities specific to weekdays or weekends provided in Table \ref{Table:pattern_detail}.

At each selected pattern location, time spent follows a truncated normal distribution \(z_i \mid b \sim \mathcal{N}\bigl(\mu_{b,i},\,\eta_{b,i}^2\bigr)\) truncated to \(\bigl(\mu_{b,i}-q_{b,i},\;\mu_{b,i}+q_{b,i}\bigr)\), where \(\mu_{b,i}\) and \(\eta_{b,i}\) denote the mean and standard deviation, respectively, and \(q_{b,i}\in [0,1]\) specifies a reasonable range for the duration of stay. The length of stay at the last location in each pattern is calculated as the remaining time in the 24-hour day after visiting all previous locations.

We utilize a truncated Gaussian distribution to determine the length of stay at each location for two reasons. First, we need to control variance: by adjusting \(\eta_{b,i}\), we can account for realistic scenarios; for example, weekday departures tend to be less variable than weekend departures for a beach visit. Thus, we set the variance of the duration for the first position higher on weekends. The second reason is reasonable time distribution: truncation prevents unrealistic visit durations; for example, a restaurant stay should not exceed typical operating hours.

We generate timestamps of fictitious GPS observations based on the assumption that each day $i=1,2,\ldots,n$ has an equal number of observations, i.e., $m_i = m$. We set \(n \in \{7, 30, 90\}\) and \( m \in \{159, 479, 1439\}\). The chosen $n$ values represent one week, one month, and three months of GPS data collection. The chosen \(m\) values represent GPS observations recorded, on average, every 9 minutes, every 3 minutes, or once per minute. To investigate estimator performance under regular and irregular observation frequencies, we generate timestamps in two regimes:
\begin{enumerate}
	\item \textbf{Even-spaced timestamps:} In each simulated day, the \(m\) timestamps are spaced at \(\tfrac{1440}{m+1}\) minute intervals. The GPS observations are recorded every \(\frac{1440}{m+1}\) minutes.
	
	\item \textbf{Realistic timestamps:} To generate non-uniform timestamps similar to real GPS data, we use the GPS dataset described in Section~\ref{sec:real-data} as follows:
	\begin{enumerate}
		\item Randomly select an individual from the GPS dataset.
		\item For each observation day of the fictitious individual, randomly sample a day from the individual's observation period selected in (a). If the sampled day has fewer than \(m\) real timestamps, generate additional timestamps by sampling from a Gaussian kernel density estimate of that day's timestamp distribution using Silverman's rule of thumb for bandwidth. If the sampled day has more than \(m\) real timestamps, randomly remove timestamps until \(m\) remain, ensuring each observation has the same probability of removal.	
   \end{enumerate}
	This process yields \(m\) timestamps per simulated day, preserving the original non-uniformity of the individual's data. 
	As discussed in Section~\ref{sec:real-data}, the timestamp distribution of the real data is potentially skewed. 
	This procedure preserves the day-to-day skew and clustering patterns of timestamps found in real data. 
\end{enumerate}

We introduce measurement errors in simulated GPS observations by sampling from a Gaussian noise distribution with \(\sigma = 0.1\) or \(\sigma = 0.05\).

\subsection{Comparison of the mean proportion of time spent estimators}

We compare three estimators of $T_e$, the mean proportion of time spent in entity $e \in \mathcal{E}$ under even-spaced and realistic timestamp designs.

{\bf Method 1: Naive (equal--weight) estimator.}
The naive estimator assigns equal weight to each GPS timestamp, ignoring the time 
intervals between observations.  
For day $i$, the estimated proportion of time spent in $e$ is the fraction of that 
day's timestamps assigned to $e$:
\[
\widehat{T}_e^{\text{naive}}
=
\frac{1}{n}\sum_{i=1}^{n}
\left(
\frac{1}{m_i} \sum_{j=1}^{m_i}
1\!\left(E_{i,j}=e\right)
\right).
\]

{\bf Method 2: Weighted estimator.}
Our weighted estimator accounts for varying time gaps between successive GPS locations using the weights $W_{i,j}$ defined in Section~\ref{sec:estimation}.  
This yields
\[
\widehat{T}_e
=
\frac{1}{n}\sum_{i=1}^{n}\sum_{j=1}^{m_i}
W_{i,j} \,
1\!\left(E_{i,j}=e\right).
\]

{\bf Method 3: Weighted estimator with assignment adjustment.}
Our third estimator uses the same weights $W_{i,j}$ as the weighted estimator (Method~2) but first applies the boundary-adjustment procedure in Section~\ref{subsec:assignment-adjustment} to correct unrealistic oscillations in 
entity assignment caused by GPS noise.  
Let $\tilde{E}_{i,j}$ denote the adjusted entity labels.  
The adjusted weighted estimator is
\[
\widehat{T}_e^{\text{adj}}
=
\frac{1}{n}\sum_{i=1}^{n}\sum_{j=1}^{m_i}
W_{i,j} \,
1\!\left(\tilde{E}_{i,j}=e\right).
\]

We evaluate the performance of these three estimators by calculating the mean integrated 
squared error (MISE) for all entities and simulation replicates.

\subsubsection{Results on the MISE}
\begin{table}[]
	\centering
	\begin{tabular}{cccccccc}
		\hline
		\multicolumn{2}{c}{Even-spacing}                    & \multicolumn{3}{c}{$\epsilon=0.1$}                                       & \multicolumn{3}{c}{$\epsilon=0.05$}                     \\ \hline
		$n$                     & $m$                       & $\hat{T}_e^{\text{naive}}$ & $\hat{T}_{e}$ & $\hat{T}_{e}^{\text{adj}}$                  & $\hat{T}_e^{\text{naive}}$ & $\hat{T}_{e}$ & $\hat{T}_{e}^{\text{adj}}$\\ \hline
		\multicolumn{1}{c|}{7}  & \multicolumn{1}{c|}{159}  & 0.0402            & 0.0392        & \multicolumn{1}{c|}{\textbf{0.0335}} & 0.0339            & 0.0335        & \textbf{0.0335}     \\
		\multicolumn{1}{c|}{7}  & \multicolumn{1}{c|}{479}  & 0.0399            & 0.0396        & \multicolumn{1}{c|}{\textbf{0.0333}} & 0.0337            & 0.0335        & \textbf{0.0336}     \\
		\multicolumn{1}{c|}{7}  & \multicolumn{1}{c|}{1439} & 0.0398            & 0.0397        & \multicolumn{1}{c|}{\textbf{0.0334}} & 0.0336            & 0.0335        & \textbf{0.0335}     \\
		\multicolumn{1}{c|}{30} & \multicolumn{1}{c|}{159}  & 0.0171            & 0.0164        & \multicolumn{1}{c|}{\textbf{0.0091}} & 0.0090            & 0.0090        & \textbf{0.0090}     \\
		\multicolumn{1}{c|}{30} & \multicolumn{1}{c|}{479}  & 0.0169            & 0.0167        & \multicolumn{1}{c|}{\textbf{0.0090}} & 0.0090            & 0.0089        & \textbf{0.0089}     \\
		\multicolumn{1}{c|}{30} & \multicolumn{1}{c|}{1439} & 0.0166            & 0.0165        & \multicolumn{1}{c|}{\textbf{0.0090}} & 0.0090            & 0.0089        & \textbf{0.0089}     \\
		\multicolumn{1}{c|}{90} & \multicolumn{1}{c|}{159}  & 0.0122            & 0.0115        & \multicolumn{1}{c|}{\textbf{0.0034}} & 0.0032            & 0.0031        & \textbf{0.0031}     \\
		\multicolumn{1}{c|}{90} & \multicolumn{1}{c|}{479}  & 0.0117            & 0.0115        & \multicolumn{1}{c|}{\textbf{0.0033}} & 0.0031            & 0.0031        & \textbf{0.0031}     \\
		\multicolumn{1}{c|}{90} & \multicolumn{1}{c|}{1439} & 0.0115            & 0.0114        & \multicolumn{1}{c|}{\textbf{0.0033}} & 0.0031            & 0.0031        & \textbf{0.0031}     \\ \hline
	\end{tabular}
	\caption{%
		Square root of the mean integrated squared error (MISE) for estimating the
		time-use proportions $T_e$ under the even–spacing timestamp design.  
		Results are shown for three estimators—the naive equal–weight estimator
		$\widehat{T}_e^{\text{naive}}$, the weighted estimator
		$\widehat{T}_e^{\text{wgt}}$, and the adjusted weighted estimator
		$\widehat{T}_e^{\text{adj}}$—across combinations of the number of days $n$,
		the number of within-day observations $m$, and the reassignment threshold
		$\epsilon$.  
		Smaller values indicate more accurate estimation of the time spent in each
		entity.}
	\label{tab:even-mise}
\end{table}
\begin{table}[]
	\centering
	\begin{tabular}{cccccccc}
		\hline
		\multicolumn{2}{c}{Realistic}                    & \multicolumn{3}{c}{$\epsilon=0.1$}                                       & \multicolumn{3}{c}{$\epsilon=0.05$}                     \\ \hline
		$n$                     & $m$                       & $\hat{T}_e^{\text{naive}}$ & $\hat{T}_{e}$ & $\hat{T}_{e}^{\text{adj}}$                  & $\hat{T}_e^{\text{naive}}$ & $\hat{T}_{e}$ & $\hat{T}_{e}^{\text{adj}}$ \\ \hline
		\multicolumn{1}{c|}{7}  & \multicolumn{1}{c|}{159}  & 0.1691            & 0.0442        & \multicolumn{1}{c|}{\textbf{0.0353}} & 0.1393            & 0.0341        & \textbf{0.0341}     \\
		\multicolumn{1}{c|}{7}  & \multicolumn{1}{c|}{479}  & 0.1708            & 0.0398        & \multicolumn{1}{c|}{\textbf{0.0334}} & 0.1429            & 0.0336        & \textbf{0.0336}     \\
		\multicolumn{1}{c|}{7}  & \multicolumn{1}{c|}{1439} & 0.1837            & 0.0398        & \multicolumn{1}{c|}{\textbf{0.0332}} & 0.1539            & 0.0336        & \textbf{0.0336}     \\
		\multicolumn{1}{c|}{30} & \multicolumn{1}{c|}{159}  & 0.1315            & 0.0172        & \multicolumn{1}{c|}{\textbf{0.0097}} & 0.0995            & 0.0091        & \textbf{0.0092}     \\
		\multicolumn{1}{c|}{30} & \multicolumn{1}{c|}{479}  & 0.1255            & 0.0169        & \multicolumn{1}{c|}{\textbf{0.0091}} & 0.0937            & 0.0090        & \textbf{0.0090}     \\
		\multicolumn{1}{c|}{30} & \multicolumn{1}{c|}{1439} & 0.1291            & 0.0165        & \multicolumn{1}{c|}{\textbf{0.0090}} & 0.0976            & 0.0090        & \textbf{0.0090}     \\
		\multicolumn{1}{c|}{90} & \multicolumn{1}{c|}{159}  & 0.1232            & 0.0123        & \multicolumn{1}{c|}{\textbf{0.0036}} & 0.0901            & 0.0032        & \textbf{0.0032}     \\
		\multicolumn{1}{c|}{90} & \multicolumn{1}{c|}{479}  & 0.1179            & 0.0114        & \multicolumn{1}{c|}{\textbf{0.0033}} & 0.0856            & 0.0031        & \textbf{0.0031}     \\
		\multicolumn{1}{c|}{90} & \multicolumn{1}{c|}{1439} & 0.1222            & 0.0114        & \multicolumn{1}{c|}{\textbf{0.0033}} & 0.0895            & 0.0031        & \textbf{0.0031}     \\ \hline
	\end{tabular}
	\caption{%
		Square root of the mean integrated squared error (MISE) for estimating the time-use proportions $T_e$ under the realistic (non-uniform) timestamp design. The table compares the naive equal–weight estimator, the weighted estimator, and the adjusted weighted estimator across sample sizes $(n,m)$ and reassignment thresholds $\epsilon$. Irregular sampling substantially increases the error of the naive estimator, while the weighted and adjusted weighted estimators remain stable, with the adjusted weighted estimator achieving the lowest error in all the experiments.}
		\label{tab:realistic-mise}
\end{table}
Tables~\ref{tab:even-mise} and~\ref{tab:realistic-mise} report the square root of the mean integrated squared error (MISE) for estimating the time-use proportions $\widehat{T}_e$ across all entities $e \in \mathcal{E}$ under both timestamp designs. Several systematic patterns emerge from these results.

Across all estimators, the error magnitude decreases with the number of observed days $n$, reflecting the expected $\mathcal{O}(n^{-1/2})$ behavior of averaged estimators. In contrast, changes in within-day sampling density $m$ have only a modest effect once $m$ exceeds a moderate level, indicating that day-to-day variability is the main source of estimation uncertainty.

In the even–spacing design, both the naive and weighted estimators exhibit similar performance. This similarity is expected because uniformly spaced timestamps allow equal weights to approximate the time-allocation weights $W_{i,j}$. The adjusted weighted estimator achieves the smallest errors in this context, although its advantage over the weighted estimator is limited. Since evenly spaced sampling leads to stable entity assignments and infrequent boundary oscillations, the adjustment step has few opportunities to change the labels; therefore, the improvement is moderate.

Substantial differences emerge under the realistic (non-uniform) timestamp design. The naive estimator has errors an order of magnitude larger than other methods. This deterioration highlights the sensitivity of equal–weight approaches to irregular sampling, where long gaps between observations can distort the estimated time proportions. The weighted estimator significantly reduces this bias by accounting for the underlying sampling intervals, yielding errors comparable to those under even spacing.

In the realistic design, the adjusted weighted estimator consistently achieves the lowest errors across all sample sizes $(n,m)$ and threshold values~$\epsilon$. The improvement over the weighted estimator is most pronounced when $n$ is small and threshold $\epsilon$ is larger. Reassigning borderline points aids cases with noisy entity labels and limited days. As the number of days increases, the performance of the weighted and adjusted weighted estimators becomes nearly indistinguishable. This indicates that the adjustment primarily reduces variability from random misclassifications at the boundary rather than systematic bias.

Comparing the two thresholds, $\epsilon = 0.1$ and $\epsilon = 0.05$, smaller values yield fewer errors for both weighted and adjusted estimators. This pattern shows that moderate reassignment levels are sufficient to correct most label fluctuations caused by boundaries, while overly permissive adjustments may lead to unnecessary smoothing. These results demonstrate that time-based weighting is essential for accurately estimating daily time-use proportions with irregular sampling. The boundary-adjustment procedure improves situations where GPS noise or map–matching artifacts create unstable entity assignments. The adjusted estimator is especially useful in finite samples and realistic data-collection scenarios.

\subsection{Simulation experiments for the map-SMM}

We illustrate the performance of the map-augmented simple movement
model (map-SMM) employing the simulation setting in Section \ref{sec:timestamp-generation} with $n=90$ days and $m=1439$ observations per day. This corresponds to about one GPS observation per minute. Timestamps are generated using a realistic (non-uniform) scheme, as described in Section \ref{sec:timestamp-generation}, ensuring the temporal sampling pattern mimics natural smartphone-based GPS logging rather than idealized even spacing. The choice of $n=90$ reflects roughly three months of daily activity, providing a sufficient observation window to stabilize the estimated time–use distribution and reveal persistent differences between weekday and weekend behavior encoded in the activity-pattern model in Table~\ref{Table:pattern_detail}. These daily trajectories serve as input for the clustering analysis described  below.

\subsubsection{True activity-time distribution in the simulation setup}

Figure \ref{fig:1} illustrates the spatial setting of the simulation study and the true distribution of time across entities as indicated by the activity-pattern model in Table \ref{Table:pattern_detail}.
The left panel of Figure \ref{fig:1} displays a spatial map with six polygons representing key locations—home, restaurant, office, supermarket, park, and beach—along with adjacent road segments. These entities create the environment where the fictitious individual follows daily activity patterns. The middle panel of Figure \ref{fig:1} displays the proportion of time the fictitious individual spends in each polygon, as calculated from the generative model. These proportions reflect the five activity patterns. Home accounts for the largest share of time because the individual spends more time there during early mornings, evenings, and weekends. Home is the only entity involved in Pattern 5. The office is the second most visited location due to long weekday stays. The park, restaurant, supermarket, and beach receive smaller proportions, consistent with their roles as optional or infrequent destinations in the activity model.

The right panel of Figure \ref{fig:1} shows the true time proportions for each road segment. The map-SMM model treats movement between polygons as instantaneous; however, the simulated GPS data records discrete transitions. Road segments primarily capture travel episodes. Thus, the true time spent on roads is generally small but not uniform. Segments connecting frequently paired polygons—especially the home–office route—exhibit the highest proportions, while segments leading to rarely visited leisure destinations have minimal values. These panels summarize the spatial environment and the temporal structure underlying subsequent simulation experiments.

\begin{figure}[]
	\centering
	\includegraphics[width=0.32\linewidth]{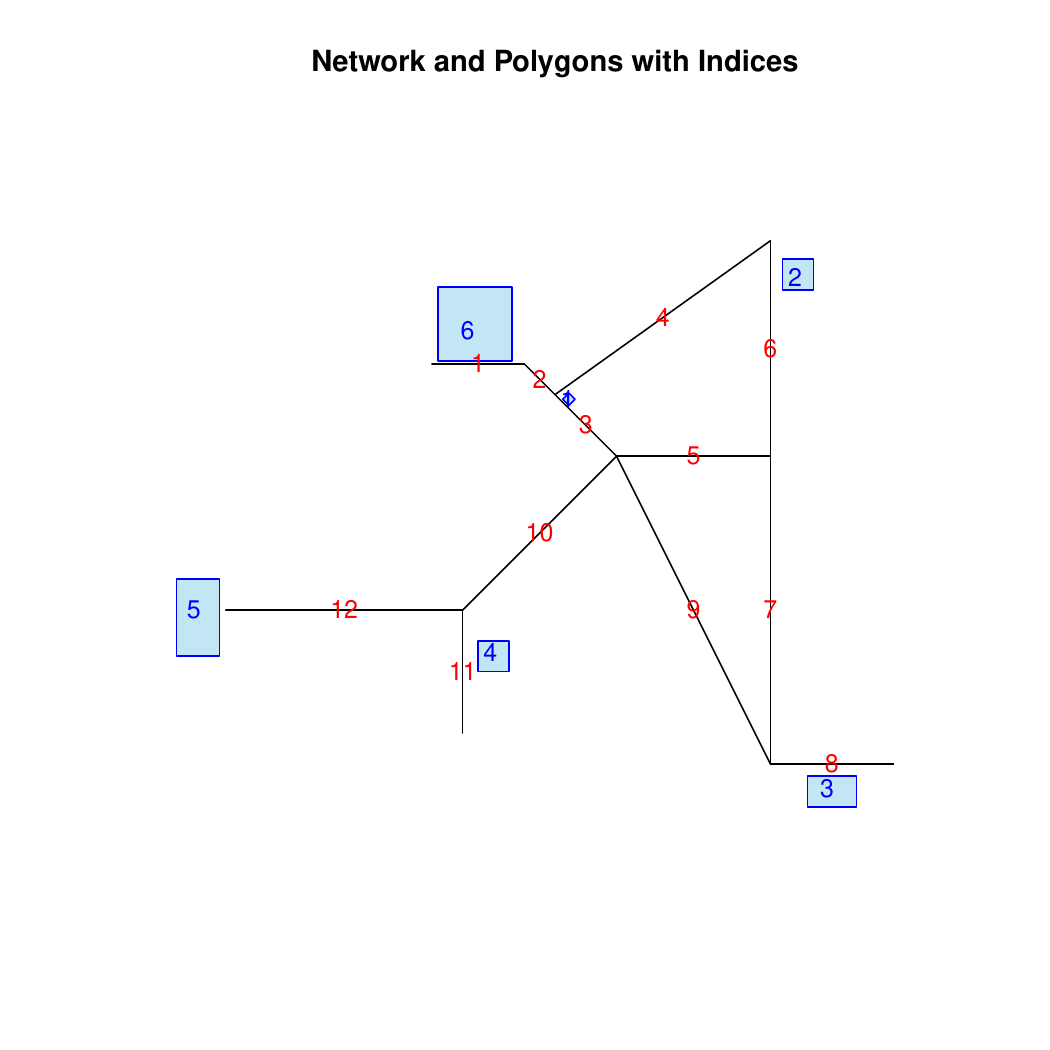}
	\includegraphics[width=0.32\linewidth]{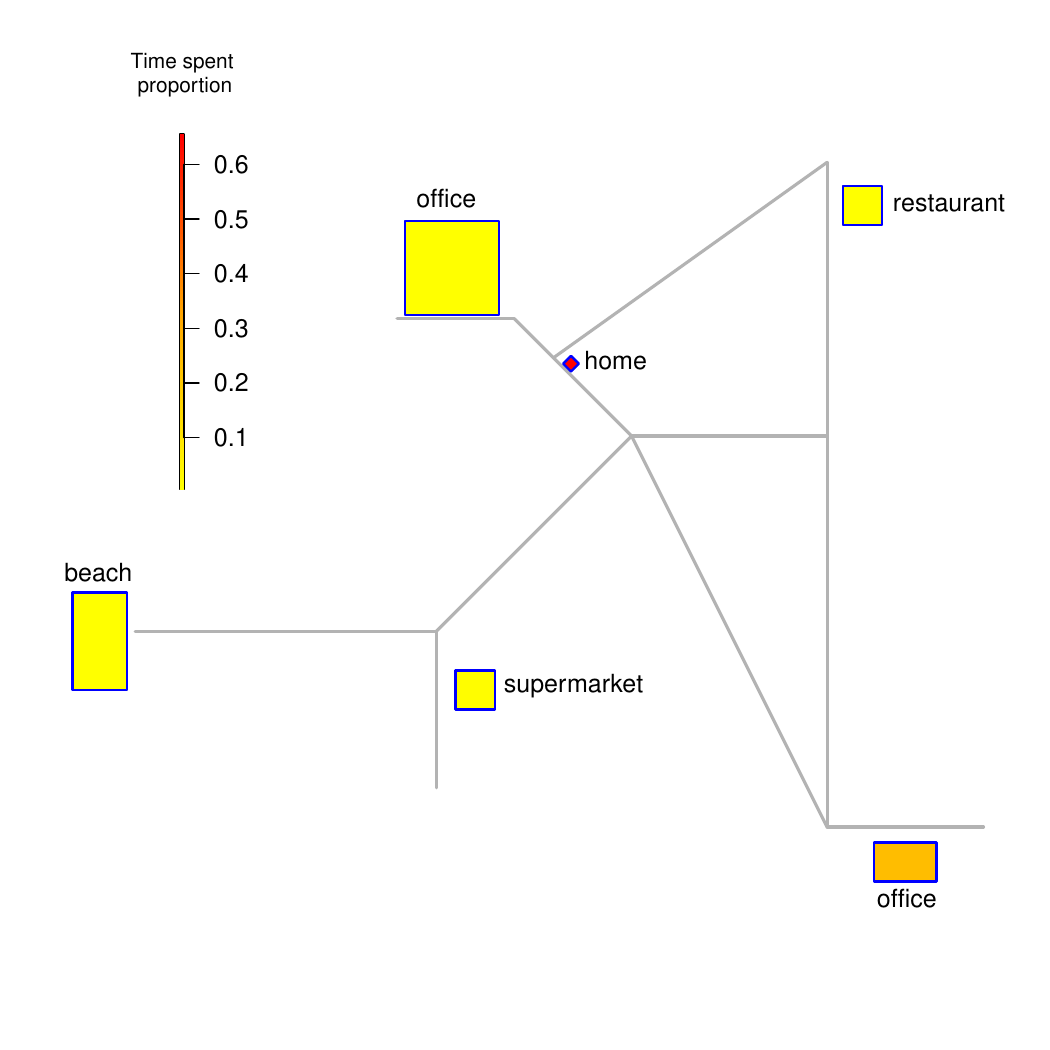}
	\includegraphics[width=0.32\linewidth]{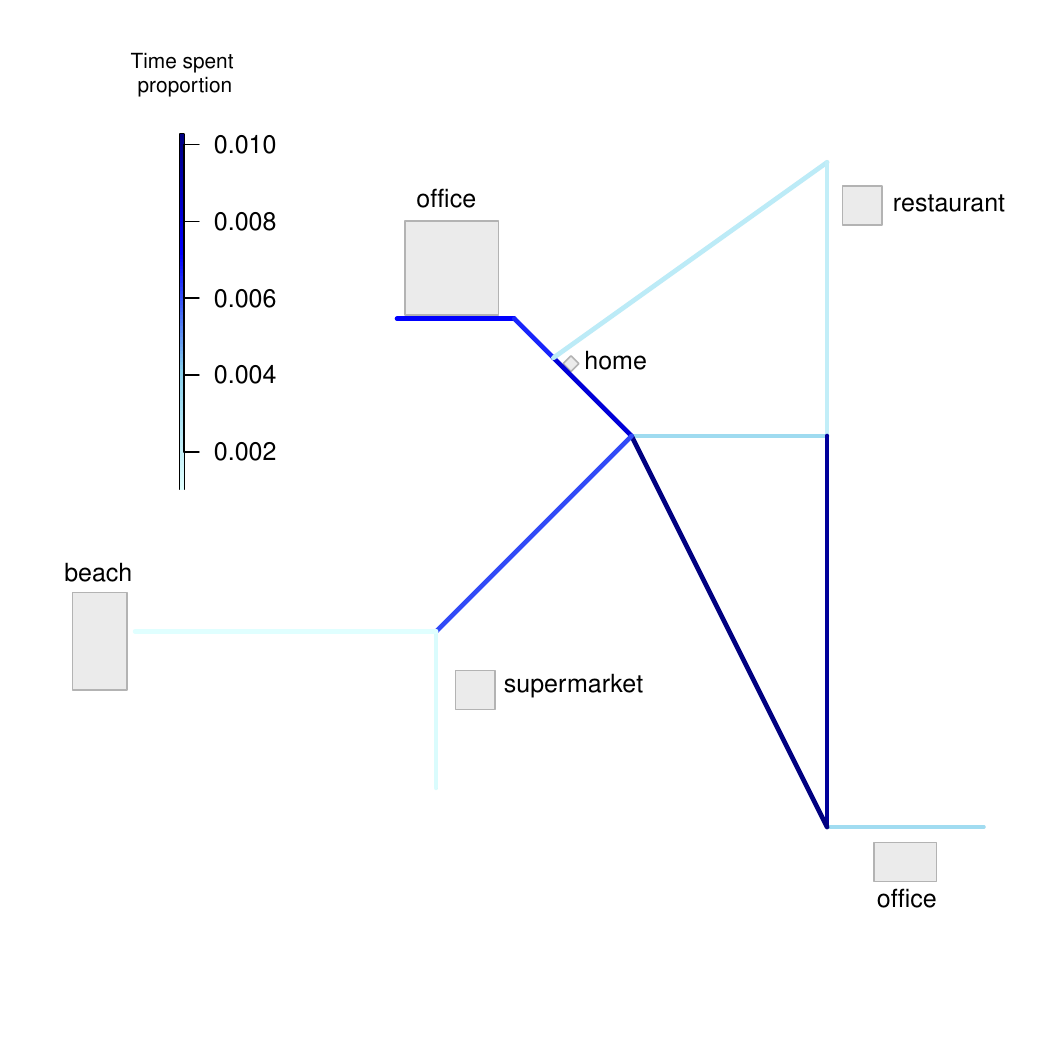}
	\caption{
		Spatial environment and true activity-time distribution used in the simulation study. Left panel: Map of the simulation layout, which consists of six polygons (home, restaurant,
		office, supermarket, park, and beach) and the connecting road network. Middle panel: True proportion of time spent in each polygon, computed from the activity-pattern model in Table~\ref{Table:pattern_detail}. The distribution reflects the 
		dominant role of home and office, as well as the infrequent visits to other locations. Right panel: True proportion of time spent on each road segment, capturing the 
		travel structure implied by the model, with higher values on frequently traversed routes such as the home--office connection. These panels summarize the spatial and temporal structure used to generate the simulated GPS data.}
	\label{fig:1}
\end{figure}

\subsubsection{Time spent proportion estimation}

The estimated overall time–use proportions for polygons and road segments are shown in panels (a) and (b) of Figure~\ref{fig:2}. These estimates summarize the individual’s long-run time distribution across all entities and provide the empirical counterpart to the theoretical quantity $\{T_e\}_{e \in \mathcal{E}}$ defined in Eq.~\eqref{eq:meantimespententity}. The resulting spatial pattern closely reflects the activity-pattern model structure in Table~\ref{Table:pattern_detail}. The home polygon accounts for the largest share of time, consistent with extended morning, evening, and weekend durations in both weekday and weekend patterns, as well as the full-day stay in Pattern~5. The office polygon receives the second largest share, driven by the fictitious individual's long, continuous weekday work periods. By contrast, the restaurant, supermarket, park, and beach contribute relatively small proportions, reflecting their roles as occasional destinations visited only in a minority of daily patterns.

The estimated time-use proportions for the road segments in panel (b) of Figure~\ref{fig:2} are much smaller, as travel between polygons occupies a limited fraction of the simulated day. Nevertheless, the proportions are not spatially uniform. Segments connecting frequently paired polygons—most notably the home–office path—exhibit the highest values, while roads to infrequent leisure locations, such as the supermarket and beach, show near-zero proportions. These patterns illustrate the proposed weighted estimator's ability, with assignment adjustment, to produce coherent representations of both high-frequency and rare transitional behaviors.

Panels (c) and (d) of Figure~\ref{fig:2} present the PN activity spaces based on estimated time-use proportions. The PN activity space includes all entities with positive time use, and the level-$\gamma$ activity space $\mathcal{AS}_\gamma$ is the smallest subset of entities that collectively accumulate at least $100\gamma\%$ of the individual’s total time. For polygons, the level-$0.50$, $0.70$, and $0.90$ activity spaces consist exclusively of home and office, indicating these locations represent the majority of the individual’s daily routine. As $\gamma$ increases to $0.95$ and $0.99$, the activity space expands to include the park, supermarket, restaurant, and beach, reflecting small but significant contributions from occasional leisure or errand trips. This nested structure highlights the core–periphery nature of the simulated mobility process.

The road-segment activity spaces exhibit a similar hierarchical pattern. For lower coverage levels ($\gamma = 0.50$ and $0.70$), the activity space mainly consists of the most frequently traversed commuting path. As the threshold rises to $0.90$ and $0.95$, segments associated with commercial spaces or restaurant transitions appear, and at $\gamma = 0.99$, the space includes seldom-used routes to the supermarket and beach. This progression distinguishes routine travel routes from infrequent paths linked to weekend or leisure activities.

The estimated proportions of time-use and resultant PN activity spaces demonstrate that the proposed estimator effectively captures both dominant and peripheral elements of the simulated mobility pattern, providing a coherent and interpretable representation of individual space–time behavior.

\begin{figure}[]
	\centering
	\includegraphics[width=0.48\linewidth]{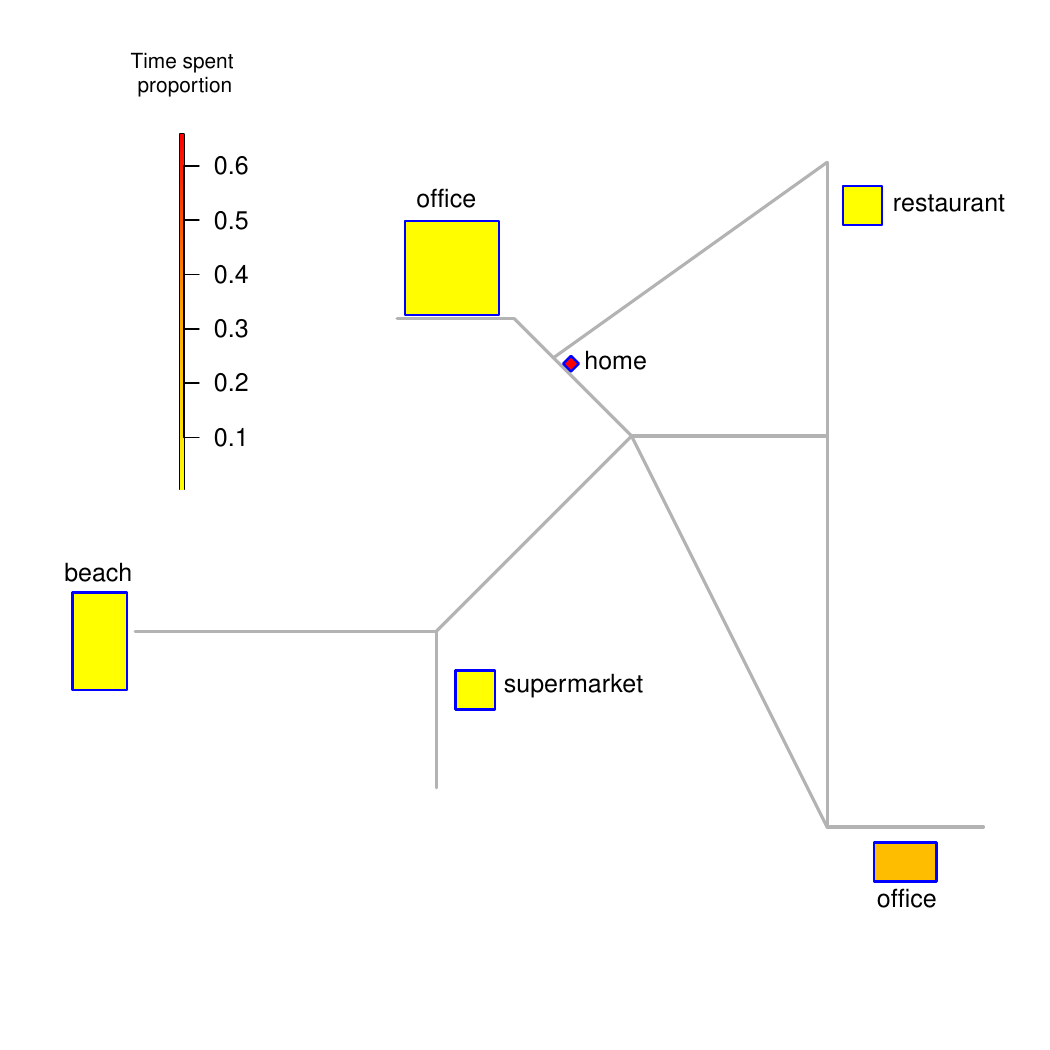}
	\includegraphics[width=0.48\linewidth]{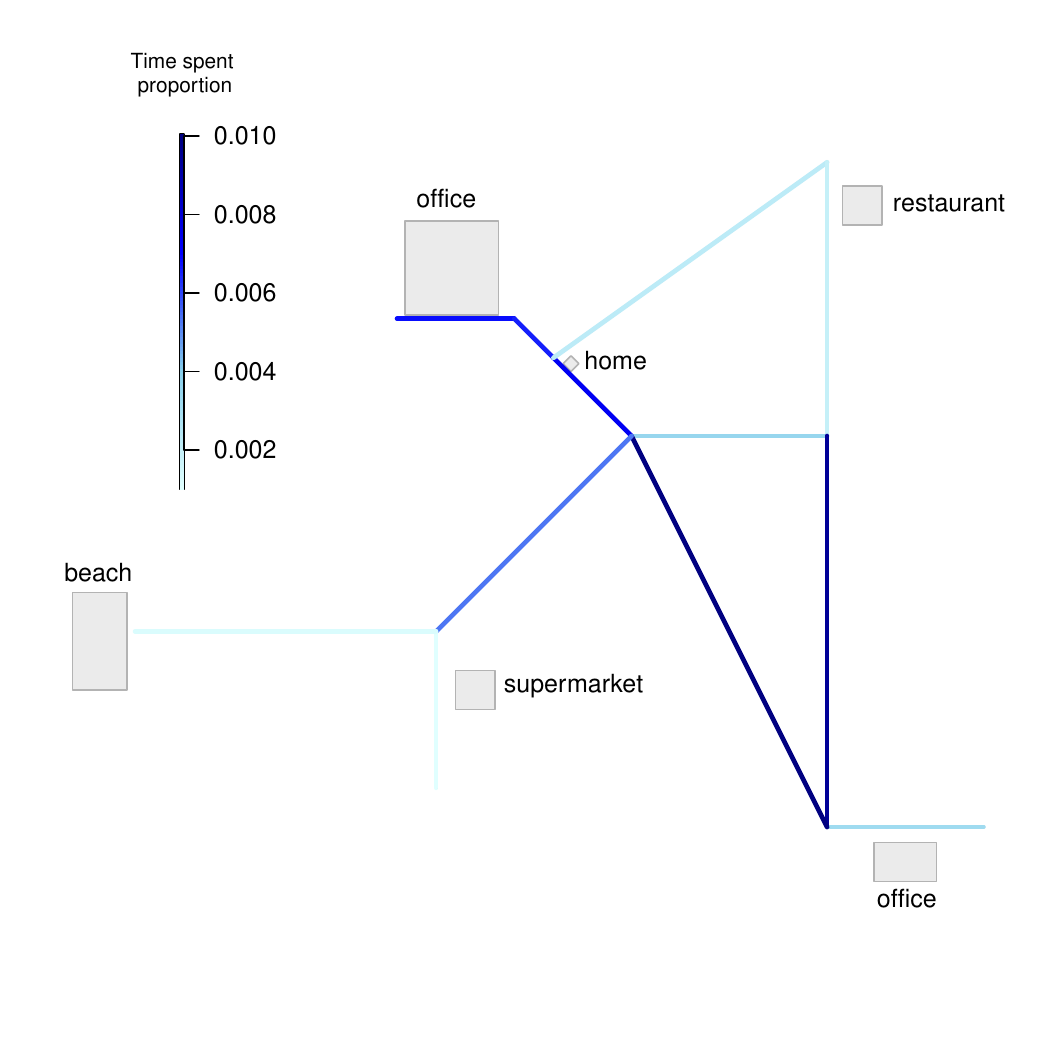}
	\includegraphics[width=0.48\linewidth]{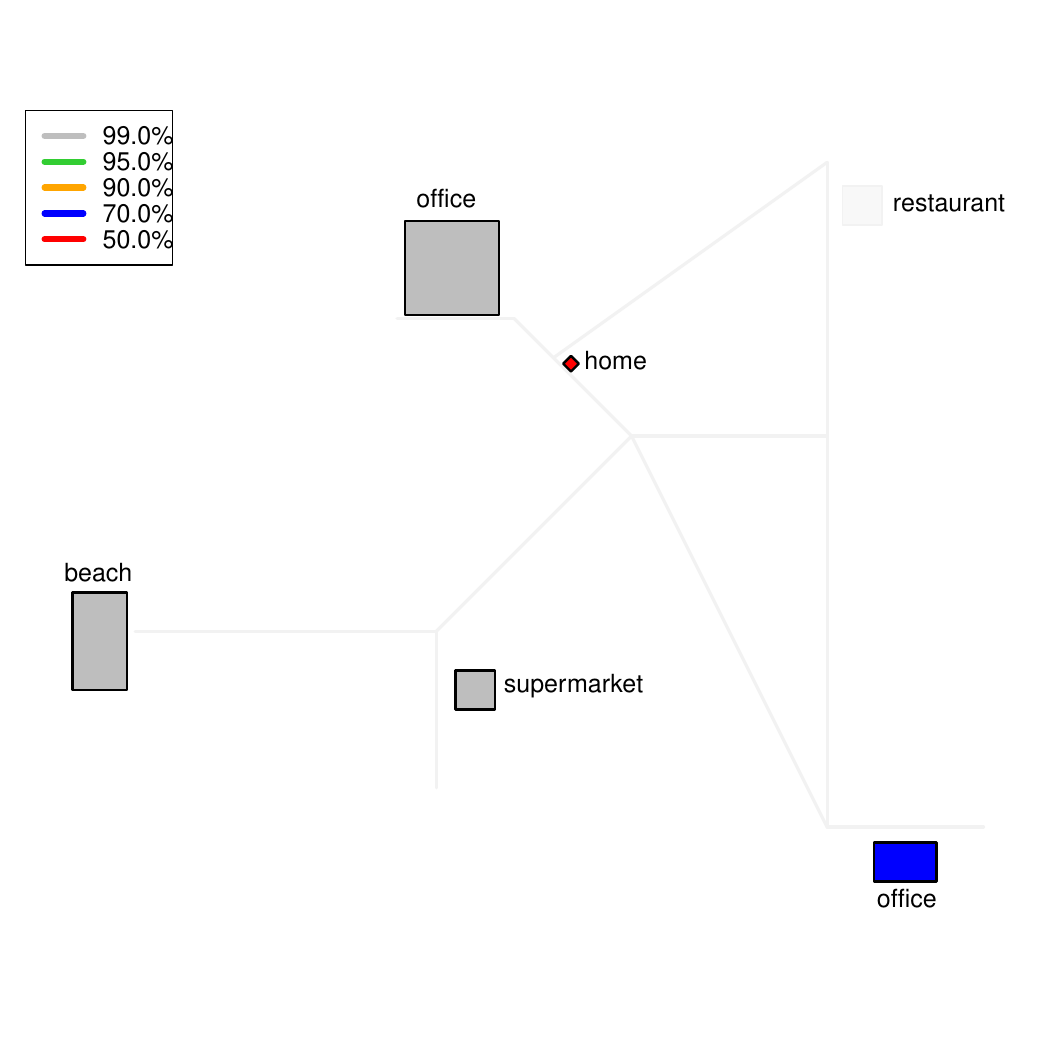}
	\includegraphics[width=0.48\linewidth]{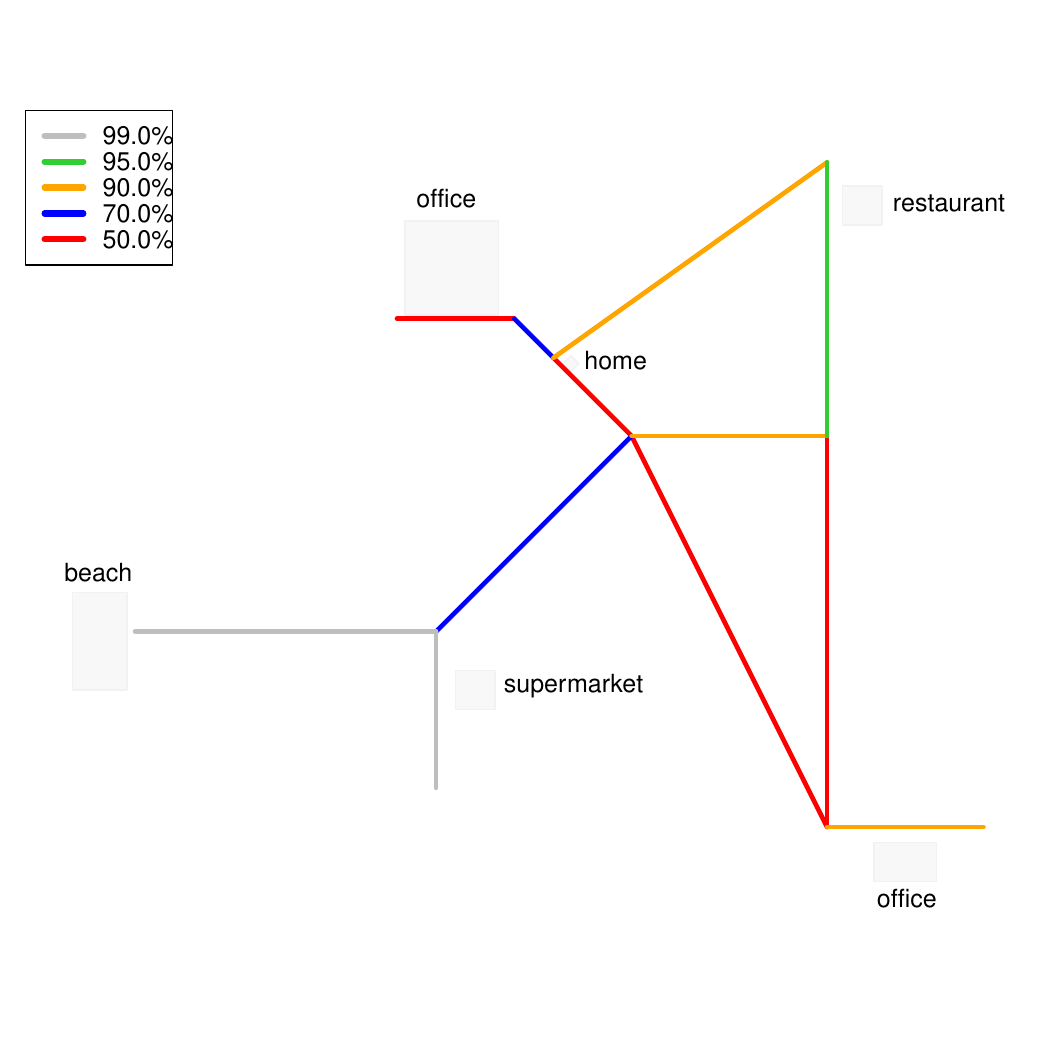}
	\caption{
		Overall time–use proportions and PN activity spaces in the simulation study. Top left panel: Estimated long-run proportion of time spent in each polygon. Top right panel: Estimated long-run proportion of time spent on each road segment. Bottom left panel: Level-$\gamma$ PN activity spaces for polygons, showing the smallest sets of polygons that together account for $\gamma \in \{0.50,0.70,0.90,0.95,0.99\}$
		of total time. Bottom right panel: Corresponding level-$\gamma$ PN activity spaces for road segments. The results highlight a core activity region dominated by home and office, with progressively larger spaces capturing occasional visits to additional polygons and rarely used travel paths.
	}
	\label{fig:2}
\end{figure}

\subsubsection{Clustering results}

Figure~\ref{fig:3} displays clustering results from the time weighted edit distance defined in Section~\ref{subsec:clustering}. The affinity matrix (left panel in Figure~\ref{fig:3}) presents the $90\times 90$ distance matrix reordered by hierarchical clustering. Blocks of low distances (shown in darker colors) appear along the diagonal, indicating sets of days with similar activity sequences. In contrast, off-diagonal regions with consistently high distances (shown in brighter colors) reveal groups of days that differ markedly. This block structure reflects the simulation design: days generated from the same activity pattern (e.g., Pattern~1 or Pattern~3) exhibit long contiguous sequences of visits to the same polygons, resulting in small edit distances. Conversely, days from distinct patterns require substantial insertions, deletions, or substitutions of high-dwell-time entities, leading to larger distances. The overall structure in the left panel of Figure~\ref{fig:3} displays five clear blocks corresponding almost exactly to the five daily activity patterns in Table~\ref{Table:pattern_detail}.

The dendrogram in the right panel of Figure~\ref{fig:3} presents a complementary view of hierarchical relationships among daily 
trajectories. Using single-linkage agglomeration, days with the same pattern merge at low heights, forming tight clusters. Merge heights between different clusters are significantly larger, indicating that altering a day's pattern requires reallocating substantial dwell time across multiple entities. This aligns with the time–weighted edit distance structure, where substitutions between long-duration home or office blocks are particularly costly. The dendrogram reveals five primary branches, corresponding to the number of activity patterns in the simulation. It shows that weekday patterns (Patterns~1 and~2) are more similar to each other than to weekend patterns (Patterns~3 and~4), as reflected in their merge heights. Pattern~5, representing a full day at home, forms a well-isolated cluster and merges at the greatest distance, consistent with its distinct temporal signature. The affinity matrix and dendrogram demonstrate that the proposed time–weighted edit distance captures structural differences in daily activity patterns, while clustering reveals the latent pattern structure in the simulation design.

\begin{figure}[]
	\centering
	\includegraphics[width=0.45\linewidth]{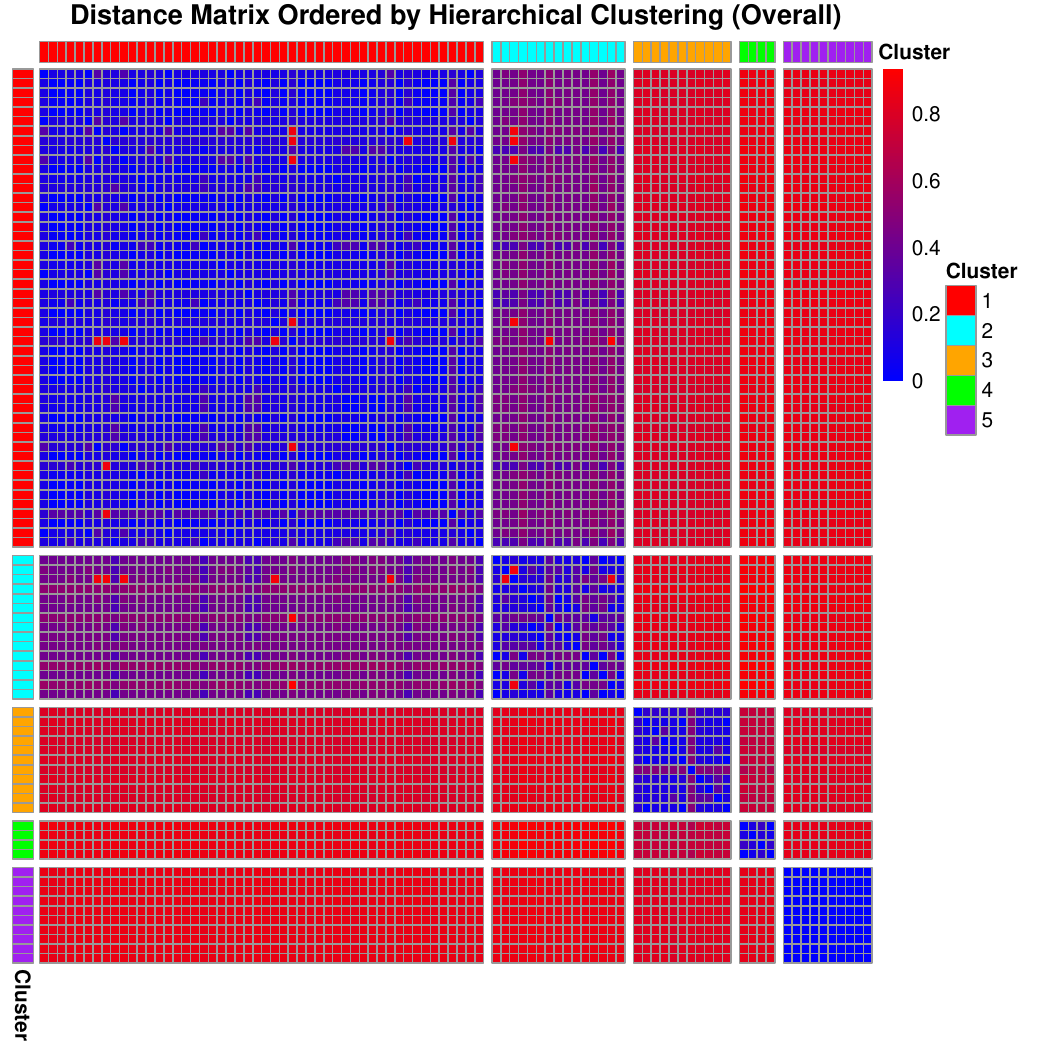}
\includegraphics[width=0.45\linewidth]{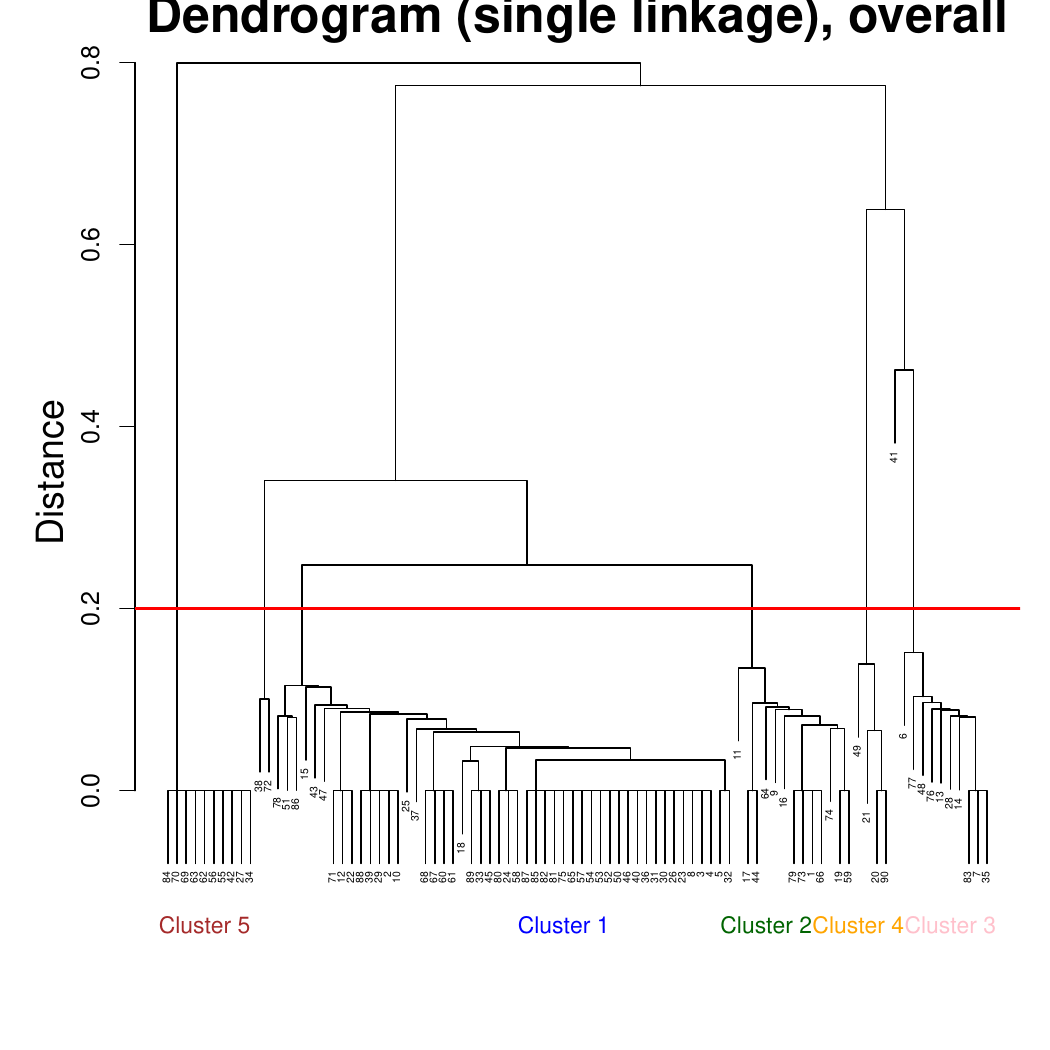}
\caption{Clustering of the 90 simulated daily trajectories based on the time--weighted edit distance. Left panel: Affinity (distance) matrix ordered by hierarchical clustering. Distinct low-distance blocks along the diagonal indicate groups of days with highly similar activity sequences, corresponding closely to the five underlying daily activity patterns in the simulation design. Right panel: Single-linkage dendrogram showing the hierarchical relationships among days. Days generated from the same activity pattern merge at small distances, while patterns with distinct temporal structures merge only at larger heights. Together, the two panels confirm that the clustering procedure successfully recovers the latent pattern structure encoded in the simulation model.}
	\label{fig:3}
\end{figure}

\subsubsection{Temporal stability of the estimated activity spaces}\label{subsec: appendix-stablity}

The left panel of Figure~\ref{fig:stability} displays daily activity-pattern sequences under two scenarios: an alternating-pattern regime and a weekend–weekday regime. These regimes differ only in the ordering of the same set of simulated daily patterns. In the alternating regime, five patterns are interwoven throughout the observation window, while in the weekend–weekday regime, weekend-like behavior (Patterns~3, 4, and 5) occurs before weekday-like patterns (Patterns~1 and 2). Since the patterns are identical in both cases, the overall activity-space distribution, obtained by aggregating all 90 days, is the same for both regimes. What differs is the trajectory by which the cumulative activity space approaches this overall activity space; it is this ordering effect that the LCT curves are designed to detect.

The middle panel of Figure~\ref{fig:stability} shows polygon-based LCT values across coverage levels $c$. The LCT curve structure reflects the ordered inclusion of major and minor polygons. The first major rise occurs when $c$ exceeds the time spent at the home polygon, after which the office polygon enters the activity space. In the alternating regime, early-day averages include contributions from both home and office patterns, allowing the cumulative distribution to quickly approach its full-period form, yielding relatively small LCT values. In contrast, the weekend–weekday regime consists almost entirely of weekend-type patterns with little or no office time in the early days. Consequently, the cumulative proportion spent at the office remains far from the full-period proportion for many days, resulting in a much larger LCT around this transition point. A second rise near $c \approx 0.95$ corresponds to the addition of low-frequency polygons (e.g., park and other discretionary sites). These polygons appear sporadically across the 90 days, requiring a long accumulation period under both regimes for stabilization. However, the alternating regime mixes rare visit days throughout the sample, converging slightly faster than the weekend–weekday regime, where rare visits are delayed until the latter part of the time series.

The right panel of Figure~\ref{fig:stability} presents the analogous LCT values for road segments. In both regimes, early stabilization of commute-related segments is visible at moderate $c$ levels. However, patterns diverge at higher levels: in the alternating regime, infrequent road segments (linked to sporadic travel to peripheral destinations) accumulate gradually over the period, while in the weekend–weekday regime, these segments are delayed until the second half of the observation window. Consequently, the weekend–weekday regime yields larger LCT values at higher $c$ levels, reflecting the delayed inclusion of rare road segments and the slower approach of the cumulative distribution to its long-run counterpart.

The three panels of Figure~\ref{fig:stability} illustrate LCT sensitivity to the temporal order of daily activity patterns. Although the two regimes share the same activity-space structure, their differing orderings create notably different stabilization trajectories. The alternating regime disperses all patterns—including rare ones—throughout the period, allowing cumulative distributions to converge rapidly and resulting in low LCT values. The weekend–weekday regime prioritizes low-frequency weekend patterns and delays high-frequency office patterns, postponing the appearance of key polygons and road segments in the cumulative distribution, leading to considerably larger LCT values. This demonstrates how LCT captures both the long-term structure of mobility and the temporal dynamics through which that structure emerges from daily observations.

\begin{figure}[htbp]
	\centering
	\includegraphics[width=0.32\linewidth]{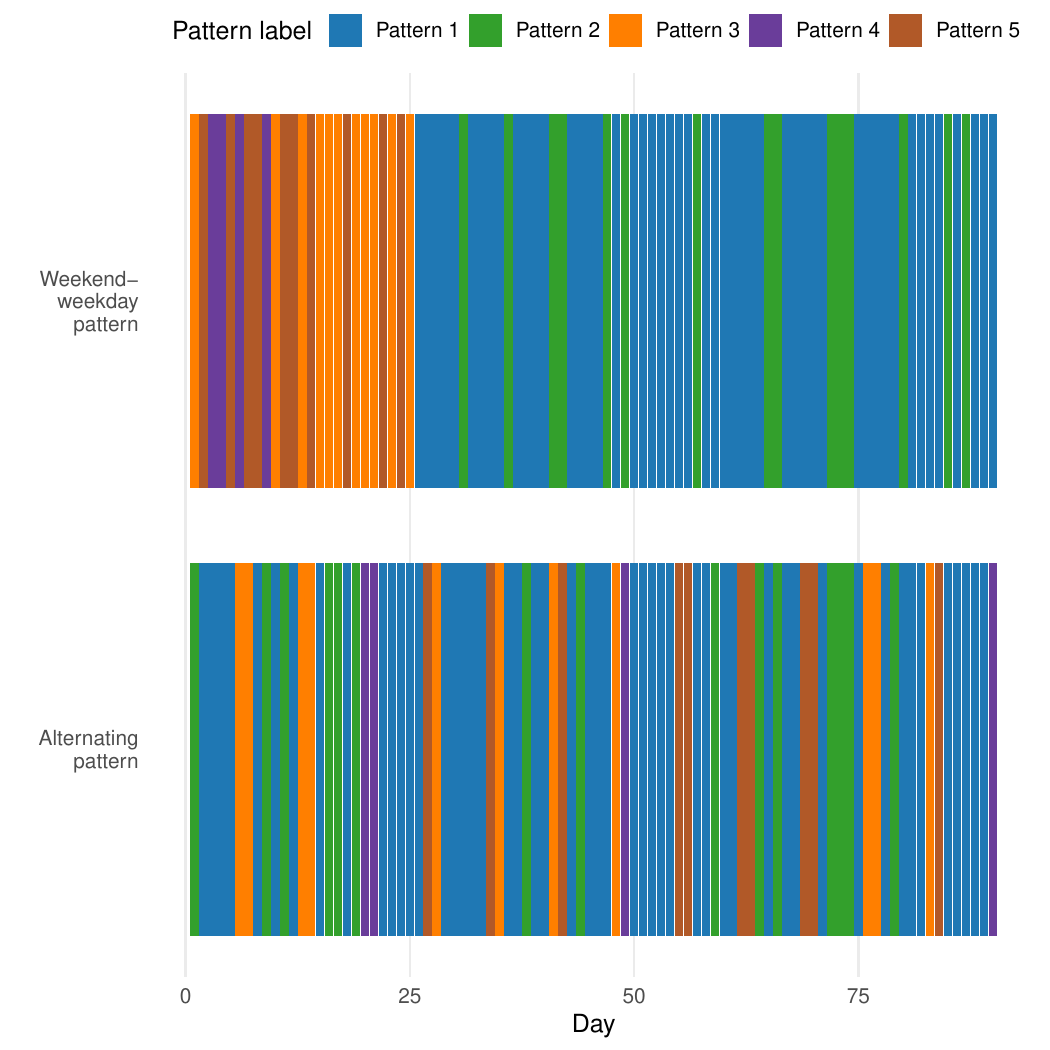}
	\includegraphics[width=0.32\linewidth]{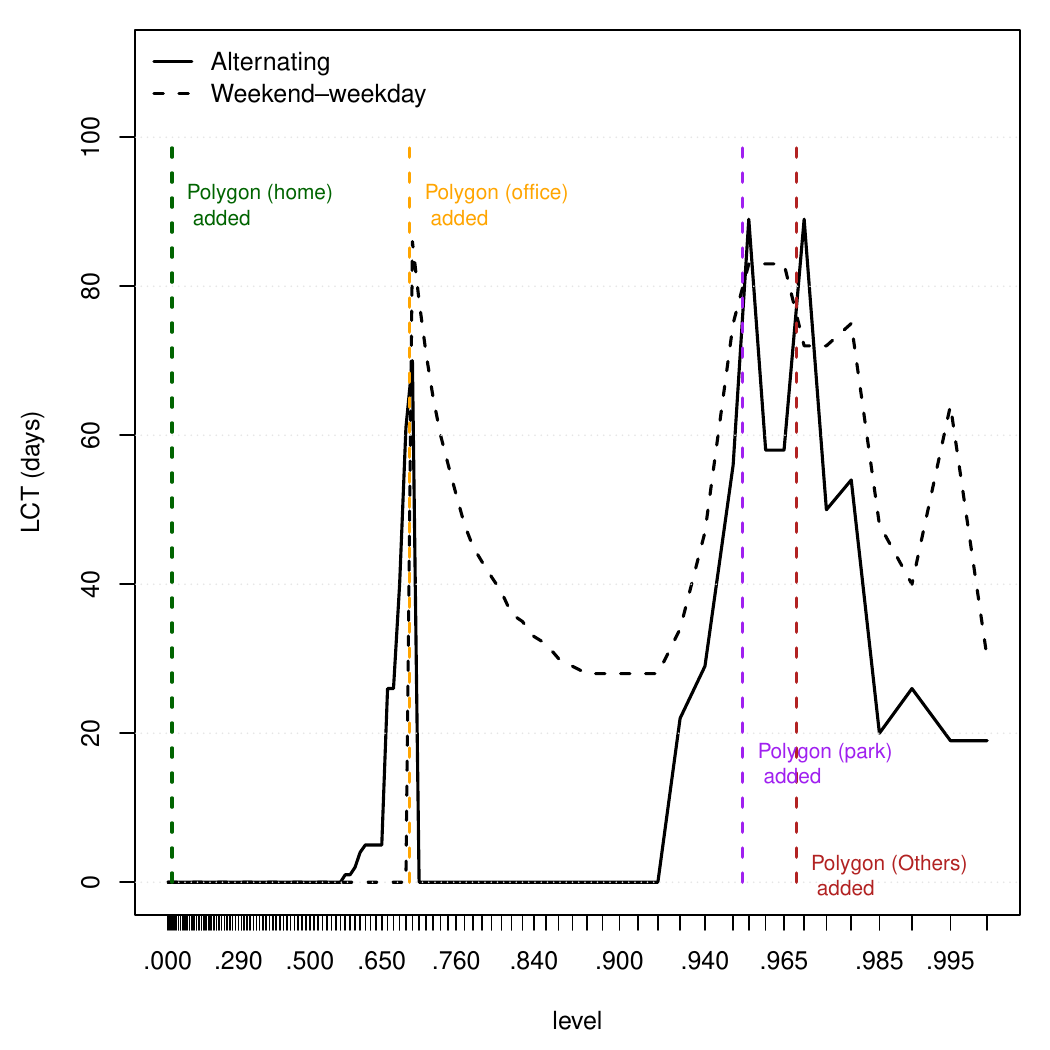}
	\includegraphics[width=0.32\linewidth]{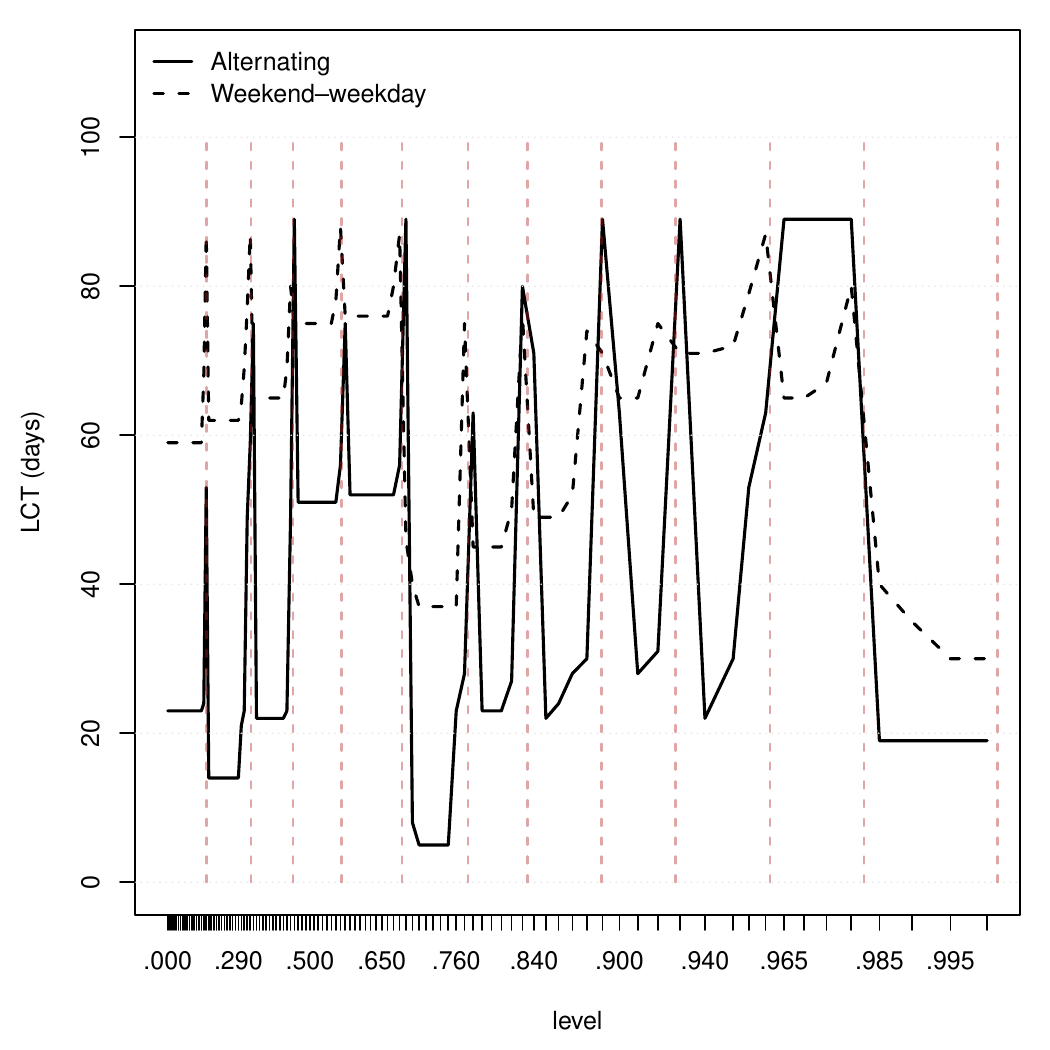}
	\caption{
		Stability analysis of the simulated mobility patterns over 90 days. Left panel: Daily activity-pattern labels, showing frequent alternation among the five patterns due to the weekday–weekend structure of the simulation design. Middle panel:  Last-crossing time (LCT) for polygon-based activity spaces across
		coverage levels $c$. The sharp increases at $c=0.90$ and $c=0.95$ correspond to the addition of the office and secondary polygons, respectively. Right panel: LCT for road-segment activity spaces, showing a smoother progression and slower stabilization at high coverage levels due to the rarity of long-distance weekend travel routes.}
	\label{fig:stability}
\end{figure}

\section{Proof of Theorem \ref{thm:1}}
\begin{theorem}{thm:1}
Under Assumption \ref{ass:visits}, for each entity $e\in\mathcal{E}$ the asymptotic absolute error \( |\widehat{T}_{e}-T_e|\) between the mean proportion of time spent by the individual in $e$ and its estimator \eqref{esti:Tk} is
	\begin{align*}
		O_p\left(\frac{1}{n}\sqrt{\sum_{i=1}^{n} \sum_{j=1}^{m_i} W_{i,j}^2 \pi_{i,j}}\right)+O\left(\frac{1}{n} \sum_{i=1}^n\sum_{j=1}^{m_i}W_{i,j} \pi_{i,j}\right)\\
        +O\left(\frac{1}{n}\sum_{i=1}^n \sum_{j:(i,j)\in  \mathcal{I}_e} t_{i,j}-t_{i,j-1} \right)+
		O_p\left(\sqrt{\frac{1}{n}}\right),
	\end{align*}
	as $\max_{(i,j)}W_{i,j}\rightarrow 0$ and $n\rightarrow \infty$.
\end{theorem}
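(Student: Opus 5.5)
The plan is a telescoping decomposition that passes through two intermediate quantities:
\[
\widetilde T_e=\frac1n\sum_{i=1}^n\sum_{j=1}^{m_i}W_{i,j}\,\mathbf 1\{E^{\mathrm{true}}_{i,j}=e\},
\qquad
\bar T_e=\frac1n\sum_{i=1}^n\int_0^1\mathbf 1\{S_i(t)\in e\}\,dt .
\]
We then write
\[
\widehat T_e-T_e=(\widehat T_e-\widetilde T_e)+(\widetilde T_e-\bar T_e)+(\bar T_e-T_e).
\]
The first difference carries the measurement-error (misclassification) part. The second is the deterministic discretization and boundary-crossing part, conditional on the trajectories. The third is the day-to-day sampling part. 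Each piece is bounded separately and the bounds are added by the triangle inequality.

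\textbf{Misclassification term.} Put $D_{i,j}=\mathbf 1\{E_{i,j}=e\}-\mathbf 1\{E^{\mathrm{true}}_{i,j}=e\}$. Then $|D_{i,j}|\le\mathbf 1\{E_{i,j}\neq E^{\mathrm{true}}_{i,j}\}$, so $\mathbb E|D_{i,j}|\le\pi_{i,j}$ and $\mathrm{Var}(D_{i,j})\le\mathbb E D_{i,j}^2\le\pi_{i,j}$. Split $\widehat T_e-\widetilde T_e$ into its conditional mean and a centered part, conditioning on $S_1,\dots,S_n$.
\begin{itemize}
\item The mean is bounded in absolute value by $\frac1n\sum_{i,j}W_{i,j}\pi_{i,j}$, which is the deterministic $O(\cdot)$ term.
\item Given the trajectories, the errors $\varepsilon_{i,j}$ are i.i.d. as assumed in Section~\ref{subsec:trajectory}, so the $D_{i,j}$ are conditionally independent. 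Hence the centered part has variance at most $\frac1{n^2}\sum_{i,j}W_{i,j}^2\pi_{i,j}$.
\item Chebyshev's inequality then gives the $O_p\bigl(\frac1n\sqrt{\sum_{i,j} W_{i,j}^2\pi_{i,j}}\bigr)$ term.
\end{itemize}

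\textbf{Discretization term.} Fix day $i$ and partition $[0,1]$ into the cells $C_{i,j}=[(t_{i,j-1}+t_{i,j})/2,(t_{i,j}+t_{i,j+1})/2)$, adopting the conventions $t_{i,0}=-t_{i,1}$ and $t_{i,m_i+1}=2-t_{i,m_i}$. The cell $C_{i,j}$ has length exactly $W_{i,j}$, and the cells tile $[0,1]$. This gives
\[
\sum_j W_{i,j}\mathbf 1\{E^{\mathrm{true}}_{i,j}=e\}-\int_0^1\mathbf 1\{S_i(t)\in e\}dt
=\sum_j\int_{C_{i,j}}\bigl(\mathbf 1\{S_i(t_{i,j})\in e\}-\mathbf 1\{S_i(t)\in e\}\bigr)dt .
\]
The integrand can be nonzero on the half-interval between $t_{i,j-1}$ and $t_{i,j}$ only if the membership status of $e$ changes somewhere on $[t_{i,j-1},t_{i,j}]$. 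Assumption~\ref{ass:visits}(1) (detectability) is used here: every visit contains an observation time, so a visit cannot begin and end strictly between two consecutive fixes. Consequently every status change inside $(t_{i,j-1},t_{i,j})$ is a single entry or exit, which is recorded as $(i,j)\in\mathcal I_e$. Each such index contributes at most $t_{i,j}-t_{i,j-1}$ in absolute value (two half-gaps). Summing over days and dividing by $n$ yields the third term. Assumption~\ref{ass:visits}(2) ensures that $\mathrm{card}(\mathcal I_e)=2V_e$ is of order $n$. Combined with $\max W_{i,j}\to0$, this makes the term vanish.

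\textbf{Day-to-day term.} The quantities $Y_i=\int_0^1\mathbf 1\{S_i(t)\in e\}dt\in[0,1]$ are i.i.d. with mean $T_e$ by \eqref{eq:meantimespententity}, since $S_i\sim P_{\mathcal S}$. Their variance is at most $1/4$, so by Chebyshev $\bar T_e-T_e=O_p(n^{-1/2})$.

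\textbf{Main obstacle.} I expect the hardest step to be the discretization term. Two things must be argued carefully. First, detectability must rule out a sub-interval containing more than one status change, or else those multiple crossings must be charged correctly. Second, the day-boundary cells need the artificial conventions above. The misclassification variance step also needs care: conditional independence holds only given the trajectories, and $\pi_{i,j}$ must be read as a conditional probability or bounded uniformly before unconditioning.
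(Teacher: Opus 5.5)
Your proposal is correct and follows the paper's proof: the three-term telescoping through $\widetilde T_e$ and $\bar T_e$, with the misclassification piece split into its conditional mean and centered part, reproduces the paper's Terms (I)--(IV) exactly, each bounded by the same variance/Chebyshev, bias, boundary-crossing, and day-to-day arguments. The differences are only in execution: your midpoint-cell comparison for the discretization term is a direct, cell-by-cell version of the paper's ``and''/``or'' sandwich bounds on $\widetilde T_e$ and $\bar T_e$, and your explicit conditioning on $S_1,\dots,S_n$, together with the bound $|D_{i,j}|\le \mathbf 1\{E_{i,j}\neq E^{\mathrm{true}}_{i,j}\}$, makes precise what the paper's Terms (I) and (II) do implicitly.
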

\begin{proof}
For any $e\in \mathcal{E}$, we bound the error of the estimated time spent proportion $|\hat T_{e}-T_{e}|$ by decomposing it into four differences. Recall $W_{i,j} = (t_{i,j+1}-t_{i,j-1})/2$ and write the Voronoi cell of $e$ by $r_e$:
\begin{align*}
	r_e = \{x\in W: d(x,e)\leq d(x,e') \text{ for any }e' \in \mathcal{E}\setminus\{e\}\},
\end{align*}
the set of points in the spatial window $\mathcal{W}$ with the closest entity $e$. In the shortest-distance method, $X_{i,j}\in r_e$ is equivalent to $X_{i,j}$ being assigned to $e$., i.e. the closest entity of $X_{i,j}$ is $e$. Then, 
\begin{align*}
	\widehat{T}_{e} =& \frac{1}{n}\sum_{i=1}^{n}\sum_{j=1}^{m_i}
	W_{i,j}\mathbf{1}\left(
	d_E(X_{i,j},e)\le d_E(X_{i,j},e')\text{ for any } e' \in \mathcal{E}\setminus\{e\}
	\right)\\
    &=\frac{1}{n}\sum_{i=1}^n\sum_{j:X_{i,j}\in r_e} W_{i,j}. \end{align*}
For the $i-$th day, we denote the actual proportion of time spent in $e$ as $T_{e,i}$. Hence, by definition of $T_e$, $\mathbb{E}(T_{e,i}) = T_e$.  Using these notations, we could further decompose the error $|\hat T_{e}-T_e|$ into the following four terms:
\begin{align}
    &|\hat T_{e}-T_e| \nonumber\\
    = & \left|\frac{1}{n}\sum_{i=1}^n\sum_{j:X_{i,j}\in r_e} W_{i,j} - T_e\right| \nonumber \\
    \leq & \underbrace{\left|\frac{1}{n}\sum_{i=1}^n\sum_{j:X_{i,j}\in r_e} W_{i,j} - \mathbb{E}\left(\frac{1}{n}\sum_{i=1}^n\sum_{j:X_{i,j}\in r_e} W_{i,j}\right)\right|}_{\text{Term (I)}} +\nonumber \\ & \underbrace{\left|\mathbb{E} \left(\frac{1}{n}\sum_{i=1}^n\sum_{j:X_{i,j}\in r_e} W_{i,j}\right)-\frac{1}{n}\sum_{i=1}^n\sum_{j:X^{\mathrm{true}}_{i,j}\in e} W_{i,j}\right|}_{\text{Term (II)}} + \nonumber\\
    &\underbrace{\left|\frac{1}{n}\sum_{i=1}^n\sum_{j:X^{\mathrm{true}}_{i,j}\in e} W_{i,j} - \frac{1}{n}\sum_{i=1}^n T_{e,i}\right|}_{\text{Term (III)}}+\underbrace{\left|\frac{1}{n}\sum_{i=1}^n T_{e,i}-T_e\right|}_{\text{Term (IV)}} \label{eq:thm1-pf-1}
\end{align}
In Eq.~\eqref{eq:thm1-pf-1}, $\frac{1}{n}\sum_{i=1}^n\sum_{j:X^{\mathrm{true}}_{i,j}\in e} W_{i,j} $ represents the estimated time spent proportion with no measurement error, meaning each timestamp $t_{i,j}$ is correctly classified to its entity. 

In the decomposition from Eq.~\eqref{eq:thm1-pf-1}, term (I) refers to the variance of $\hat T_{e}$; term (II) is the difference between the expected and actual time spent based on the true location of each observation in the entity $e$; and term (III) is the error in the estimated time proportion for each observation's true location. 

\textbf{Term (I)}: By independence among days, the variance of the first term on the right side of Eq.~\eqref{eq:thm1-pf-1} is 
\begin{align}
	\text{Var}\left(\frac{1}{n}\sum_{i=1}^{n}\sum_{j:X_{i,j}\in r_e} W_{i,j}\right)=\frac{1}{n^2}\text{Var}\left(\sum_{i=1}^{n}\sum_{j=1} ^{m_i}W_{i,j}1(X_{i,j}\in r_e)\right).\label{eq:thm1-Var-term-I}
\end{align}
Note $X_{i,j} = X^{\mathrm{true}}_{i,j}+\epsilon_{i,j}$. From the independence of $\{\epsilon_{1,1},\ldots,\epsilon_{n,m_n}\}$, we obtain
\begin{eqnarray}
\text{Var}\left(\frac{1}{n}\sum_{i=1}^{n}\sum_{j:X_{i,j}\in r_e} W_{i,j}\right) & = & \frac{1}{n^2} \text{Var}\left(\sum_{i=1}^{n}\sum_{j=1}^{m_i}W_{i,j}1(X^{\mathrm{true}}_{i,j}+\epsilon_i\in r_e)\right),\nonumber \\
& = & \frac{1}{n^2} \sum_{i=1}^{n}\sum_{j=1}^{m_i}W_{i,j}^2\text{Var}(1(X^{\mathrm{true}}_{i,j}+\epsilon_i\in r_e)).\label{eq:thm1-Var-term-I-de}    
\end{eqnarray}

 Since $\mathbb{E}(1(X_{i,j}\in r_e)) = P(X_{i,j}\in r_e)$, we have
\begin{align*}
 \text{Var}(1(X_{i,j}\in r_e))=P(X_{i,j}\in r_e)-P(X_{i,j}\in r_e)^2,
\end{align*}
where
\begin{align*}
	P(X_{i,j}\in r_e)-P(X_{i,j}\in r_e)^2= P(X_{i,j}\in r_e)(1-P(X_{i,j}\in r_e)) 
\leq \\
\min\{P(X_{i,j}\in r_e),P(X_{i,j}\notin r_e)\}.
\end{align*}
Recall that \(
\pi_{i,j}
\;=\;
\Pr\bigl\{
\text{entity assigned to }X_{i,j}
\neq
\text{entity containing }X^{\mathrm{true}}_{i,j}
\bigr\}
\). We decompose the summation based on whether the true location is in $e_k$:
\begin{align}
	&\frac{1}{n^2} \sum_{i=1}^{n}\sum_{j=1}^{m_i}W_{i,j}^2\text{Var}(1(X^{\mathrm{true}}_{i,j}+\epsilon_i\in r_e)),\nonumber\\
	\leq &\frac{1}{n^2} \sum_{i=1}^{n} \sum_{i:X^{\mathrm{true}}_{i,j}\in e}W_{i,j}^2P(X_{i,j}\notin r_e)+ \frac{1}{n^2}\sum_{i=1}^{n} \sum_{i:X^{\mathrm{true}}_{i,j}\notin e}W_{i,j}^2P(X_{i,j}\in r_e),\nonumber\\
	\leq &\frac{1}{n^2} \sum_{i=1}^{n} \sum_{j=1}^{m_i} W_{i,j}^2 \pi_{i,j}.\label{eq:thm1-term-I-final}
\end{align}
\textbf{Term (II)}: For entity $e$, we have:
\begin{align*}
    &\mathbb{E} \left(\frac{1}{n}\sum_{i=1}^n\sum_{j:X_{i,j}\in r_e} W_{i,j}\right)=\frac{1}{n} \mathbb{E} \left(\sum_{i=1}^{n} \sum_{j=1}^{m_i} 1\{X_{i,j}\in r_e\} W_{i,j}\right)\\
    & = \frac{1}{n}\sum_{i=1}^{n} \sum_{j=1}^{m_i} W_{i,j}P\left( X_{i,j}\in r_e \right).
\end{align*}
We decompose the summation based on the true location being in entity $e_k$:
    \begin{align*}
   \frac{1}{n}\sum_{i=1}^{n}  \sum_{j=1}^{m_i} W_{i,j}P\left( X_{i,j}\in r_e \right)
    = & \frac{1}{n}\sum_{i=1}^{n} \sum_{j:X^{\mathrm{true}}_{i,j}\in e}  W_{i,j} P(X_{i,j}\in r_e)\\
    & +\frac{1}{n}\sum_{i=1}^{n} \sum_{j:X^{\mathrm{true}}_{i,j}\notin e}  W_{i,j}P(X_{i,j}\in r_e).
\end{align*}
Hence
\begin{align}
    & \mathbb{E} \left(\frac{1}{n}\sum_{i=1}^n\sum_{j:X_{i,j}\in r_e} W_{i,j}\right)-\frac{1}{n}\sum_{i=1}^n\sum_{j:X^{\mathrm{true}}_{i,j}\in e} W_{i,j}\nonumber\\
=& \frac{1}{n}\sum_{i=1}^{n} \sum_{j:X^{\mathrm{true}}_{i,j}\in e}  W_{i,j} P(X_{i,j}\in r_e)+\frac{1}{n}\sum_{i=1}^{n} \sum_{j:X^{\mathrm{true}}_{i,j}\notin e}  W_{i,j}P(X_{i,j}\in e)\nonumber \\
& - \frac{1}{n}\sum_{i=1}^{n} \sum_{j:X^{\mathrm{true}}_{i,j}\in e}  W_{i,j},\nonumber\\
=& \frac{1}{n}\sum_{i=1}^{n} \sum_{j:X^{\mathrm{true}}_{i,j}\notin e}  W_{i,j}P(X_{i,j}\in r_e)-\frac{1}{n}\sum_{i=1}^{n} \sum_{j:X^{\mathrm{true}}_{i,j}\in e}  W_{i,j}P(X_{i,j}\notin r_e),\nonumber\\
=&\frac{1}{n} \sum_{i=1}^n\sum_{j=1}^{m_i}W_{i,j} \pi_{i,j}.\label{eq:term2 in  thm1}
\end{align}

\textbf{Term (III)}: We seek a bound for $\frac{1}{n}\sum_{i=1}^n\sum_{j:X^{\mathrm{true}}_{i,j}\in e}W_{i,j}$ and $\frac{1}{n}\sum_{i=1}^n T_{e,i}$. For any entity $e$, on the $i-$th day, the corresponding index set satisfies the relationship: $\{j:X^{\mathrm{true}}_{i,j},X^{\mathrm{true}}_{i,j-1}\in e_k\}\subseteq \{j:X^{\mathrm{true}}_{i,j}\in e\}$. For convenience, we write $t_{i,0}=0, t_{i,m_i+1}=1$. Based on this relationship, we determine a lower bound for $\frac{1}{n}\sum_{i=1}^n\sum_{j:X^{\mathrm{true}}_{i,j}\in e}W_{i,j}$ as follows. 
\begin{align}
    &\frac{1}{n}\sum_{i=1}^n\sum_{j:X^{\mathrm{true}}_{i,j}\in e} W_{i,j} \nonumber\\
    = & \frac{1}{n}\sum_{i=1}^n\sum_{j=1}^{m_i} 1\{X^{\mathrm{true}}_{i,j}\in e\}\left(\frac{t_{i,j+1}-t_{i,j}}{2}+\frac{t_{i,j}-t_{i,j-1}}{2}\right) + \frac{1}{n}\sum_{i=1}^n 1\{X^{\mathrm{true}}_{i,1}\in e\} \frac{t_{i,1}}{2}\nonumber\\
    &+\frac{1}{n}\sum_{i=1}^n 1\{X^{\mathrm{true}}_{i,n}\in e\} \frac{1-t_{i,n}}{2} \nonumber\\
    \geq & \frac{1}{n}\sum_{i=1}^n\sum_{j=1}^{m_i-1} 1\{X^{\mathrm{true}}_{i,j}\in e,X^{\mathrm{true}}_{i,j+1}\in e\}\frac{t_{i,j+1}-t_{i,j}}{2}\nonumber \\
    & +\frac{1}{n}\sum_{i=1}^n\sum_{j=2}^{m_i} 1\{X^{\mathrm{true}}_{i,j-1}\in e, X^{\mathrm{true}}_{i,j}\in e\}\frac{t_{i,j}-t_{i,j-1}}{2}+\nonumber\\
    & \frac{1}{n}\sum_{i=1}^n 1\{X^{\mathrm{true}}_{i,0}\in e, X^{\mathrm{true}}_{i,1}\in e\} {t_{i,1}}+\frac{1}{n}\sum_{i=1}^n 1\{X^{\mathrm{true}}_{i,m_i}\in e, X^{\mathrm{true}}_{i,m_i+1}\in e\} (1-t_{i,n}) \nonumber\\
    =& \frac{1}{n}\sum_{i=1}^n\sum_{j=1}^{m_i+1} 1\{X^{\mathrm{true}}_{i,j}\in e, X^{\mathrm{true}}_{i,j-1}\in e\}({t_{i,j}-t_{i,j-1}})\nonumber\\
     =& \frac{1}{n}\sum_{i=1}^n\sum_{\substack{j=1,...,m_i+1:\\X^{\mathrm{true}}_{i,j}\in e \text{ and}\\X^{\mathrm{true}}_{i,j+1}\in e}} ({t_{i,j}-t_{i,j-1}}).\label{eq:sum of ti, left}
\end{align}
Similarly, because $\{(i,j):X^{\mathrm{true}}_{i,j}\in e\}\subseteq \{(i,j):X^{\mathrm{true}}_{i,j} \text{ or }X^{\mathrm{true}}_{i,j-1}\in e\}$, we determine an upper bound of $\frac{1}{n}\sum_{i=1}^n\sum_{j:X^{\mathrm{true}}_{i,j}\in e} W_{i,j}$ as follows:
\begin{align}
    &\frac{1}{n}\sum_{i=1}^n\sum_{j:X^{\mathrm{true}}_{i,j}\in e} W_{i,j} \nonumber\\
    = & \frac{1}{n}\sum_{i=1}^n\sum_{j=1}^{m_i} 1\{X^{\mathrm{true}}_{i,j}\in e\}\left(\frac{t_{i,j+1}-t_{i,j}}{2}+\frac{t_{i,j}-t_{i,j-1}}{2}\right)\nonumber\\
    & + \frac{1}{n}\sum_{i=1}^n 1\{X^{\mathrm{true}}_{i,1}\in e\} \frac{t_{i,1}}{2}\nonumber\\
    & +\frac{1}{n}\sum_{i=1}^n 1\{X^{\mathrm{true}}_{i,n}\in e\} \frac{1-t_{i,n}}{2} \nonumber\\
    \leq & \frac{1}{n}\sum_{i=1}^n\sum_{j=1}^{m_i-1} 1\{X^{\mathrm{true}}_{i,j}\in e \text{ or }X^{\mathrm{true}}_{i,j+1}\in e\}\frac{t_{i,j+1}-t_{i,j}}{2}\nonumber \\
    & +\frac{1}{n}\sum_{i=1}^n\sum_{j=2}^{m_i} 1\{X^{\mathrm{true}}_{i,j-1}\in e \text{ or } X^{\mathrm{true}}_{i,j}\in e\}\frac{t_{i,j}-t_{i,j-1}}{2}\nonumber\\
    & +\frac{1}{n}\sum_{i=1}^n 1\{X^{\mathrm{true}}_{i,0}\in e \text{ or }X^{\mathrm{true}}_{i,1}\in e\} {t_{i,1}}\nonumber\\
    &+\frac{1}{n}\sum_{i=1}^n 1\{X^{\mathrm{true}}_{i,m_i}\in e \text{ or } X^{\mathrm{true}}_{i,m_i+1}\in e\} (1-t_{i,n}) \nonumber\\
    =& \frac{1}{n}\sum_{i=1}^n\sum_{j=1}^{m_i+1} 1\{X^{\mathrm{true}}_{i,j}\in e \text{ or }X^{\mathrm{true}}_{i,j-1}\in e\}({t_{i,j}-t_{i,j-1}})\nonumber\\
     =& \frac{1}{n}\sum_{i=1}^n\sum_{\substack{j=1,...,m_i+1:\\X^{\mathrm{true}}_{i,j}\in e \text{ or}\\X^{\mathrm{true}}_{i,j+1}\in e}} ({t_{i,j}-t_{i,j-1}}).\label{eq:sum of ti, right}
\end{align}
Combining Eq.~\eqref{eq:sum of ti, left} and Eq.~\eqref{eq:sum of ti, right} yields
\begin{align}
    & \frac{1}{n}\sum_{i=1}^n\sum_{\substack{j=1,...,m_i+1:\\X^{\mathrm{true}}_{i,j}\in e \text{ and}\\X^{\mathrm{true}}_{i,j+1}\in e}} ({t_{i,j}-t_{i,j-1}})\nonumber\\
    & \leq\frac{1}{n}\sum_{i=1}^n\sum_{j:X^{\mathrm{true}}_{i,j}\in e} W_{i,j}\nonumber\\
    & \leq\frac{1}{n}\sum_{i=1}^n\sum_{\substack{j=1,...,m_i+1:\\X^{\mathrm{true}}_{i,j}\in e \text{ or}\\X^{\mathrm{true}}_{i,j+1}\in e}} ({t_{i,j}-t_{i,j-1}}).\label{eq:realtime-eq1}
\end{align}
Next, we find a bound for $\frac{1}{n}\sum_{i=1}^n T_{e,i}$. We use $V_{e,i}$ to denote the number of visits to $e$ on the $i-$th day, i.e., there are $V_{e,i}$ disjoint time intervals during which the individual is in entity $e$ on day $i$. We write $T_{e,i}=\sum_{b=1}^{V_{e,i}}T_{e,i,b}$, where $T_{e,i,b}$ is the duration of the $b-$th visit. For any $b\in\{1,\ldots,V_{e,i}\}$, during the $b-$th visit to $e$ on the $i-$day, WLOG, we use $t_{i,j_1}$ and $t_{i,j_2}$ to denote the first and last recorded timestamps of this visit. For any $j\in\{j_1,\ldots,j_2\}$, the person is in entity $e$ at $t_{i,j}$. At $t_{i,j_1-1}$ and $t_{i,j_2+1}$, the person is not in entity $e$. It follows that
\begin{align*}
   \sum_{j=j_1+1}^{j_2} t_{i,j}-t_{i,j-1} =  t_{i,j_2}-t_{i,j_1} \leq T_{e,i,b}\leq t_{i,j_2+1}-t_{i,j_1-1} =  \sum_{j=j_1}^{j_2+1} t_{i,j}-t_{i,j-1}. 
\end{align*}
Since $j\in \{j_1,\ldots,j_2\} \Leftrightarrow X^{\mathrm{true}}_{i,j}\in e$, we obtain 
\begin{align}
\sum_{\substack{j=1,...,m_i+1:\\X^{\mathrm{true}}_{i,j}\in e \text{ and}\\X^{\mathrm{true}}_{i,j+1}\in e}}  {t_{i,j}-t_{i,j-1}}\leq \sum_{b=1}^{V_{e,i}} T_{e,i,b} \leq \sum_{\substack{j=1,...,m_i+1:\\X^{\mathrm{true}}_{i,j}\in e \text{ or}\\X^{\mathrm{true}}_{i,j+1}\in e}} {t_{i,j}-t_{i,j-1}}.\label{eq:realtime-eq2-Tb}
\end{align}
Note that have $T_{e,i}=\sum_{b=1}^{V_{e,i}} T_{e,i,b}$, thus
\begin{align}
	\frac{1}{n}\sum_{i=1}^n\sum_{\substack{j=1,...,m_i+1:\\X^{\mathrm{true}}_{i,j}\in e \text{ and}\\X^{\mathrm{true}}_{i,j+1}\in e}}  {t_{i,j}-t_{i,j-1}}\leq \frac{1}{n}\sum_{i=1}^nT_{e,i} \leq \frac{1}{n}\sum_{i=1}^n\sum_{\substack{j=1,...,m_i+1:\\X^{\mathrm{true}}_{i,j}\in e \text{ or}\\X^{\mathrm{true}}_{i,j+1}\in e}}  {t_{i,j}-t_{i,j-1}} .\label{eq:realtime-eq2}
\end{align}
Combining Eq.~\eqref{eq:realtime-eq1} and Eq.~\eqref{eq:realtime-eq2} leads to
\begin{align}
	&\left|\frac{1}{n}\sum_{i=1}^n\sum_{j:X^{\mathrm{true}}_{i,j}\in e} W_{i,j} - \frac{1}{n}\sum_{i=1}^n T_{e,i}\right|\leq  \frac{1}{n}\sum_{i=1}^n\sum_{\substack{j=1,...,m_i+1:\\X^{\mathrm{true}}_{i,j}\in e \text{ or}\\X^{\mathrm{true}}_{i,j+1}\in e}}  t_{i,j}-t_{i,j-1}\nonumber\\
    & -  \frac{1}{n}\sum_{i=1}^n\sum_{\substack{j=1,...,m_i+1:\\X^{\mathrm{true}}_{i,j}\in e \text{ and}\\X^{\mathrm{true}}_{i,j+1}\in e}}  t_{i,j}-t_{i,j-1}.\label{eq:thm1-term(III)-before}
\end{align}
Recall that
\begin{align}
	\mathcal{I}_{e}=&\left\{(i,j):\mathbf{1}\{X^{\mathrm{true}}_{i,j}\in e\}\neq\mathbf{1}\{X^{\mathrm{true}}_{i,j-1}\in e\}\right\},\nonumber\\
	=& \{(i,j): X^{\mathrm{true}}_{i,j}\notin e, X^{\mathrm{true}}_{i,j-1}\in e \text{ or }X^{\mathrm{true}}_{i,j}\in e, X^{\mathrm{true}}_{i,j-1}\notin e\}\nonumber.
\end{align}
Eq.~\eqref{eq:thm1-term(III)-before} can be expressed as
\begin{align}
    \left|\frac{1}{n}\sum_{i=1}^n\sum_{j:X^{\mathrm{true}}_{i,j}\in e} W_{i,j} - \frac{1}{n}\sum_{i=1}^n T_{e,i}\right| \leq  \frac{1}{n}\sum_{i=1}^n \sum_{j:(i,j)\in \mathcal{I}_{e}} t_{i,j}-t_{i,j-1}.\label{eq:term3 in thm1}
\end{align}

\textbf{Term (IV)}: Given the boundedness of $T_{e,i}$ for each $i\in\{1,...,n\}$ and the independence of $\{T_{e,i}\}_{i=1}^n$, we obtain
\begin{align*}
	 \text{Var}\left(\frac{1}{n}\sum_{i=1}^n T_{e,i}\right) = \frac{1}{n}\text{Var}(T_{e,1}) \leq \frac{1}{n}\mathbb{E}(T_{e,1}^2) = O\left(\frac{1}{n}\right).
\end{align*}
Therefore, we have
\begin{align*}
&|\hat T_{e}-T_e| \\
=& O_p\left(\frac{1}{n}\sqrt{\sum_{i=1}^{n} \sum_{j=1}^{m_i} W_{i,j}^2 \pi_{i,j}}\right)+O\left(\frac{1}{n} \sum_{i=1}^n\sum_{j=1}^{m_i}W_{i,j} \pi_{i,j}\right)\nonumber\\
& +O\left(\frac{1}{n}\sum_{i=1}^n \sum_{j:(i,j)\in  \mathcal{I}_{e}} t_{i,j}-t_{i,j-1} \right)+
 O_p\left(\sqrt{\frac{1}{n}}\right)
\end{align*}
\end{proof}

\end{document}